\documentclass{lmcs}
\pdfoutput=1

% LMCS Layouting Macros
\usepackage{lastpage}
\lmcsdoi{15}{3}{14}
\lmcsheading{}{\pageref{LastPage}}{}{}%
{Dec.~25,~2017}{Aug.~08,~2019}{}

\keywords{coalgebra, coinduction, distributive law}

\usepackage{amsmath,amssymb}
\usepackage{graphics}
\usepackage{relsize}
\usepackage{wrapfig}
\usepackage{mathpartir}
\usepackage{mathtools}

\usepackage{rotating,color}

\definecolor{citecolor}{rgb}{0.0,0.4,0.0}
\definecolor{urlcolor}{rgb}{0.0,0.0,0.4}
\definecolor{linkcolor}{rgb}{0.0,0.0,0.4}

\usepackage[all]{xy}

\theoremstyle{plain}
\newtheorem{lemma}[thm]{Lemma}
\newtheorem{proposition}[thm]{Proposition}
\newtheorem{theorem}[thm]{Theorem}
\newtheorem{corollary}[thm]{Corollary}
\theoremstyle{definition}
\newtheorem{definition}[thm]{Definition}
\newtheorem{example}[thm]{Example}
\newtheorem{remark}[thm]{Remark}

\newcommand{\Id}{\mathsf{Id}}
\newcommand{\id}{\mathsf{id}}
\newcommand{\A}{\mathcal{A}}

\newcommand{\C}{\mathcal{C}}
\newcommand{\D}{\mathcal{D}}
\newcommand{\E}{\mathcal{E}}
\newcommand{\F}{\mathcal{F}}
\newcommand{\K}{\mathcal{K}}
\newcommand{\DL}[1]{\mathsf{DL}(#1)} % chktex 36
\newcommand{\causal}[1]{\mathsf{causal}(#1)} % chktex 36
\newcommand{\pow}{\mathcal{P}}
\newcommand{\powf}{\mathcal{P}_f}
\newcommand{\pfs}{P}
\newcommand{\real}{\mathbb{R}}

\newcommand{\op}{\mathsf{op}}

\newcommand{\even}{\mathsf{even}}
\newcommand{\double}{\mathsf{double}}

\newcommand{\Ran}[2]{\mathsf{Ran}_{#1}(#2)} % chktex 36
\newcommand{\fin}[1]{\bar{#1}}
\newcommand{\cod}[1]{\mathsf{C}_{#1}}
\newcommand{\cf}[1]{\cod{\fin{B}}}
\newcommand{\kfab}{\Ran{\fin{A}}{\fin{B}}}

\newcommand{\Rel}{\mathsf{Rel}}
\newcommand{\ctx}{\mathsf{ctx}}
\newcommand{\rfl}{\mathsf{rfl}}

\newcommand{\sol}[1]{#1^\dagger}

\newcommand{\Ord}{\mathsf{Ord}}
\newcommand{\Set}{\mathsf{Set}}

\newcommand\eqdef\triangleq%
\newcommand\RR{\mathbb{R}}

\newcommand\paren[1]{\left({#1}\right)}
\newcommand\tuple\paren%

\begin{document}

\title[Companions, Codensity and Causality]{Companions, Codensity and Causality}
\titlecomment{{\lsuper*}
    Extended version of the paper
    published by Springer in Proc. FoSSaCS'17.
    The research leading to these results has received funding from
    the European Research Council (FP7/2007--2013, grant agreement
    nr.~320571; and H2020, grant agreement nr.~678157); as well as
    from the LABEX MILYON (ANR-10-LABX-0070, ANR-11-IDEX-0007).}

\author[D.~Pous]{Damien Pous\rsuper{a}}	%required
\address{\lsuper{a}Univ Lyon, CNRS, ENS de Lyon, UCB Lyon 1, LIP, France}	%required
\email{Damien.Pous@ens-lyon.fr}  %optional
%\thanks{thanks 1, optional.}	%optional

\author[J.~Rot]{Jurriaan Rot\rsuper{b}}	%optional
\address{\lsuper{b}Radboud University, Nijmegen}	%optional
\email{jrot@cs.ru.nl}  %optional
%\thanks{thanks 2, optional.}	%optional

\begin{abstract}
  In the context of abstract coinduction in complete lattices, the
  notion of compatible function makes it possible to introduce
  enhancements of the coinduction proof principle. The largest
  compatible function, called the companion, subsumes most
  enhancements and has been proved to enjoy many good properties. Here
  we move to universal coalgebra, where the corresponding notion is
  that of a \emph{final} distributive law. We show that when it exists,
  the final distributive law is a monad, and that it coincides with
  the codensity monad of the final sequence of the given functor. On
  sets, we moreover characterise this codensity monad using a new
  abstract notion of causality. In particular, we recover the fact
  that on streams, the functions definable by a distributive law or
  GSOS specification are precisely the causal functions. Going back to
  enhancements of the coinductive proof principle, we finally obtain
  that any causal function gives rise to a valid up-to-context
  technique.
\end{abstract}

\maketitle

\section{Introduction}

Coinduction has been widely studied in computer science since Milner's
work on CCS~\cite{Milner89}. In concurrency theory, it is usually
exploited to define behavioural equivalences or preorders on processes
and to obtain powerful proof principles.
% : bisimilarity is the largest
% bisimulation, so that to prove two processes equivalent it suffices
% to find a bisimulation that relates them.
Coinduction can also be used for programming languages, to define and
manipulate infinite data-structures like streams or potentially
infinite trees. For instance, streams can be defined using systems of
differential equations~\cite{Rutten05}. In particular, pointwise
addition of two streams $x,y$ can be defined by the following
equations, where $x_0$ and $x'$ respectively denote the head and the
tail of a stream $x$.
\begin{equation}
  \label{eq:plus1}
  \begin{aligned}
    {(x \oplus y)}_0 &= x_0 + y_0\\
    (x \oplus y)' &= x' \oplus y'
  \end{aligned}
\end{equation}

Coinduction as a proof principle for concurrent systems can be nicely
presented at the abstract level of complete
lattices~\cite{pous:aplas07:clut,pous:dsbook11}: bisimilarity is the
greatest fixpoint of a monotone function on the complete lattice of
binary relations. In contrast, coinduction as a tool to manipulate
infinite data-structures requires one more step to be presented
abstractly: moving to universal coalgebra~\cite{Jacobs:coalg}. For
instance, streams are the carrier of the final coalgebra of an
endofunctor on the category $\Set$ of sets and functions, and simple systems of differential equations
are just plain coalgebras. One usually
distinguishes between coinduction as a tool to prove properties, and \emph{corecursion} or
\emph{coiteration} as a tool to define functions. The theory we
develop in the present work encompasses both in a uniform way, so that
we do not emphasise their differences and just use the word
\emph{coinduction}.

In both cases one frequently needs enhancements of the coinduction
principle~\cite{San98MFCS,SW01}. Indeed, rather than working with
plain bisimulations, which can be rather large, one often uses
``bisimulations up-to'', which are not proper bisimulations but are
nevertheless contained in
bisimilarity~\cite{MilnerPW92,Abr90,AbadiG98,fournet00asynchronous,JeffreyR04,Compcert-CACM,SevcikVNJS13}.
The situation with infinite data-structures is similar. For instance, defining the
shuffle product on streams is typically done using equations of
the following shape,
\begin{equation}
  \label{eq:times1}
  \begin{aligned}
    {(x \otimes y)}_0 &= x_0 \times y_0\\
    (x \otimes y)' &= x \otimes y' ~\oplus~ x' \otimes y
  \end{aligned}
\end{equation}
which fall out of the scope of plain coinduction due to the call to
pointwise addition~\cite{Rutten05,HansenKR16}.

Enhancements of the bisimulation proof method have been introduced by
Milner from the beginning~\cite{Milner89}, and further studied by
Sangiorgi~\cite{San98MFCS,SW01} and then by the first
author~\cite{pous:aplas07:clut,pous:dsbook11}. Let us recall the
standard formulation of coinduction in complete lattices: by
Knaster-Tarski's theorem~\cite{Kna28,Tarski55}, any monotone function
$b$ on a complete lattice admits a greatest fixpoint $\nu b$ that
satisfies the following \emph{coinduction principle}:
\begin{align}
  \label{eq:coind}
  \inferrule*[right=coinduction]{x \leq y \leq b(y)}{x \leq \nu b}
\end{align}
In words, to prove that some point $x$ is below the greatest fixpoint,
it suffices to exhibit a point $y$ above $x$ which is an
\emph{invariant}, i.e., a post-fixpoint of $b$. Enhancements, or up-to
techniques, make it possible to alleviate the second requirement:
instead of working with post-fixpoints of $b$, one might use
post-fixpoints of $b\circ f$, for carefully chosen functions $f$:
\begin{align}
  \label{eq:coindupto}
  \inferrule*[right=coinduction up to $f$]{x \leq y \leq b(f(y))}{x \leq \nu b}
\end{align}
Taking inspiration from the work of Hur et al.~\cite{paco13}, the first author
recently proposed to systematically use for $f$ the largest
\emph{compatible} function~\cite{pous:lics16:cawu}, i.e., the largest
function $t$ such that $t\circ b\leq b \circ t$. Such a function
always exists and is called the \emph{companion}. It enjoys many good
properties, the most important one possibly being that it is a closure
operator: $t\circ t=t$. Parrow and Weber characterised it
extensionally in terms of the final sequence of the function
$b$~\cite{ParrowWeber16,pous:lics16:cawu}:
\begin{align}
  \label{eq:pw16}
  t:x\mapsto \bigwedge_{x\leq b_\alpha} b_\alpha &&\text{where }
  \begin{cases}
    b_\lambda \eqdef \bigwedge_{\alpha<\lambda} b_\alpha & \text{for limit ordinals}\\
    b_{\alpha+1} \eqdef b(b_\alpha) & \text{for successor ordinals}
  \end{cases}
\end{align}

In the present paper, we give a categorical account of these ideas,
generalising them from complete lattices to universal coalgebra, in
order to encompass important instances of coinduction such as solving
systems of equations on infinite data-structures.

\medskip

Let us first be more precise about our example on streams. We consider
there the $\Set$ functor $BX=\RR\times X$, whose final coalgebra is
the set $\RR^\omega$ of streams over the reals. This means that any
$B$-coalgebra $\tuple{X,f}$ defines a function from $X$ to
streams. Take for instance the following coalgebra over the
two-elements set $2=\{0,1\}$:
% \begin{align*}
%   \alpha:~&2 \to B2\\
%           &
            $0 \mapsto \tuple{0.3,\,1}$,
%\\
          % &
          $1 \mapsto \tuple{0.7,\,0}$.
% \end{align*}
This coalgebra can be seen as a system of two equations, whose unique
solution is a function from $2$ to $\RR^\omega$, i.e, two streams, where
the first has value $0.3$ at all even positions and $0.7$ at all
odd positions.

In a similar manner, one can define binary operations on streams by
considering coalgebras whose carrier consists of pairs of streams. For
instance, the previous system of equations characterising pointwise
addition~(\ref{eq:plus1}) is faithfully represented by the following
coalgebra:
\begin{equation}
  \label{eq:coalg-sum}
  \begin{aligned}
  &{(\RR^\omega)}^2 \to B({(\RR^\omega)}^2)\\
  &\tuple{x,y} \mapsto \tuple{x_0+y_0,~\tuple{x',y'}}
\end{aligned}
\end{equation}
Unfortunately, as explained above, systems of equations defining
operations like shuffle product (\ref{eq:times1}) cannot be
represented easily in this way: we would need to call pointwise
addition on streams that are not yet fully defined.

Instead, let $S$ be the functor $SX=X^2$ so that pointwise stream
addition can be seen as an $S$-algebra. Equations~(\ref{eq:times1})
can be represented by the following $BS$-coalgebra.
\begin{equation}\label{eq:coalg-shuffle}
\begin{aligned}
  &{(\RR^\omega)}^2 \to BS({(\RR^\omega)}^2)\\
  &\tuple{x,y} \mapsto \tuple{x_0\times y_0,~\tuple{\tuple{x,y'},\,\tuple{x',y}}}
\end{aligned}
\end{equation}
The inner pairs $\tuple{x,y'}$ and $\tuple{x',y}$ correspond to the
corecursive calls, and thus to the shuffle products $x\otimes y'$ and
$x'\otimes y$; in contrast, the intermediate pair
$\tuple{\tuple{x,y'},\,\tuple{x',y}}$ corresponds to a call to the
algebra on $S$, i.e., in this case, pointwise addition.

We should now explain in which sense and under which conditions such a
$BS$-coalgebra gives rise to an operation on the final $B$-coalgebra.
A preliminary step actually consists in proposing the following
principle (Definition~\ref{def:validity-coind}).

\begin{quote}\it
  Let $B$ be an endofunctor with a final $B$-coalgebra $(Z,\zeta)$,
  let $F$ be a functor, and let $\alpha\colon FZ\rightarrow Z$ be an
  $F$-algebra on the final $B$-coalgebra. \emph{Coinduction up to the
    algebra $\alpha$ is valid} if for every $BF$-coalgebra $(X,g)$,
  there exists a unique morphism $\sol g\colon X\rightarrow Z$ making
  the following diagram commute.
\begin{equation}
  \label{eq:intro:coindupto}
  \vcenter{
  \xymatrix{
  X \ar[d]_{g} \ar[rr]^{\sol{g}} % chktex 3
  & & Z \ar[d]^{\zeta} \\ % chktex 3
  BFX \ar[r]_-{BF\sol{g}} % chktex 3
  & BFZ \ar[r]_{B\alpha} & BZ } % chktex 3
  }
\end{equation}
\end{quote}
Intuitively, $\sol g$ gives the solution of the system of equations
represented by $g$: it interprets the variables $(X)$ into the
denotational space of behaviours $(Z)$; the above diagram ensures that
the equations are satisfied when using the algebra $\alpha$ to
interpret the $F$-part of the equations. For the previous
example~\eqref{eq:coalg-shuffle}, one would take $F=S$ and pointwise
addition for $\alpha$. This notion of validity improves over the
notion of soundness we proposed before~\cite{bppr:acta:16} in that it
fully specifies in which sense the equations are solved, by mentioning
explicitly the algebra which is used on the final coalgebra.

In the literature on universal coalgebra, one would typically prove
such an enhanced coinduction principle by using distributive
laws. Typically, this principle holds if $F$ is a monad and if there
exists a distributive law $\lambda \colon FB\Rightarrow BF$ of a monad
$F$ over $B$ (e.g.,~\cite{Bartels04,LenisaPW00,UustaluVP01,Jacobs06,MiliusMS13}).
The proof relies on the so-called generalised powerset construction~\cite{SilvaBBR10} and the
$F$-algebra $\alpha$ is the one canonically generated by
$\lambda$. This precisely amounts to using an up-to technique. Such a
use of distributive laws is actually rather standard in operational
semantics~\cite{TuriP97,Bartels04,Klin11}; they properly generalise
the notion of compatible function. In order to
follow~\cite{pous:lics16:cawu}, we thus focus on the largest
distributive law.

\medskip

Our first contribution consists in showing that if a functor $B$
admits a final distributive law (called the companion), then (1) this
distributive law is that of a monad $T$ over $B$, and thus (2)
coinduction up to its associated algebra is valid
(Section~\ref{sec:companion}). In complete lattices, this corresponds
to the facts that the companion is a closure operator and that it can
be used as an up-to technique.

\medskip

Then we move to conditions under which the companion exists. We start
from the \emph{final sequence} of the functor $B$, which is commonly
used to obtain the existence of a final
coalgebra~\cite{adamek1974free,barr}, and we show that the companion
actually coincides with the \emph{codensity monad} of this sequence,
provided that this codensity monad exists and is preserved by $B$
(Theorem~\ref{thm:companion-codensity}). Those conditions are
satisfied by all polynomial functors. This link with the final
sequence of the functor makes it possible to recover Parrow and
Weber's
characterisation % of the companion in the complete lattice case
(Equation~(\ref{eq:pw16})).

We can go even further for $\omega$-continuous endofunctors on $\Set$:
the codensity monad of the final sequence can be characterised in
terms of a new abstract notion of \emph{causal algebra}
(Definition~\ref{def:causal-alg}). On streams, this notion coincides
with the standard notion of causality~\cite{HansenKR16}: causal
algebras (on streams) correspond to operations such that the $n$-th
value of the result only depends on the $n$-th first values of the
arguments. For instance, pointwise addition and shuffle product are
causal algebras for the functor $SX=X^2$.

\medskip

These two characterisations of the companion in terms of the codensity
monad and in terms of causal algebras are the key theorems of the
present paper. We study some of their consequences in
Section~\ref{sec:comp-set}. These apply to all polynomial functors.

First, coinduction up to, as presented in~\eqref{eq:intro:coindupto},
is valid for every causal algebra for a functor $F$. Such a technique
makes it possible to define shuffle product~(\ref{eq:times1}) in a
streamlined way, without mentioning any distributive law. In the very same
way, with the functor $BX=2\times X^A$ for deterministic automata, we
immediately obtain the semantics of non-deterministic automata and
context-free grammars using simple causal algebras on formal languages
(Example~\ref{ex:cfg})---distributive laws are now hidden from the
end-user.

Second, we obtain that algebras on the final coalgebra are causal if
and only if they can be defined by a distributive law. Similar results
were known to hold for streams~\cite{HansenKR16} and
languages~\cite{RotBR16}. Our characterisation is more abstract and
less syntactic; the precise relationship between those results remains
to be studied.

Third, we can combine our results with some recent
work~\cite{bppr:csllics14:fibupto} where we rely on (bi)fibrations to % chktex 36
lift distributive laws on systems (e.g., automata) to obtain
up-to techniques for coinductive predicates or relations on those
systems (e.g., language equivalence). Doing
so, we obtain that every causal algebra gives rise to a valid up-to
context technique (Section~\ref{ssec:fibrations}). For instance,
bisimulation up to pointwise addition and shuffle product is a
valid technique for proving stream equalities coinductively.

\medskip

In Section~\ref{sec:gsos} we provide an expressivity result:
while abstract GSOS specifications~\cite{TuriP97} seem more expressive
than plain distributive laws, we show that this is actually not the
case: any algebra obtained from an abstract GSOS specification can
actually be defined from a plain distributive law.

Our main results on the construction of a companion through the final sequence
apply to polynomial functors. In Section~\ref{sec:pres-pow} we show a negative result:
the finite powerset functor (which is not polynomial) does not satisfy
the premises of our results, and hence falls outside its scope.
We conclude the paper with related work (Section~\ref{sec:rw}) and future work (Section~\ref{sec:fw}).

% Finally, A standard alternative in order to define operations on a final
% coalgebra consists in using \emph{abstract GSOS}
% specifications~\cite{TuriP97}, i.e., distributive laws of the form
% $\lambda \colon F(B \times \Id) \Rightarrow BF^*$, where $F^*$ is the
% free monad for $F$. Turi and Plotkin use this complicated type because
% it is apparently more expressive. (E.g., many process calculi can be
% defined naturally in this way but not using plain distributive laws of
% the form $FB\Rightarrow BF$.) Using the codensity monad, we

\paragraph{Acknowledgments.} We are grateful to Henning Basold,
Filippo Bonchi, Tom van Bussel, Bart Jacobs, Joshua Moerman, Daniela Petri{\c{s}}an,
and Jan Rutten for valuable discussions and comments.

\section{Preliminaries}\label{sec:prelim}

Let $\C$ be a category.
A \emph{coalgebra} for a functor $B \colon \C \rightarrow \C$ is a
pair $(X,f)$ where $X$ is an object in $\C$ and
$f\colon X \rightarrow BX$ a morphism.  A coalgebra homomorphism from
$(X,f)$ to $(Y,g)$ is a $\C$-morphism $h \colon X \rightarrow Y$ such
that $g \circ h = Bh \circ f$.
% The category of coalgebras for a functor $B$ is denoted by
% $\coalg{B}$.
A coalgebra $(Z,\zeta)$ is called \emph{final} if it is final in the
category of coalgebras%$\coalg{B}$
, i.e., for every coalgebra $(X,f)$ there exists a unique coalgebra
morphism from $(X,f)$ to $(Z,\zeta)$.

%% DAMIEN: unecessary, the two examples come naturally in the next subsection
% There are two running examples of coalgebras for $\Set$ functors,
% but see~\cite{Jacobs:coalg,Rutten00} for many more.  A \emph{stream
% system} is a coalgebra for the functor
% $B \colon \Set \rightarrow \Set$ given by $BX = A \times X$, where
% $A$ is a fixed set. A \emph{deterministic automaton} is a coalgebra
% for the functor $X \mapsto 2 \times X^A$.  For details on the final
% coalgebra of stream systems and deterministic automata, see
% Example~\ref{ex:finseq-streams} and
% Example~\ref{ex:finseq-automata}.

An \emph{algebra} for a functor $F \colon \C \rightarrow \C$ is defined dually to
a coalgebra, i.e., it is a pair $(X, a)$ where $a \colon FX \rightarrow X$,
and an algebra morphism from $(X,a)$ to $(Y,b)$ is a morphism $h \colon X \rightarrow Y$
such that $h \circ a = b \circ Fh$.
%The category of $L$-algebras is denoted by $\alg{L}$.
%An algebra is called \emph{initial} if it is an initial object in $\alg{L}$.

A \emph{monad} is a triple $(T, \eta, \mu)$ where
$T \colon \C \rightarrow \C$ is a functor, and
$\eta \colon \Id \Rightarrow T$ and $\mu \colon TT \Rightarrow T$ are
natural transformations called \emph{unit} and \emph{multiplication}
respectively, such that
%the following diagrams commute:
%\begin{equation}\label{eq:prel:monad}
%\begin{gathered}
%\xymatrix{
%	T \ar@{=>}[r]^-{\eta T} \ar@{=}[dr]
%		& TT \ar@{=>}[d]^-\mu
%		& T \ar@{=}[dl] \ar@{=>}[l]_-{T\eta} \\
%	  & T &
%}
%\qquad
%\xymatrix{
%	TTT \ar@{=>}[d]_-{\mu T} \ar@{=>}[r]^-{T\mu}
%		& TT \ar@{=>}[d]^-\mu\\
%	TT \ar@{=>}[r]_-\mu
%		& T
%}
%\end{gathered}
%\end{equation}
$\mu \circ T \eta = \id = \mu \circ \eta T$ and $\mu \circ \mu T = \mu \circ T \mu$.

By $\pow \colon \Set \rightarrow \Set$ we denote the (covariant) powerset functor,
and by $\powf \colon \Set \rightarrow \Set$ the finite powerset functor,
defined by $\powf(X) = \{S \subseteq X \mid S \text{ is finite}\}$.

\subsection{Final sequence} %\label{sec:fin-seq}

Let $B \colon \C \rightarrow \C$ be an endofunctor on a complete
category $\C$.  The \emph{final sequence} is the unique
ordinal-indexed sequence defined by $B_0 = 1$ (the final object of
$\C$), $B_{i+1} = BB_i$ and $B_j = \lim_{i < j} B_i$ for a limit
ordinal $j$, with
%	\begin{align*}
%   W_0 = \term \qquad
%		W_{i+1} = BW_i  \qquad
%		W_j = \lim_{i < j} W_i ~\text{(for $j$ a limit ordinal)}.
%	\end{align*}
connecting morphisms $B_{j,i} \colon B_j \rightarrow B_i$ for all
$i \leq j$, satisfying $B_{i,i} = \id$, $B_{j+1,i+1} = BB_{j,i}$ and
if $j$ is a limit ordinal then ${(B_{j,i})}_{i < j}$ is a limit cone.

The final sequence is a standard tool for constructing final
coalgebras: if there exists an ordinal $k$ such that $B_{k+1,k}$ is an
isomorphism, then $B_{k+1,k}^{-1} \colon B_k \rightarrow BB_k$ is a
final $B$-coalgebra~\cite[Theorem 1.3]{barr} (and dually for initial
algebras~\cite{adamek1974free}).  In the sequel, we shall sometimes
present it as a functor $\fin{B} \colon \Ord^\op \rightarrow \C$,
given by $\fin{B}(i) = B_i$ and $\fin{B}(j,i) = B_{j,i}$.

\begin{example}\label{ex:finseq-streams}
  Consider the functor $B \colon \Set \rightarrow \Set$ given by
  $BX = \real \times X$, whose coalgebras are stream systems over the real numbers.  Then
  $B_0 = 1$ and $B_{i+1} = \real \times B_i$ for $0 < i < \omega$. Hence,
  for $i < \omega$, $B_i$ is the set of all finite lists over $\real$ of
  length $i$. The limit $B_\omega$ consists of the set of all streams
  over $\real$.  For each $i,j$ with $i \leq j$, the connecting map
  $B_{j,i}$ maps a stream (if $j=\omega$) or a list (if $j<\omega$) to
  the prefix of length $i$.  The set $B_\omega$ of streams is a final
  $B$-coalgebra.
\end{example}

\begin{example}\label{ex:finseq-automata}
  For the $\Set$ functor $BX = 2 \times X^A$ whose coalgebras are
  deterministic automata over $A$, $B_i$ is (isomorphic to) the set of
  languages of words over $A$ with length below $i$. In particular,
  $B_\omega = \pow(A^*)$ is the set of all languages, and it is a
  final $B$-coalgebra.
\end{example}

\begin{example}\label{ex:finseq-plusone}
  For the $\Set$ functor $BX = X + 1$, $B_i$ is (isomorphic to) the
  set $\{0,1,2,\ldots,i\}$. For $i < j$, the projection $B_{j,i}$ is
  given by
  $B_{j,i}(k) = \begin{cases} k & \text{ if } k \leq i \\ i &
    \text{otherwise} \end{cases}$.
\end{example}

\begin{example}
  Let $\C$ be a complete lattice, seen as a poset category. An
  endofunctor is just a monotone function $b$; the final coalgebra is
  its greatest fixed point $\nu b$; the final sequence corresponds to
  the sequence defined in Equation~(\ref{eq:pw16}), and we recover
  Kleene's fixpoint theorem:
  \begin{align*}
    \nu b = \bigwedge_\alpha b_\alpha
  \end{align*}
\end{example}

A functor $B \colon \C \rightarrow \C$ is called
\emph{($\omega$)-continuous} if it preserves limits of
$\omega^{\op}$-chains. For such a functor, $B_\omega$ is the carrier
of a final $B$-coalgebra.  The functors of stream systems and automata
in the above examples are both $\omega$-continuous.

%\todo{DAM: at some point we wanted to announce the generalisation to A
%  B in the introduction, but I think the way you implemented this
%  generalisation in the sequel does not require it}

\section{Distributive laws and coinduction up-to}%
\label{sec:dl}

We define an abstract coinduction-up-to principle with respect to algebras on the final coalgebra,
and show its validity under the assumption of a distributive law.
\begin{definition}\label{def:validity-coind}
  Let $B,F \colon \C \rightarrow \C$ be functors such that $B$ has a
  final coalgebra $(Z,\zeta)$, and let
  $\alpha \colon FZ \rightarrow Z$ be an algebra on it.  We say that
  \emph{coinduction up to $\alpha$ is valid} if for every coalgebra
  $g \colon X \rightarrow BFX$ there exists a unique arrow
  $\sol{g} \colon X \rightarrow Z$ such that the following diagram
  commutes.
  \[
  \xymatrix{
    X \ar[d]_{g} \ar[rr]^{\sol{g}} % chktex 3
    & & Z \ar[d]^{\zeta} \\ % chktex 3
    BFX \ar[r]_-{BF\sol{g}} % chktex 3
    & BFZ \ar[r]_{B\alpha} % chktex 3
    & BZ
  }
  \]
\end{definition}
%A distributive law $\lambda$ of a monad over a functor allows one to
%strengthen the coinduction principle obtained by finality, as observed
%in~\cite{Bartels04} (specifically its Corollary 4.3.6), where it is
%called \emph{$\lambda$-coiteration}.  This principle allows one to
%solve (co)recursive equations, see, e.g., loc.\ cit.\ and
%\cite{Jacobs06,MiliusMS13}.

For $b,f \colon L \rightarrow L$ monotone functions on a complete
lattice, the existence of an algebra $\alpha$ as in
Definition~\ref{def:validity-coind} means that $f$ preserves the
greatest fixed point: $f(\nu b) \leq \nu b$.  Validity in the sense of
Definition~\ref{eq:coindupto} amounts to the validity of coinduction
up to $f$ (Equation~\ref{eq:coindupto}, also called soundness) with
the implicit requirement that $f(\nu b) \leq \nu b$. The latter is not
required by soundness and does not follow from
it~\cite[page~53]{pous:phd}.

As mentioned in the Introduction, this definition also differs from
the notion of soundness we proposed in~\cite[Section~3]{bppr:acta:16}:
there we were focusing on coinduction up to a functor $F$ rather than
up to an algebra for a functor, and we were asking for the existence
of an appropriate functor $G$ from the category of $BF$-coalgebras to
that of $B$-coalgebras. The problem with soundness there is that the
sense in which the system of equations is solved is specified only
through the functor $G$, and thus through the proof of soundness. In
contrast, the present definition of validity is more fine-grained and
fully characterises the way solutions behave.

\begin{example}\label{ex:shuffle-coin-up-to}
  Recall, from Equation~\ref{eq:coalg-shuffle}, how we presented the
  definition of the shuffle product
  $\otimes \colon {(\real^\omega)}^2 \rightarrow \real^\omega$ as a
  $BS$-coalgebra, for $SX = X^2$ and $BX = \real \times X$.  Let
  $\oplus \colon {(\real^\omega)}^2 \rightarrow \real^\omega$ be
  pointwise addition of streams.  Coinduction up to $\oplus$ is valid,
  and the shuffle product arises as the unique arrow from the above
  coalgebra to the final $B$-coalgebra.  The validity of coinduction
  up to $\oplus$ can be derived through the use of distributive laws,
  explained below.
\end{example}

In the context of universal coalgebra, distributive laws are a useful
concept to model interaction between algebra and coalgebra, with
operational semantics as a prominent
example~\cite{TuriP97,Klin11,Bartels04}.  We recall a few basic
notions, and the application to coinduction up-to.

A \emph{distributive law} of a functor $F \colon \C \rightarrow \C$
over a functor $B \colon \C \rightarrow \C$ is a natural
transformation $\lambda \colon FB \Rightarrow BF$.  If $B$ has a final
coalgebra $(Z,\zeta)$, then such a $\lambda$ induces a unique algebra
$\alpha$ making the following commute.
\begin{equation}\label{eq:induced-alg}
\vcenter{
\xymatrix{
	FZ \ar[d]_{\alpha} \ar[r]^-{F\zeta} % chktex 3
		& FBZ \ar[r]^{\lambda_Z} % chktex 3
		&  BFZ \ar[d]^{B\alpha} \\ % chktex 3
	Z \ar[rr]_{\zeta} % chktex 3
		& & BZ
}
}
\end{equation}
We call $\alpha$ the \emph{algebra induced by} $\lambda$ (on the final
coalgebra).
%\todo{talk about final bialgebras? DAM: would say no}

Let $(T,\eta,\mu)$ be a monad. A distributive law of $(T,\eta,\mu)$
over $B$ is a natural transformation $\lambda
\colon TB \Rightarrow BT$ such that $B\eta = \lambda \circ
\eta B$ and $\lambda \circ \mu B = B\mu \circ \lambda T \circ
T\lambda$.

%
%the following diagrams commute:
%$$
%\xymatrix{
% B \ar@{=>}[r]^-{\eta B} \ar@{=>}[dr]_-{B\eta}
% & TB \ar@{=>}[d]^-{\lambda}
% \\
% & BT
%}
%\qquad
%\xymatrix{
% TTB \ar@{=>}[d]_-{\mu B} \ar@{=>}[r]^-{T\lambda}
% & TBT \ar@{=>}[r]^-{\lambda T}
% & BTT \ar@{=>}[d]^-{B\mu}
% \\
% TB \ar@{=>}[rr]_-{\lambda} && BT
% }
%$$
%Bartels~\cite[Theorem 3.8 and Theorem 5.6]{Bartels03}
Bartels~\cite[Theorem 4.2.2, Corollary 4.3.6]{Bartels04} (also see~\cite{Bartels03})
makes the distributive law $\lambda$ explicit in his notion of
$\lambda$-coiteration, and proves (in our terminology) that
coinduction up to the algebra induced by a distributive law is valid,
under certain conditions:
%\footnote{Theorem 5.6 of~\cite{Bartels03} mentions distributive laws
%of a monad over a copointed functor. The case of monad over functor,
%which we consider in Theorem~\ref{thm:valid-bartels}, is explicitly stated and proved, e.g., in~\cite[Corollary 4.3.6]{Bartels04}.}

\begin{theorem}\label{thm:valid-bartels}
  Let $\lambda \colon TB \Rightarrow BT$ be a distributive law between
  functors $B,T \colon \C \rightarrow \C$, let $(Z,\zeta)$ be a final
  $B$-coalgebra and let $\alpha \colon TZ \rightarrow Z$ be the
  algebra induced by $\lambda$. If either
\begin{itemize}
\item $\C$ has countable coproducts, or
\item $T$ is a monad and $\lambda$ a distributive law of that monad
  over $B$,
\end{itemize}
then coinduction up to $\alpha$ is valid.
% For every morphism $f \colon X \rightarrow BTX$ there exists a unique
%morphism $\sol{f} \colon X \rightarrow Z$ such that the following diagram commutes:
%$$
%\xymatrix{
%	X \ar[d]_{f} \ar[rr]^{\sol{f}}
%		& & Z \ar[d]^{\zeta} \\
%	BTX \ar[r]_{BTf}
%		& BTZ \ar[r]_{B\alpha}
%		& BZ
%}
%$$
%where $\alpha$ is the algebra induced by $\lambda$.
\end{theorem}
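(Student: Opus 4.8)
\emph{Overall approach.} The plan is to handle both hypotheses by one scheme, the same that underlies the generalised powerset construction: extend the coalgebra $g\colon X\to BTX$ to a $B$-coalgebra on an auxiliary object carrying a ``free $T$-structure'', use finality of $(Z,\zeta)$ to obtain a morphism out of that object, and restrict it along a canonical arrow back to $X$; the point is that the extra structure forces the restriction to interact correctly with the algebra $\alpha$ of~\eqref{eq:induced-alg}. In the monad case the auxiliary object is the free algebra $TX$, so the argument is an instance of the $\lambda$-bialgebra pattern; in the countable-coproducts case, where $T$ is only a functor, it is $W\eqdef\coprod_{n<\omega}T^nX$, which exists by hypothesis and stands in for the (possibly nonexistent) free monad on $X$.

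\emph{Monad case.} I would set $\hat g\eqdef B\mu_X\circ\lambda_{TX}\circ Tg\colon TX\to BTX$ and let $h\colon TX\to Z$ be the unique $B$-coalgebra morphism $(TX,\hat g)\to(Z,\zeta)$. Two facts drive the argument. First, $\alpha$ is an Eilenberg--Moore algebra for $T$: both the unit law $\alpha\circ\eta_Z=\id_Z$ and associativity $\alpha\circ T\alpha=\alpha\circ\mu_Z$ follow from the two axioms of a monad distributive law, by recognising the relevant maps as $B$-coalgebra endomorphisms of $(Z,\zeta)$ and invoking finality. Second, $h$ is a $T$-algebra morphism $(TX,\mu_X)\to(Z,\alpha)$: one checks that $h\circ\mu_X$ and $\alpha\circ Th$ are both $B$-coalgebra morphisms out of the coalgebra $\lambda_{TX}\circ T\hat g$ on $TTX$, using naturality of $\lambda$ and $\mu$, the multiplication axiom of $\lambda$, and monad associativity, and concludes by finality. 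One also notes $\hat g\circ\eta_X=g$ (unit axiom of $\lambda$ plus $\mu\circ\eta T=\id$). Then $\sol g\eqdef h\circ\eta_X$ solves the defining square, which reduces to $\alpha\circ T\sol g=h$, itself immediate from $\alpha\circ T\sol g=\alpha\circ Th\circ T\eta_X=h\circ\mu_X\circ T\eta_X=h$. For uniqueness, given any solution $k$, the map $\alpha\circ Tk$ is a $B$-coalgebra morphism $(TX,\hat g)\to(Z,\zeta)$ (using both Eilenberg--Moore laws and naturality of $\lambda$), hence equals $h$ by finality, hence $k=h\circ\eta_X=\sol g$.

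\emph{Countable-coproducts case.} Lacking a unit and a multiplication, I would build the structure by hand. Define $\gamma_n\colon T^nX\to BT^{n+1}X$ by $\gamma_0\eqdef g$ and $\gamma_{n+1}\eqdef\lambda_{T^{n+1}X}\circ T\gamma_n$ (only naturality of $\lambda$ is used here, never the monad axioms), and let $d\colon W\to BW$ be determined by $d\circ\iota_n=B\iota_{n+1}\circ\gamma_n$, where $\iota_n\colon T^nX\to W$ are the coprojections. Let $h\colon W\to Z$ be the unique morphism and $h_n\eqdef h\circ\iota_n$, so $\zeta\circ h_n=Bh_{n+1}\circ\gamma_n$. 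The heart is the graded surrogate of ``$h$ is a $T$-algebra morphism'', namely $h_{n+1}=\alpha\circ Th_n$ for all $n$. I would obtain it thus: $\coprod_{n\geq1}T^nX$ is a subcoalgebra of $(W,d)$ (its coalgebra structure being the same construction applied to the shifted input $\lambda_{TX}\circ Tg$), so the restriction of $h$ is its unique morphism to $Z$; copairing the maps $T\iota_n\colon T^{n+1}X\to TW$ gives $\theta\colon\coprod_{n\geq1}T^nX\to TW$, and a componentwise check — using only~\eqref{eq:induced-alg} and naturality of $\lambda$ — shows $\alpha\circ Th\circ\theta$ is also a $B$-coalgebra morphism from that subcoalgebra to $(Z,\zeta)$; by finality the two coincide, which componentwise is precisely $h_{n+1}=\alpha\circ Th_n$. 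Hence $\sol g\eqdef h_0$ works, since $\zeta\circ h_0=Bh_1\circ g=B\alpha\circ BTh_0\circ g$. Uniqueness is dual: from a solution $k$ form the cocone $[k_n]_n$ with $k_0\eqdef k$ and $k_{n+1}\eqdef\alpha\circ Tk_n$; the hypothesis on $k$ is the base case of a verification (again from~\eqref{eq:induced-alg} and naturality) that $[k_n]_n$ is a $B$-coalgebra morphism $(W,d)\to(Z,\zeta)$, so $[k_n]_n=h$ and $k=k_0=h_0=\sol g$.

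\emph{Main obstacle.} In both cases the one genuine difficulty is the compatibility of the coiteration morphism into $Z$ with the algebra $\alpha$ — ``$h$ is a $T$-algebra morphism'' in the monad case, ``$h_{n+1}=\alpha\circ Th_n$'' in the other — everything else being diagram chasing. In the monad case this is exactly where the two monad-distributive-law axioms are consumed, via the fact that they make $\alpha$ an Eilenberg--Moore algebra; in the countable-coproducts case it is carried by the explicit indexed object $\coprod_{n<\omega}T^nX$ and the auxiliary map $\theta$, which is the single place where the hypothesis of countable coproducts is actually used.
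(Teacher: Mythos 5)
Your proposal is correct; the paper itself gives no proof of this theorem but defers to Bartels, and your argument is essentially the standard $\lambda$-coiteration proof from that source: the $\lambda$-bialgebra/generalised-powerset construction on $TX$ in the monad case, and the auxiliary coalgebra on $\coprod_{n<\omega}T^nX$ with the shift argument in the countable-coproducts case. Both halves, including the uniqueness arguments via finality, check out.
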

%\begin{proof}
%  Both implications are instances of the $\lambda$-coiteration scheme
%  of~\cite{Bartels04}, see its Theorem 4.2.2 and Corollary 4.3.6.
%\end{proof}
\begin{example}
  Let $S,B \colon \Set \rightarrow \Set$ be as in the introduction (and Example~\ref{ex:shuffle-coin-up-to}),
  and define $\lambda \colon SB \Rightarrow BS$ by
  \[
      \lambda_X((o_1,t_1),(o_2,t_2)) = (o_1 + o_2,(t_1,t_2)) \,.
  \]
  The algebra induced by $\lambda$ on the final $B$-coalgebra
  $\real^\omega$ is given by pointwise addition
  $\oplus$, as in Equation~\ref{eq:plus1}. Theorem~\ref{thm:valid-bartels} asserts (since $\Set$ has
  countable coproducts) the validity of coinduction up to $\oplus$.
\end{example}

\subsection{Generalisation to morphisms of endofunctors}%
\label{sec:gen-dl}

Below we sometimes consider natural transformations of the more
general form $\lambda \colon FA \Rightarrow BF$, for endofunctors
$A \colon \A \rightarrow \A$, $B \colon \C \rightarrow \C$ and a
functor $F \colon \A \rightarrow \C$.  These form an instance of
\emph{morphisms of endofunctors}: 1-cells in a certain category of
endofunctors in a 2-category~\cite{LenisaPW00}. They appear
in~\cite{CancilaHL03} under the name of \emph{generalized distributive
  laws}.  If $A$ has a final coalgebra $(Z_A, \zeta_A)$, and $B$ a
final coalgebra $(Z_B,\zeta_B)$, then such a $\lambda$ induces a
unique map $\alpha \colon FZ_A \rightarrow Z_B$ such that
\begin{equation}\label{eq:unique-map}
\vcenter{
\xymatrix{
	FZ_A \ar[d]_{\alpha} \ar[r]^-{F\zeta_A} % chktex 3
		& FAZ_A \ar[r]^{\lambda_{Z_A}} % chktex 3
		&  BFZ_A \ar[d]^{B\alpha} \\ % chktex 3
	Z_B \ar[rr]_{\zeta_B} % chktex 3
		& & BZ_B
}
}
\end{equation}
This is, of course, a generalisation of algebras induced by
distributive laws.

\begin{example}\label{ex:even-gen-dl}
  Define $F,A,B \colon \Set \rightarrow \Set$ by $FX = X$,
  $AX = \real \times \real \times X$ and $BX = \real \times X$.  The
  set $\real^\omega$ of streams over $\real$ is a final coalgebra for
  $A$.

  Consider $\lambda \colon FA \Rightarrow BF$ given by
  \[
  \lambda_X(o_1, o_2, t) = (o_1, t) \,.
  \]
  The unique map in~\eqref{eq:unique-map} is given by
  $\even \colon \real^\omega \rightarrow \real^\omega$,
  $\even(\sigma) = (\sigma(0),\sigma(2),\sigma(4),\ldots)$.

  It is straightforward to generalise
  Definition~\ref{def:validity-coind} to this setting, involving
  unique solutions of coalgebras $f \colon X \rightarrow
  AFX$. However, Theorem~\ref{thm:valid-bartels} does not generalise
  accordingly.  For instance, for $F,A,B$ as above,
  $f \colon 1 \rightarrow FA1$ given by $f(\ast) = (0,1,\ast)$, the
  validity of the coinduction up to $\even$ principle would amount to
  a unique solution of
  \[
  x_0 = 0 \quad x_1 = 1 \quad x'' = \even(x)\,,
  \]
  which, however, has multiple solutions.
\end{example}

\section{Properties of the companion}%
\label{sec:companion}

We define the notion of companion, and prove several of its abstract
properties.

\begin{definition}
  Let $B \colon \C \rightarrow \C$ be a functor.  The category
  $\DL{B}$ of distributive laws is defined as follows.  An object is a
  pair $(F, \lambda)$ where $F \colon \C \rightarrow \C$ is a functor
  and $\lambda \colon FB \Rightarrow BF$ is a natural transformation.
  A morphism from $(F, \lambda)$ to $(G,\rho)$ is a natural
  transformation $\kappa \colon F \Rightarrow G$ such that
  $\rho\circ \kappa B = B\kappa\circ\lambda$.
  \[
  \xymatrix{
    FB \ar@{=>}[d]_{\lambda} \ar@{=>}[r]^{\kappa B} % chktex 3
    & GB \ar@{=>}[d]^{\rho} \\ % chktex 3
    BF \ar@{=>}[r]_{B \kappa} % chktex 3
    & BG
  }
  \]
  The \emph{companion} of $B$ is the final object of $\DL{B}$, if it
  exists.
\end{definition}

%In the remainder of this section we restrict our attention to the case $A=B$.
%We write $\DL{B}$ instead of $\DL{A,B}$ and refer to its final object
%as the companion of $B$.
Morphisms in $\DL{B}$ are a special case of \emph{morphisms of
  distributive laws},
see~\cite{PowerW02,Watanabe02,LenisaPW00,KlinN15}.
% \marginpar{possibly we should use more distinct notation for the companion than $T$}
% damien: bof
% Throughout this section we assume an endofunctor $B \colon \C \rightarrow \C$ whose
% companion exists, denoted by $(T,\tau)$.
% We first prove that $(T,\tau)$ is a monad.
We first show that when it exists, the companion is a monad.

\begin{theorem}\label{thm:monad-on-final}
  Let $(T,\tau)$ be the companion of an endofunctor $B$.  There are unique
  $\eta \colon \Id \Rightarrow T$ and $\mu \colon TT \Rightarrow T$
  such that $(T,\eta,\mu)$ is a monad and
  $\tau \colon TB \Rightarrow BT$ is a distributive law of this monad
  over $B$.
\end{theorem}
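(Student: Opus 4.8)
The plan is to exploit the finality of $(T,\tau)$ in $\DL{B}$: every piece of structure we need will be produced as the unique mediating morphism into $(T,\tau)$ from a suitable object of $\DL{B}$ built out of $T$ itself, and every equation we need will follow from uniqueness of such mediating morphisms. First I would observe that $(\Id,\id_B)$ is an object of $\DL{B}$ (the identity natural transformation $\Id B \Rightarrow B\Id$ trivially satisfies the naturality square), so finality gives a unique morphism $\eta\colon \Id \Rightarrow T$ in $\DL{B}$; unfolding the morphism condition, this says precisely $\tau \circ \eta B = B\eta$, which is one of the two distributive-law-of-a-monad axioms. Next I would check that if $(F,\lambda)$ and $(G,\rho)$ are objects of $\DL{B}$ then so is $(FG,\, B\lambda_{G} \circ \ldots)$ — more precisely the composite distributive law $FGB \xRightarrow{F\rho} FBG \xRightarrow{\lambda G} BFG$ — and that this composition is functorial and compatible with morphisms in $\DL{B}$. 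Applying this to $(T,\tau)$ composed with itself yields an object $(TT, \tau\ast\tau)$ of $\DL{B}$, and finality produces a unique morphism $\mu\colon TT \Rightarrow T$ in $\DL{B}$; the morphism condition unfolds to $\tau \circ \mu B = B\mu \circ \tau T \circ T\tau$, the second distributive-law-of-a-monad axiom.

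It then remains to verify the monad laws $\mu \circ T\eta = \id = \mu \circ \eta T$ and $\mu \circ \mu T = \mu \circ T\mu$. The strategy here is the standard one: each side of each equation is a morphism in $\DL{B}$ between the appropriate objects ($T \Rightarrow T$ for the unit laws, $TTT \Rightarrow T$ for associativity), built by composing $\eta$, $\mu$, and whiskerings thereof, all of which are morphisms in $\DL{B}$ by the two constructions above; since $(T,\tau)$ is final, there is at most one morphism of $\DL{B}$ of each given source, so both sides must coincide with the identity morphism $\id_T$ (for the unit laws) resp. with each other (for associativity). Concretely, for the unit laws one checks that $\mu \circ T\eta$ and $\mu \circ \eta T$ are both endomorphisms of the object $(T,\tau)$ in $\DL{B}$, hence equal to $\id_T$ by finality; for associativity both composites are morphisms $(TTT,\tau\ast\tau\ast\tau) \Rightarrow (T,\tau)$, hence equal. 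This requires knowing that $T\eta$, $\eta T$, $T\mu$, $\mu T$ are themselves morphisms in $\DL{B}$ — i.e.\ that whiskering a morphism of $\DL{B}$ by an object of $\DL{B}$ on either side again gives a morphism of $\DL{B}$ — which is exactly the compatibility of the composition operation on $\DL{B}$ with its morphisms, established in the previous paragraph.

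The main obstacle is bookkeeping rather than conceptual: one must set up the monoidal-like structure on $\DL{B}$ (composition of distributive laws, its unit $(\Id,\id)$, whiskering of morphisms) carefully enough that "both sides are morphisms of $\DL{B}$ with the same source" becomes a routine check in each case, and one must be careful that the object underlying $\eta$, $\mu$ on which finality is invoked is genuinely in $\DL{B}$ — in particular that the composite distributive law $\tau\ast\tau$ is the correct one making $\mu$ land where we want. I expect the delicate point to be verifying the second distributive-law axiom for $\tau$ and, relatedly, checking that the two whiskerings $\tau T$ and $T\tau$ interact correctly with $\tau$'s own naturality; once the category $\DL{B}$ with its composition is in place, uniqueness does all the remaining work, so no genuine diagram-chase beyond unfolding definitions is needed. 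Finally, the "there are unique such $\eta,\mu$" clause of the statement is immediate: any $\eta'$, $\mu'$ making $\tau$ a distributive law of a monad are in particular morphisms of $\DL{B}$ of the same source as $\eta$, $\mu$, so coincide with them by finality.
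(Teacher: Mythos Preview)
Your proposal is correct and takes essentially the same approach as the paper: define $\eta$ and $\mu$ as the unique $\DL{B}$-morphisms from $(\Id,\id_B)$ and $(TT,\tau T\circ T\tau)$ to $(T,\tau)$, read off the distributive-law axioms from the morphism conditions, and derive the monad laws and uniqueness from finality. The paper's own proof is terser (it calls the monad-law verification ``routine'' and defers details to~\cite{BPR17}), but your outline of that verification via the monoidal structure on $\DL{B}$ and uniqueness into a final object is exactly the intended argument.
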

\begin{proof}
  Define $\eta$ and $\mu$ as the unique morphisms from $\id_B$ and
  $\tau T \circ T\tau$ respectively to the companion:
  \[
  \xymatrix{
    B \ar@{=>}[d]_-{\eta B} \ar@{=}[r] % chktex 3
    & B \ar@{=>}[d]^-{B\eta} \\ % chktex 3
    TB  \ar@{=>}[r]^-{\tau} % chktex 3
    & BT
  }
  \qquad
  \xymatrix{
    TTB \ar@{=>}[d]_-{\mu B} \ar@{=>}[r]^-{T\tau} % chktex 3
    & TBT \ar@{=>}[r]^-{\tau T} % chktex 3
    & BTT \ar@{=>}[d]^-{B\mu} % chktex 3
    \\
    TB \ar@{=>}[rr]^-{\tau} && BT % chktex 3
  }
  \]
  By definition, they satisfy the required axioms for $\tau$ to be a
  distributive law of monad over functor. The proof that
  $(T,\eta,\mu)$ is indeed a monad is routine, using finality of
  $(T,\tau)$; see~\cite{BPR17} for more details.
\end{proof}
Since the companion is a distributive law
of a monad (Theorem~\ref{thm:monad-on-final}), by Theorem~\ref{thm:valid-bartels} we obtain the following.
\begin{corollary}\label{cor:solutions-comp}
  When a functor has a companion,
  coinduction up to the algebra induced by the companion is valid.
  %Let $(Z,\zeta)$ be a final $B$-coalgebra. For every morphism
%  $f \colon X \rightarrow BTX$ there is a unique morphism
%  $\sol{f} \colon X \rightarrow Z$ such that the following commutes:
%  $$
%  \xymatrix{
%    X \ar[d]_{f} \ar[rr]^{\sol{f}}
%    & & Z \ar[d]^{\zeta} \\
%    BTX \ar[r]_-{BT\sol{f}}
%    & BTZ \ar[r]_{B\alpha}
%    & BZ
%  }
%  $$
%  where $\alpha$ is the algebra induced by the distributive law $\tau$
%  of the companion.
\end{corollary}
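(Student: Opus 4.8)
The plan is simply to chain the two immediately preceding results. By Theorem~\ref{thm:monad-on-final}, whenever $B$ has a companion $(T,\tau)$ there are unique $\eta \colon \Id \Rightarrow T$ and $\mu \colon TT \Rightarrow T$ making $(T,\eta,\mu)$ a monad and turning $\tau \colon TB \Rightarrow BT$ into a distributive law of \emph{that monad} over $B$. This is exactly the hypothesis of the second bullet of Theorem~\ref{thm:valid-bartels}. Taking $\alpha \colon TZ \to Z$ to be the algebra induced by $\tau$ on the final $B$-coalgebra $(Z,\zeta)$ --- i.e.\ the unique map satisfying diagram~\eqref{eq:induced-alg}, which is precisely what ``the algebra induced by the companion'' means --- Theorem~\ref{thm:valid-bartels} yields that coinduction up to $\alpha$ is valid: every coalgebra $g \colon X \to BTX$ admits a unique solution $\sol{g} \colon X \to Z$ in the sense of Definition~\ref{def:validity-coind}.

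There is no real obstacle left to overcome here: the substantive content is already packaged in Theorem~\ref{thm:monad-on-final} (the monad laws and the monad--distributive-law axioms, obtained from finality of the companion) and in Theorem~\ref{thm:valid-bartels} (the generalised powerset construction producing the unique morphism into the final coalgebra). The only point worth spelling out is that the two notions of ``algebra induced by a distributive law'' in play --- the one named in the statement of the corollary and the one used in Theorem~\ref{thm:valid-bartels} --- coincide, since both are defined by~\eqref{eq:induced-alg}. I would also remark that we do not need the first alternative hypothesis of Theorem~\ref{thm:valid-bartels} (countable coproducts in $\C$): the monad route applies over an arbitrary category, so the corollary holds in full generality whenever the companion exists.
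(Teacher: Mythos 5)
Your proof is correct and follows exactly the paper's own route: invoke Theorem~\ref{thm:monad-on-final} to equip the companion with its monad structure, then apply the second alternative of Theorem~\ref{thm:valid-bartels}. The additional remarks (identification of the two notions of induced algebra, and that countable coproducts are not needed) are accurate but not essential.
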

Instantiated to the complete lattice case, this implies a soundness result (cf.\ Section~\ref{sec:dl}):
any invariant up to the companion (a post-fixpoint of $b\circ t$) is
below the greatest fixpoint ($\nu b$).

%\begin{corollary}
%	There exists a functor $K$ and a natural transformation as in the following diagram:
%        \marginpar{explain}
%	$$
%	\xymatrix{
%		\coalg{BT} \ar[r]^{K} \ar[d]
%			& \coalg{B} \ar[d] \\
%		\C \ar[r]_{\Id}  \ar@{}[ur]|{\rotatebox[origin=c]{45}{$\Rightarrow$}}
%			& \C
%	}
%	$$
%\end{corollary}
%\begin{proof}
%	Apply Theorem~\ref{thm:monad-on-final} and the generalized powerset construction.
%\end{proof}

If the underlying category has an intial object, then one can define the
final coalgebra and the algebra induced by the companion explicitly:
\begin{theorem}%
  \label{thm:T0final}
  Suppose $\C$ has an initial object $0$.
  Let $(T,\tau)$ be the companion of a functor $B \colon \C \rightarrow \C$ and let
  $(T,\eta,\mu)$ be the corresponding monad.
  % given by Theorem~\ref{thm:monad-on-final}.
  %
  The $B$-coalgebra $(T0,\tau_0\circ T!_{B0})$ is final, and the
  algebra induced on it by the companion is given by $\mu_0$.
\end{theorem}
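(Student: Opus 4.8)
The plan is to exhibit $(T0, \tau_0 \circ T!_{B0})$ as a final $B$-coalgebra by using the universal property of the companion together with the initiality of $0$. First I would observe that, since $0$ is initial, there is a unique arrow $!_{B0} \colon 0 \to B0$, making $(0, !_{B0})$ itself a $B$-coalgebra; applying $T$ and composing with $\tau_0$ produces the candidate coalgebra structure $\tau_0 \circ T!_{B0} \colon T0 \to TB0 \to BT0$. To prove finality, I would take an arbitrary $B$-coalgebra $(X, f)$ and construct the unique homomorphism $X \to T0$. The key idea is to turn $(X,f)$ into a datum that the companion can act on: the constant functor $K_X$ at $X$ (or equivalently the functor sending everything to $X$) together with $f$ viewed as a natural transformation $K_X B \Rightarrow B K_X$ gives an object of $\DL{B}$, and finality of $(T,\tau)$ yields a unique morphism of distributive laws $K_X \Rightarrow T$, i.e.\ a natural transformation whose component at $0$ is an arrow $X \to T0$. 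I would then check that this arrow is a $B$-coalgebra homomorphism, which follows from the defining square of a morphism in $\DL{B}$ instantiated at $0$, combined with $!_{B0}$ being the unique arrow out of $0$.

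For uniqueness of the homomorphism, I would run the argument in reverse: given any $B$-coalgebra morphism $h \colon X \to T0$, I would show it must coincide with the one produced above. The natural move is to extend $h$ to a natural transformation $K_X \Rightarrow T$ — this is automatic since $K_X$ is constant — and verify it is a morphism in $\DL{B}$, so that finality of the companion forces it to be the canonical one. The slightly delicate point is matching up $f \colon X \to BX$ with the ``constant'' distributive law: one has to be careful that the square relating $f$ and $\tau_0$ is exactly the $B$-coalgebra homomorphism condition for $h$, using again that any two arrows $X \to B0 \to$ (anything) agree because they factor through the unique arrow $X \to 0$... which does \emph{not} exist. So in fact the correct bookkeeping uses that $!_{B0} \colon 0 \to B0$ composed with $h$-related data forces the identification; I would phrase this via the generalised-powerset-style naturality diagram rather than by invoking initiality of $0$ on the nose.

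For the second claim, that the induced algebra is $\mu_0$, I would use the characterisation of the induced algebra from Equation~\eqref{eq:induced-alg}: the algebra $\alpha \colon TZ \to Z$ induced by the distributive law $\tau$ of the monad $(T,\eta,\mu)$ over $B$ is the unique map making the pentagon with $T\zeta$, $\tau_Z$ and $B\alpha$, $\zeta$ commute. Taking $Z = T0$ with $\zeta = \tau_0 \circ T!_{B0}$, I would verify that $\mu_0$ satisfies this equation. This is where the monad axioms and the distributive-law axioms for $\tau$ (both provided by Theorem~\ref{thm:monad-on-final}) come in: the required commutation is essentially the statement that $\mu$ is a morphism of $\tau$-bialgebras, or more concretely it unfolds to the compatibility $\tau \circ \mu B = B\mu \circ \tau T \circ T\tau$ together with naturality of $\tau$ and $\mu$, pre- and post-composed with $T$ applied to $!_{B0}$. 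Uniqueness of $\alpha$ then pins it down to $\mu_0$.

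The main obstacle I expect is the careful handling of the constant functor $K_X$ and the precise translation between ``$B$-coalgebra $(X,f)$'' and ``object of $\DL{B}$'': a constant functor does not literally have $f \colon X \to BX$ as a natural transformation $K_X B \Rightarrow B K_X$ unless $B$ is also suitably constant, so one actually needs the slightly more refined observation (used in the generalised powerset construction) that a $B$-coalgebra on $X$ corresponds to a distributive law of the constant-at-$X$ functor, or better, that the relevant universal property is that of the free object; getting this correspondence stated cleanly, and checking the homomorphism condition reduces exactly to the square defining morphisms in $\DL B$ evaluated at the initial object, is the crux. Once that correspondence is in place, both finality of $(T0,\tau_0\circ T!_{B0})$ and the identification of the induced algebra with $\mu_0$ are routine diagram chases using Theorem~\ref{thm:monad-on-final} and Equation~\eqref{eq:induced-alg}.
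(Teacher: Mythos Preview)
Your approach is exactly the paper's: form the constant-at-$X$ functor $\hat X$ with the constant-at-$f$ distributive law $\hat f$, use finality of $(T,\tau)$ to get a unique morphism $\lambda\colon \hat X\Rightarrow T$ in $\DL{B}$, and take $\lambda_0$; for the second claim, verify directly that $\mu_0$ satisfies the induced-algebra square via naturality of $\mu$ and the multiplication axiom $\tau\circ\mu B = B\mu\circ\tau T\circ T\tau$.

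Your expressed worries are unfounded, though, and you should not carry them into the write-up. The constant functor \emph{does} literally give a distributive law: $\hat X B$ is the constant-at-$X$ functor (since $\hat X(BY)=X$) and $B\hat X$ is the constant-at-$BX$ functor, so a natural transformation $\hat X B\Rightarrow B\hat X$ is precisely a single map $X\to BX$, with naturality trivial. For uniqueness, given a coalgebra morphism $h\colon X\to T0$, set $\lambda_Y := T!_Y\circ h$; naturality is immediate from initiality, and the distributive-law square at $Y$ reduces (using the coalgebra-morphism condition for $h$ and naturality of $\tau$ at $!_Y$) to the equality $!_{BY}=B!_Y\circ {!_{B0}}$ of two arrows $0\to BY$, which holds by initiality. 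No arrow $X\to 0$ is needed anywhere.
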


\begin{proof}
  Let $(X,f)$ be a $B$-coalgebra. Write $\hat X$ for the
  constant-to-$X$ functor, and $\hat f$ for the constant-to-$f$
  distributive law of $\hat X$ over $B$. By finality of the companion,
  there exists a unique natural transformation
  $\lambda \colon  \hat X\Rightarrow T$ such that
  $B\lambda \circ \hat f = \tau \circ \lambda B$. One checks easily
  that $\lambda_0$ is the unique coalgebra homomorphism from $(X,f)$
  to $(T0,\tau_0\circ T!_{B0})$.

  To prove that $\mu_0$ is the algebra induced by the companion, it
  suffices to prove that it is a coalgebra morphism of the correct type~\eqref{eq:induced-alg}:
  \begin{align*}
   (\tau_0\circ T!_{B0})\circ \mu_0
   	= \tau_0 \circ \mu_{B0} \circ TT!_{B0}
   	= B\mu_0\circ \tau_{T0}\circ T\tau_0\circ TT!_{B0} = B\mu_0\circ \tau_{T0}\circ T(\tau_0\circ T!_{B0})
  \end{align*}
%  $(\tau_0\circ T!_{B0})\circ \mu_0 = B\mu_0\circ \tau_{T0}\circ
%  T(\tau_0\circ T!_{B0})$,
  which follows from naturality of $\mu$, the fact that $\tau$ is a
  distributive law of the monad $(T,\eta,\mu)$ over $B$, and functoriality.
\end{proof}

More generally, the algebra induced by any distributive law factors
through the algebra $\mu_0$ induced by the companion.

\begin{proposition}\label{prop:alg-on-final}
  Let $(T,\tau)$ be the companion of an endofunctor $B$ and let
  $(T,\eta,\mu)$ be the corresponding monad.
  % given by Theorem~\ref{thm:monad-on-final}.
  %
  Let $\lambda \colon FB \Rightarrow BF$ be a distributive law, and
  $\alpha \colon FT0 \Rightarrow T0$ the algebra on the final
  coalgebra induced by it. Let $\bar{\lambda} \colon F \Rightarrow T$
  be the unique natural transformation induced by finality of the
  companion. Then $\alpha = \mu_0 \circ \bar{\lambda}_{T0}$.
%\begin{enumerate}
% \item Let $\lambda \colon FB \Rightarrow BF$ be a distributive law,
%   and $\alpha \colon FT0 \Rightarrow T0$ the algebra on the final
%   coalgebra induced by it. Let
%   $\bar{\lambda} \colon F \Rightarrow T$ be the unique morphism
%   induced by finality. Then
%   $\alpha = \mu_0 \circ \bar{\lambda}_{T0}$.
% \item If $\lambda \colon FB \Rightarrow BF$ is a distributive
%   law of a monad $(F,\eta^F,\mu^F)$ over the functor $B$, then the
%   unique morphism $\bar{\lambda} \colon F \Rightarrow T$ is a monad
%   morphism from $(F,\eta^F,\mu^F)$ to $(T,\eta,\mu)$.
%\end{enumerate}
\end{proposition}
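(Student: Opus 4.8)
The plan is to exploit the uniqueness built into the definition of the algebra induced by a distributive law~\eqref{eq:induced-alg}. By Theorem~\ref{thm:T0final} the final $B$-coalgebra is $(T0,\zeta)$ with $\zeta = \tau_0\circ T!_{B0}$, and $\alpha\colon FT0\to T0$ is by definition the \emph{unique} morphism with $\zeta\circ\alpha = B\alpha\circ\lambda_{T0}\circ F\zeta$. Hence it suffices to check that $\mu_0\circ\bar\lambda_{T0}$ satisfies this same equation; finality then forces $\alpha = \mu_0\circ\bar\lambda_{T0}$.

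To verify the equation I would chain together three ingredients. First, Theorem~\ref{thm:T0final} states precisely that $\mu_0$ is the algebra induced on $(T0,\zeta)$ by the companion $\tau$, i.e.\ $\zeta\circ\mu_0 = B\mu_0\circ\tau_{T0}\circ T\zeta$. Second, $\bar\lambda\colon F\Rightarrow T$ is a natural transformation, so naturality at the morphism $\zeta\colon T0\to BT0$ gives $T\zeta\circ\bar\lambda_{T0} = \bar\lambda_{BT0}\circ F\zeta$. Third, $\bar\lambda$ is a morphism in $\DL B$ from $(F,\lambda)$ to $(T,\tau)$, which means $\tau\circ\bar\lambda B = B\bar\lambda\circ\lambda$; taking the component at $T0$ yields $\tau_{T0}\circ\bar\lambda_{BT0} = B\bar\lambda_{T0}\circ\lambda_{T0}$. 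Composing these, and using functoriality of $B$ at the end,
\begin{align*}
  \zeta\circ\mu_0\circ\bar\lambda_{T0}
  &= B\mu_0\circ\tau_{T0}\circ T\zeta\circ\bar\lambda_{T0}\\
  &= B\mu_0\circ\tau_{T0}\circ\bar\lambda_{BT0}\circ F\zeta\\
  &= B\mu_0\circ B\bar\lambda_{T0}\circ\lambda_{T0}\circ F\zeta
   = B(\mu_0\circ\bar\lambda_{T0})\circ\lambda_{T0}\circ F\zeta,
\end{align*}
which is exactly the defining equation of the algebra induced by $\lambda$, so $\alpha = \mu_0\circ\bar\lambda_{T0}$.

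There is essentially no hard step here: the argument is a short diagram chase, and the only thing to be careful about is orienting the three squares correctly and keeping track of which component of $\bar\lambda$ appears where. The single genuine input is Theorem~\ref{thm:T0final}, which supplies both the explicit description of the final coalgebra as $(T0,\tau_0\circ T!_{B0})$ and the identification of its companion-induced algebra as $\mu_0$; since the statement already speaks of $T0$ this is available for free, and without it one would instead have to carry out the same computation with a generic $(Z,\zeta)$ and transport $\mu_0$ along the comparison map. As a sanity check, instantiating $(F,\lambda) = (T,\tau)$ makes $\bar\lambda$ the identity and collapses the proposition to $\alpha = \mu_0$, consistent with Theorem~\ref{thm:T0final}.
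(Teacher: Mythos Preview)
Your proof is correct and follows essentially the same approach as the paper: both show that $\mu_0\circ\bar\lambda_{T0}$ satisfies the defining equation of the algebra induced by $\lambda$ on the final coalgebra $(T0,\zeta)$, and conclude by uniqueness. The only difference is organisational: the paper unfolds $\zeta=\tau_0\circ T!_{B0}$ and verifies a larger diagram cell by cell (using naturality of $\mu$ and $\bar\lambda$ and the monad-distributive-law axiom for $\tau$), whereas you invoke directly the second clause of Theorem~\ref{thm:T0final} (that $\mu_0$ is the $\tau$-induced algebra), which packages two of those cells into one step and leaves only naturality of $\bar\lambda$ and the $\DL{B}$-morphism property to check.
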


\begin{proof}
  By Theorem~\ref{thm:T0final},
  $\tau_0 \circ T!_{B0} \colon T0 \rightarrow BT0$ is a final
  $B$-coalgebra. By definition of the algebra induced on the final
  coalgebra by $\lambda$, and uniqueness of morphisms into final
  coalgebras, it suffices to prove that the following diagram
  commutes.
  \[
  \xymatrix{ FT0 \ar[r]^{\bar{\lambda}_{T0}} \ar[d]_{FT!_{B0}} & TT0 % chktex 3
    \ar[d]^{TT!_{B0}} \ar[r]^{\mu_0} % chktex 3
    & T0 \ar[d]^{T!_{B0}} \\ % chktex 3
    FTB0 \ar[d]_{F\tau_0} \ar[r]^{\bar{\lambda}_{TB0}} & TTB0 % chktex 3
    \ar[d]^{T\tau_0} \ar[r]^{\mu_{B0}} % chktex 3
    & TB0 \ar[dd]^{\tau_0} \\ % chktex 3
    FBT0 \ar[d]_{\lambda_{T0}} \ar[r]^{\bar{\lambda}_{BT0}} & TBT0 % chktex 3
    \ar[d]^{\tau_{T0}} % chktex 3
    & \\
    BFT0 \ar[r]_{B\bar{\lambda}_{T0}} & BTT0 \ar[r]_{B\mu_0} & BT0 } % chktex 3
  \]
  Everything commutes, clockwise starting from the top right by
  naturality, the fact that $\tau$ is a
  distributive law of the monad $(T,\eta,\mu)$ over $B$,
  the fact that $\bar{\lambda}$ is a
  morphism from $(F,\lambda)$ to $(T,\tau)$, and twice
  naturality.
\end{proof}

\section{Right Kan extensions and codensity monads}\label{sec:codensity}

The notion of \emph{codensity monad} is a special instance of a right
Kan extension, which plays a central role in the following sections. We
briefly define these notions here;
see, e.g.,~\cite{mac1998categories,nlabrkan,leinster} for a comprehensive study.

Let $F \colon \C \rightarrow \D$, $G \colon \C \rightarrow \E$ be two
functors. Define the category $\K(F,G)$ whose objects are pairs
$(H, \alpha)$ of a functor $H \colon \D \rightarrow \E$ and a
natural transformation $\alpha \colon HF \Rightarrow G$. A morphism
from $(H,\alpha)$ to $(I,\beta)$ is a natural transformation
$\kappa \colon H \Rightarrow I$ such that
$\beta \circ \kappa F = \alpha$.
\[
\xymatrix{ HF \ar@{=>}[rr]^{\kappa F} \ar@{=>}[dr]_{\alpha} % chktex 3
  & & IF \ar@{=>}[dl]^{\beta} \\ % chktex 3
  & G & }
\]
The \emph{right Kan extension} of $G$ along $F$ is a final object
$(\Ran{F}{G},\epsilon)$ in $\K(F,G)$; the natural transformation
$\epsilon\colon \Ran{F}{G} F\Rightarrow G$ is called its
\emph{counit}.  A functor $K \colon \E \rightarrow \F$ is said to
\emph{preserve} $\Ran{F}{G}$ if $K \circ \Ran{F}{G}$ is a right Kan
extension of $KG$ along $F$, with counit
$K\epsilon \colon K\Ran{F}{G}F \Rightarrow KG$.

\smallskip

The codensity monad is a special case, with $F=G$.
Explicitly, the \emph{codensity monad} of a functor
$F \colon \C \rightarrow \D$ consists of a functor
$\cod{F} \colon \D \rightarrow \D$ and a natural transformation
$\epsilon \colon \cod FF\Rightarrow F$ such that for every functor
$H \colon \D \rightarrow \D$ and natural transformation
$\alpha \colon HF \Rightarrow F$ there is a unique
$\hat{\alpha} \colon H \Rightarrow \cod{F}$ such that
$\epsilon \circ \hat{\alpha} F = \alpha$.
\[
\xymatrix{ HF \ar@{=>}[rr]^{\hat{\alpha} F} \ar@{=>}[dr]_{\alpha} % chktex 3
  & & \cod{F}F \ar@{=>}[dl]^{\epsilon} \\ % chktex 3
  & F & }
\]
As the name suggests, $\cod{F}$ is a monad: the unit $\eta$ and the
multiplication $\mu$ are the unique natural transformations such that
the following diagrams commute.
\begin{equation}\label{eq:def-cod-monad}
\begin{gathered}
\xymatrix{
	F \ar@{=>}[r]^{\eta F} \ar@{=}[dr] % chktex 3
		& \cod{F} F \ar@{=>}[d]^{\epsilon} \\ % chktex 3
		& F
}
\qquad
\xymatrix{
	\cod{F} \cod{F} F \ar@{=>}[r]^{\cod{F} \epsilon} \ar@{=>}[d]_{\mu F} % chktex 3
		& \cod{F} F \ar@{=>}[d]^{\epsilon} \\ % chktex 3
	\cod{F} F \ar@{=>}[r]_{\epsilon} % chktex 3
		& F
}
\end{gathered}
\end{equation}
In the sequel we abbreviate the category $\K(F,F)$ as $\K(F)$.
%$\epsilon \circ \eta F = \id$ and
%$\epsilon \circ \mu F = \epsilon \circ \cod{F} \epsilon$.  In the
%sequel we will abbreviate the category $\K(F,F)$ as $\K(F)$.

\smallskip

Right Kan extensions (and in particular codensity monads) can be computed pointwise as a limit, if
sufficient limits exist.  For an object $X$ in $\D$, denote by
$\Delta_X \colon \C \rightarrow \D$ the functor that maps every object
to $X$.  By $\Delta_X / F$ we denote the comma category, where an
object is a pair $(Y,f)$ consisting of an object $Y$ in $\C$ and an
arrow $f \colon X \rightarrow FY$ in $\D$, and an arrow from $(Y,f)$
to $(Z,g)$ is a map $h \colon Y \rightarrow Z$ in $\C$ such that
$Fh \circ f = g$.
%to $(Z,g)$ is a map $h \colon Y \rightarrow Z$ in $\C$ making the
%following triangle commute:
%$$
%\xymatrix{
%& X \ar[dl]_{f} \ar[dr]^g & \\
%FY \ar[rr]_{Fh} & & FZ
%}
%$$
There is a forgetful functor $(\Delta_X / F) \rightarrow \C$, which
remains unnamed below.

\begin{lemma}\label{lm:pw-codensity}
  Let $F \colon \C \rightarrow \D$, $G \colon \C \rightarrow \E$ be
  functors.  If, for every object $X$ in $\D$, the limit
  $ \lim\left((\Delta_X / F) \rightarrow \C \xrightarrow{G} \E\right)
  $
  exists, then the right Kan extension $\Ran{F}{G}$ exists, and is
  given on an object $X$ by the corresponding limit.
%  Let $F \colon \C \rightarrow \D$, $G \colon \C \rightarrow \E$ be
%  functors.  If, for every object $X$ in $\D$, the following limit
%  exists:
%  $$
%  \lim\left((\Delta_X / F) \rightarrow \C \xrightarrow{G}
%    \D\right)
%  $$
%  then the right Kan extension $\Ran{F}{G}$ exists, and is given on an
%  object $X$ by the above limit.
\end{lemma}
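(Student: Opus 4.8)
The plan is to construct the right Kan extension explicitly as the assumed pointwise limit and to verify its universal property by hand. Write $RX \eqdef \lim\bigl((\Delta_X/F)\to\C\xrightarrow{G}\E\bigr)$, with limit projections $\pi^X_{(Y,h)}\colon RX\to GY$ indexed by the objects $(Y,h\colon X\to FY)$ of the comma category. The goal is to equip $R$ with the structure of a functor $\D\to\E$ and a counit $\epsilon\colon RF\Rightarrow G$, and then to show $(R,\epsilon)$ is final in $\K(F,G)$.

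First I would make $R$ functorial. Given $f\colon X\to X'$ in $\D$, precomposition with $f$ sends $(Y,h\colon X'\to FY)$ to $(Y,h\circ f)$ and acts as the identity on comma morphisms, so $\bigl(\pi^X_{(Y,h\circ f)}\bigr)_{(Y,h)}$ is a cone on the diagram for $RX'$; this induces a unique $Rf\colon RX\to RX'$ with $\pi^{X'}_{(Y,h)}\circ Rf=\pi^X_{(Y,h\circ f)}$, and functoriality of $R$ is then immediate from the uniqueness clause of the limit. Next I would define the counit by $\epsilon_Y\eqdef\pi^{FY}_{(Y,\id_{FY})}$, using that $(Y,\id_{FY})$ is an object of $\Delta_{FY}/F$. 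Naturality of $\epsilon$ holds because any $k\colon Y\to Y'$ in $\C$ furnishes a comma morphism $(Y,\id_{FY})\to(Y',Fk)$ in $\Delta_{FY}/F$, so both $\epsilon_{Y'}\circ RFk$ and $Gk\circ\epsilon_Y$ reduce to $\pi^{FY}_{(Y',Fk)}$.

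It then remains to check finality of $(R,\epsilon)$. Given $(H,\alpha)$ with $\alpha\colon HF\Rightarrow G$, for each object $X$ the family $\bigl(\alpha_Y\circ Hh\bigr)_{(Y,h)}$ is a cone on the diagram for $RX$ — this is precisely where naturality of $\alpha$ is used — and hence factors uniquely through a morphism $\kappa_X\colon HX\to RX$ satisfying $\pi^X_{(Y,h)}\circ\kappa_X=\alpha_Y\circ Hh$. Naturality of $\kappa$ and the equation $\epsilon\circ\kappa F=\alpha$ are then short computations obtained by postcomposing with the projections (for the latter, instantiate the defining property of $\kappa_{FY}$ at $(Y,\id_{FY})$). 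For uniqueness, if $\kappa'\colon H\Rightarrow R$ also satisfies $\epsilon\circ\kappa' F=\alpha$, then for every $(Y,h\colon X\to FY)$ naturality of $\kappa'$ together with the identity $\epsilon_Y\circ Rh=\pi^X_{(Y,h)}$ (which follows from $\id_{FY}\circ h=h$ and the definition of $Rh$) yields $\pi^X_{(Y,h)}\circ\kappa'_X=\alpha_Y\circ Hh$, and this pins down $\kappa'_X=\kappa_X$ by the uniqueness in the limit.

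The argument is entirely routine; the only point that requires care — and the main source of potential confusion — is the bookkeeping of which comma-category morphisms witness each cone condition, and in particular the identity $\epsilon_Y\circ Rh=\pi^X_{(Y,h)}$, which is what makes the uniqueness step work. The final sentence of the statement, that $\Ran{F}{G}$ is then given on objects by these limits, is just the content of the construction above.
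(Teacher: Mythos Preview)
Your argument is correct and is the standard proof of the pointwise right Kan extension formula. Note, however, that the paper does not actually supply a proof of this lemma: it is stated as a well-known fact (with implicit reference to standard sources such as Mac~Lane), so there is nothing in the paper to compare your proof against beyond observing that what you wrote is precisely the textbook construction the authors are invoking.
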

%
%
%The codensity monad of a functor $F$ is the right Kan extension of $F$
%along itself. Hence, Lemma~\ref{lm:pw-codensity} gives us a way of
%computing the codensity monad.

The hypotheses of Lemma~\ref{lm:pw-codensity} are met in particular if
$\C$ is essentially small (equivalent to a category with a set
of objects and a set of arrows), $\D$ is locally small and $\E$ is complete. The latter
conditions hold for $\D = \E =\Set$. In that case, we have the following
concrete presentation; see, e.g.,~\cite[Section 2.5]{cordier2008shape}
for a proof.
% It can be formulated at the general level of Kan
% extensions, but we only need it for the codensity monad.
% \marginpar{I'm not sure about the status of the lemma; my feeling is
%   it should be very standard and old; the reference is just one i
%   bumped into, they give a proof but don't cite anything}
\begin{lemma}\label{lm:codensity-set}
  Let $F,G \colon \C \rightarrow \Set$ be functors.
  Suppose that, for each set $X$, the collection $\{\alpha \colon {(F-)}^X \Rightarrow G\}$
  is a set (rather than a proper class).
  Then the right Kan extension $\Ran{F}{G}$ is given by
 \[\Ran{F}{G}(X) = \{\alpha \colon {(F-)}^X \Rightarrow G\}\]
 for each $X$.
	For $h \colon X \rightarrow Y$,
  	${(\Ran{F}{G}(h)(\alpha))}_A \colon  {(FA)}^Y \rightarrow GA$
  	is given by $f \mapsto \alpha_A(f \circ h)$.
%  Let $F,G \colon \C \rightarrow \Set$ be functors, where $\C$ is
%  essentially small.  The right Kan extension $\Ran{F}{G}$ is given by
% $$\Ran{F}{G}(X) = \{\alpha \colon (F-)^X \Rightarrow G\}$$
% and, for $h \colon X \rightarrow Y$,
%  	$(\Ran{F}{G}(h)(\alpha))_A  \colon  (FA)^Y \rightarrow GA$
%  	is given by $f \mapsto \alpha_A(f \circ h)$.

  	The natural transformation
  	  $\epsilon \colon \Ran{F}{G} F \Rightarrow G$ is given by
  $\epsilon_X(\alpha \colon F^{FX} \Rightarrow G) =
  \alpha_X(\id_{FX})$.
  Finally, given $H \colon \Set \rightarrow \Set$ and $\beta \colon HF \Rightarrow G$,
  the induced $\hat{\beta} \colon H \Rightarrow \Ran{F}{G}$ is
  given by ${(\hat{\beta}_X(S))}_A \colon {(FA)}^X \rightarrow GA$, $f \mapsto \beta_A \circ Hf(S)$.
%
%  codensity monad $\cod{F}$ is given by
%  $\cod{F}(X) = \{\alpha \colon (F-)^X \Rightarrow F\}$ and, for
%  $h \colon X \rightarrow Y$,
%  $(\cod{F}(h)(\alpha))_A \colon (FA)^Y \rightarrow FA$ is given by
%  $f\mapsto \alpha_A(f \circ h)$.  The natural transformation
%  $\epsilon \colon \cod{F} F \Rightarrow F$ is given by
%  $\epsilon_X(\alpha \colon F^{FX} \Rightarrow F) =
%  \alpha_X(\id_{FX})$.
\end{lemma}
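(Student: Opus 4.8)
The plan is to verify directly that the data $(\Ran{F}{G},\epsilon)$ exhibited in the statement is a final object of $\K(F,G)$; equivalently, one could invoke Lemma~\ref{lm:pw-codensity}, since under the size hypothesis the collection of natural transformations ${(F-)}^X\Rightarrow G$ is in bijection with the class of cones over $(\Delta_X/F)\to\C\xrightarrow{G}\Set$ --- an object $(A,f)$ of $\Delta_X/F$ is exactly an element $f\in{(FA)}^X$, a morphism $h\colon(A,f)\to(B,Fh\circ f)$ is exactly an arrow $h\colon A\to B$ of $\C$, and the cone condition $Gh(x_{(A,f)})=x_{(B,Fh\circ f)}$ is exactly naturality of the family $\alpha_A(f)\eqdef x_{(A,f)}$ --- so this class is a set precisely when the required limit exists, and the two descriptions then agree. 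I will write out the direct route, as it also yields the stated formulas for $\epsilon$ and $\hat\beta$ without unwinding the generic pointwise construction.

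First I would check that $\Ran{F}{G}$ is a functor on $\Set$: for $h\colon X\to Y$ and $\alpha\colon{(F-)}^X\Rightarrow G$, the family $f\mapsto\alpha_A(f\circ h)$ is natural in $A$ because $\alpha$ is (precomposition by $h$ commutes with the postcomposition action defining ${(F-)}^{(-)}$), and functoriality in $h$ follows from associativity and unitality of composition in $\Set$. Naturality of $\epsilon\colon\Ran{F}{G}F\Rightarrow G$ at $g\colon A\to B$, evaluated at $\alpha\in\Ran{F}{G}(FA)$, unfolds to the identity $\alpha_B(Fg)=Gg(\alpha_A(\id_{FA}))$, which is just naturality of $\alpha$ at $g$ read at the point $\id_{FA}$. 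For the universal property, given $H\colon\Set\to\Set$ and $\beta\colon HF\Rightarrow G$, set ${(\hat\beta_X(S))}_A(f)=\beta_A(Hf(S))$; naturality of this family in $A$ reduces, via functoriality of $H$, to naturality of $\beta$; naturality of $\hat\beta$ in $X$ reduces to functoriality of $H$ again; and $\epsilon\circ\hat\beta F=\beta$ follows by evaluating at $\id_{FA}$ and using $H\id=\id$. None of these steps involves anything beyond routine diagram chasing once the definitions are in place.

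The one step that carries the real content is uniqueness, and that is the point I expect to be the crux. If $\kappa\colon H\Rightarrow\Ran{F}{G}$ satisfies $\epsilon\circ\kappa F=\beta$, then for $S\in HX$ and $f\colon X\to FA$ I would use naturality of $\kappa$ with respect to the \emph{arrow} $f$ itself, viewed as a $\Set$-morphism $X\to FA$: the square $\Ran{F}{G}(f)\circ\kappa_X=\kappa_{FA}\circ Hf$, evaluated at $S$, then read in the component $A$ at the point $\id_{FA}$, collapses the left side to ${(\kappa_X(S))}_A(f)$ and the right side to $\epsilon_A(\kappa_{FA}(Hf(S)))=\beta_A(Hf(S))$. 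Hence $\kappa_X(S)$ is forced to be $\hat\beta_X(S)$. This Yoneda-style "probing by the $f$'s" is what pins $\kappa$ down, and it is also the reason the explicit formula for $\hat\beta$ is the only possible one; the remainder is bookkeeping, and the set-theoretic size assumption is used exactly to guarantee that $\Ran{F}{G}(X)$ is a legitimate object of $\Set$ (equivalently, that Lemma~\ref{lm:pw-codensity} applies).
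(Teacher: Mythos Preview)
Your proof is correct. The paper does not give its own proof of this lemma; it simply refers the reader to~\cite[Section 2.5]{cordier2008shape}. Your direct verification of the universal property is the standard argument and is carried out accurately, including the Yoneda-style uniqueness step (naturality of $\kappa$ at the arrow $f\colon X\to FA$ followed by evaluation at $\id_{FA}$), which is indeed the only substantive point. Your remark that this is equivalent to unwinding the pointwise-limit description of Lemma~\ref{lm:pw-codensity} is also correct and is essentially how the cited reference proceeds.
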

%\todo{I changed a bit since $\C$ is not essentially small in our
%  examples; but not so nice formulation. DAM: indeed not so nice, but
%  no choice, right?}
%\todo{check carefully this generalisation to right kan extensions}
%shape theory Categorical Methods of Approximation "be the kan extension of"

\section{Constructing the companion via right Kan extensions}

It is standard in the theory of coalgebras to compute the final
coalgebra of a functor $B$ as a limit of the final sequence $\fin{B}$,
see Section~\ref{sec:prelim}.
In this section,
we show how the companion of a functor arises as the codensity monad of its final sequence.
%In this section,
%we show how the companion of a pair of functors $(A,B)$ arises
%as a right Kan extension of the final sequence of $B$ along the final sequence
%of $A$. In particular, the companion of a functor $B$ arises as
%the codensity monad of its final sequence.

We first adapt the definition of companion to more general natural transformations
of the form $FA \Rightarrow BF$,
fixing \emph{two} functors; such natural transformations were discussed in Section~\ref{sec:gen-dl}.
This generalisation is useful in the next sections, in
the setting of causal functions. Moreover, the
construction of the companion given in this section
can be presented naturally at this level.
\begin{definition}
    Let $A \colon \A \rightarrow \A$ and $B \colon \C \rightarrow \C$  be functors.  The category
    $\DL{A,B}$ is defined as follows.  An object is
    a pair $(F, \lambda)$ where $F \colon \A \rightarrow \C$ is a
    functor and $\lambda \colon FA \Rightarrow BF$ is a natural
    transformation.  A morphism from $(F, \lambda)$ to $(G,\rho)$ is a
    natural transformation $\kappa \colon F \Rightarrow G$ such that
    $\rho\circ \kappa A = B\kappa\circ\lambda$.
    The \emph{companion} of $(A,B)$ is the final object $(T,\tau)$ of $\DL{A,B}$, if it
    exists.
\end{definition}

Recall, from Section~\ref{sec:prelim}, that the final sequence of an
endofunctor $A \colon \A \rightarrow \A$ in a complete category $\A$
can be presented as a functor
$\fin{A} \colon \Ord^{\op} \rightarrow \A$. Given another functor
$B \colon \C \rightarrow \C$, consider the right Kan extension
$\Ran{\fin{A}}{\fin{B}}$ of the final sequence of $\fin{B}$ along the
final sequence of $\fin{A}$.  By definition, this is final in the
category of natural transformations of the form
$\alpha \colon F \fin{A} \Rightarrow \fin{B}$.
The main result of this section is that, under certain conditions, the right Kan extension $\Ran{\fin{A}}{\fin{B}}$
is (the underlying functor of) the companion of $(A,B)$, i.e., the final object in
the category of distributive laws of the form $\lambda \colon FA \Rightarrow BF$ (Theorem~\ref{thm:companion-codensity}).
The following lemma is a first step: it associates, to every such distributive law $\lambda \colon FA \Rightarrow BF$,
a natural transformation of the form $\alpha \colon F \fin{A} \Rightarrow \fin{B}$.

\begin{lemma}\label{lm:dl-alpha}
  For every $\lambda \colon FA \Rightarrow BF$ there exists a unique
  $\alpha \colon F\fin{A} \Rightarrow \fin{B}$ such that for all
  $i \in \Ord$: $\alpha_{i+1} = B\alpha_i \circ \lambda_{A_i}$.
  This construction extends to a functor from $\DL{A,B}$ to $\K(\fin{A},\fin{B})$.

  Moreover, if $A_{k+1,k}$ and $B_{k+1,k}$ are isomorphisms for some $k$, then
  $\alpha_k \colon FA_k \rightarrow B_k$ is the unique map induced by $\lambda$ (as in~\eqref{eq:unique-map}).
%  $$
%	\xymatrix{
%		FA_k \ar[r]^{\alpha_k}\ar[d]_{FA_{k+1,k}^{-1}}
%			& B_k \ar[dd]^{B_{k+1,k}^{-1}} \\
%		FAA_k \ar[d]_{\lambda_{A_k}}
%			& \\
%		BFA_k \ar[r]_{B\alpha_k}
%			& BB_k
%	}
%  $$
  In particular, if $A=B$ then $\alpha_k$ is the algebra induced by $\lambda$ on the final coalgebra.
\end{lemma}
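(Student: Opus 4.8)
The plan is to define $\alpha$ by transfinite recursion on the ordinal index, following the prescription $\alpha_{i+1} = B\alpha_i \circ \lambda_{A_i}$ for successors and using limit cones at limit ordinals, then verify naturality in two senses: as a family of morphisms $\alpha_i \colon FA_i \to B_i$ that is natural in $i$ (i.e.\ compatible with the connecting maps $B_{j,i} \circ \alpha_j = \alpha_i \circ FA_{j,i}$), and componentwise natural as each $\alpha_i$ is built from natural transformations. Concretely, first I would set $\alpha_0 \colon FA_0 \to B_0 = 1$ to be the unique map into the final object. For the successor step, given $\alpha_i$, put $\alpha_{i+1} \eqdef B\alpha_i \circ \lambda_{A_i} \colon FAA_i \to BFA_i \to BB_i$, noting $FA_{i+1} = F A A_i$ (since $A_{i+1} = A A_i$) and $B_{i+1} = B B_i$. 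For a limit ordinal $j$, since $(B_{j,i})_{i<j}$ is a limit cone and the maps $(\alpha_i \circ FA_{j,i})_{i<j}$ form a cone over the same diagram (by the inductively maintained compatibility with connecting maps, restricted below $j$), I define $\alpha_j$ as the unique mediating morphism; this is forced, which handles uniqueness at limits.

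The crux of the construction is maintaining, as an invariant of the recursion, the compatibility equation $B_{j,i} \circ \alpha_j = \alpha_i \circ FA_{j,i}$ for all $i \leq j$. At successor steps this reduces, via the definitions $B_{i+1,i+1}=\id$, $B_{j+1,i+1} = B B_{j,i}$, $A_{j+1,i+1}=AA_{j,i}$, and naturality of $\lambda$ (the square $BF A_{j,i} \circ \lambda_{A_j} = \lambda_{A_i} \circ F A A_{j,i}$), to applying $B$ to the compatibility equation one level down; the cases where $i$ or $j$ is a limit are handled by the universal property of the limit cone. This bookkeeping is the main obstacle: there is nothing deep, but one must be careful to thread the three-way case split (successor/limit for both indices) consistently, and to check that $\alpha_j$ at a limit does not depend on spurious choices. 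Uniqueness of $\alpha$ overall follows because the successor clause is imposed by hypothesis and the limit clauses are forced by the limit cones; the $\alpha_0$ value is forced by finality of $1$.

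For functoriality, given a morphism $\kappa \colon (F,\lambda) \to (G,\rho)$ in $\DL{A,B}$, I would show the induced $\alpha^F, \alpha^G$ satisfy $\kappa$-naturality $\alpha^G_i \circ \kappa_{A_i} = B_i\text{-component stuff}$---more precisely that $\kappa$ itself, composed appropriately, is a morphism $(F,\alpha^F) \to (G,\alpha^G)$ in $\K(\fin A,\fin B)$, i.e.\ $\alpha^G \circ (\kappa \fin A)$, suitably read, equals the image of $\alpha^F$. This again goes by transfinite induction: the successor case uses that $\kappa$ is a $\DL{A,B}$-morphism ($\rho \circ \kappa A = B\kappa \circ \lambda$) together with naturality of $\kappa$; the limit case uses uniqueness of mediating morphisms into the limit cone. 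Preservation of identities and composition is then immediate from this uniqueness.

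Finally, for the last two sentences: if $A_{k+1,k}$ and $B_{k+1,k}$ are isomorphisms, then $A_k \xrightarrow{A_{k+1,k}^{-1}} A A_k$ and $B_k \xrightarrow{B_{k+1,k}^{-1}} B B_k$ are the final coalgebras (by the cited Theorem~1.3 of~\cite{barr}), and unwinding the successor clause $\alpha_{k+1} = B\alpha_k \circ \lambda_{A_k}$ together with the compatibility equation $B_{k+1,k} \circ \alpha_{k+1} = \alpha_k \circ FA_{k+1,k}$ gives exactly the square~\eqref{eq:unique-map} defining the map induced by $\lambda$; I would check this by pasting the two commuting squares and inverting the isomorphisms $A_{k+1,k}$, $B_{k+1,k}$. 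Uniqueness of the induced map then transfers uniqueness of $\alpha_k$. The special case $A = B$ is literally the statement that $\alpha_k$ is the algebra induced by $\lambda$ on the final coalgebra, as in~\eqref{eq:induced-alg}.
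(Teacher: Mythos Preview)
Your proposal is correct and follows essentially the same route as the paper: define $\alpha$ by transfinite recursion (successor clause as given, limit clause forced by the limit cone), prove the compatibility squares $B_{j,i}\circ\alpha_j=\alpha_i\circ FA_{j,i}$ by transfinite induction using naturality of $\lambda$ for the successor case and the universal property of $B_j$ (or $B_i$) for the limit cases, and then handle functoriality and the final-coalgebra clause exactly as you describe. The paper organises the induction on $j$ alone (splitting on whether $i$ is a successor or a limit inside the successor step for $j$), but this is only a cosmetic difference from your ``three-way case split'' description.
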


\begin{proof}
	This natural transformation is completely determined by the successor case given in the definition;
	on a limit ordinal $j$, $B_j$ is a limit, and naturality requires it to be defined as the unique arrow
	$\alpha_j \colon FA_j \rightarrow B_j$ such that
	\[
	\xymatrix{
		FA_j \ar[d]_{FA_{j,i}} \ar@{-->}[r]^{\alpha_j} % chktex 3
			& B_j \ar[d]^{B_{j,i}} \\ % chktex 3
		FA_i \ar[r]_{\alpha_i} % chktex 3
			& B_i
	}
	\]
	commutes, for all $i < j$.

	For naturality, we have to prove that the relevant square (as above) commutes for all $i,j$ with $i \leq j$.
	For $i=j$, this follows since $A_{j,j} = \id_{A_j}$ and $B_{j,j} = \id_{B_j}$ by definition
	of the final sequence. We prove that the square commutes
	for any $i,j$ with $i < j$, by induction on $j$. The case that $j$ is a limit ordinal follows immediately from
	the definition of $\alpha_j$, without using the induction hypothesis.

	Now suppose that, for any $i$ with $i<j$, the square commutes for $i,j$. We need to
	prove that it commutes for all $i < j+1$. First observe that if $i < j$,
	then the square also commutes for $i+1<j+1$:
	\[
	\xymatrix@C=1.5cm{
		FA_{j+1} = FAA_j \ar[r]_-{\lambda_{A_j}}\ar[d]_{FA_{j+1,i+1} = FAA_{j,i}} \ar@/^15pt/[rr]^{\alpha_{j+1}} % chktex 3 chktex 25
			& BFA_j \ar[r]_{B\alpha_j} \ar[d]^{BFA_{j,i}} % chktex 3
			& BB_j = B_{j+1} \ar[d]^{B_{j+1,i+1} = BB_{j,i}} \\ % chktex 3
		FA_{i+1} = FAA_i \ar[r]^-{\lambda_{A_i}} \ar@/_15pt/[rr]_{\alpha_{i+1}} % chktex 3 chktex 25
			& BFA_i \ar[r]^{B\alpha_i} % chktex 3
			& BB_i = B_{i+1}
	}
	\]
	by naturality (left square), assumption (right square) and definition of $\alpha$ on successor ordinals (crescents). Hence, the square commutes for any successor ordinal
	$i+1$ strictly below $j+1$.

	For $i$ a limit ordinal, consider the following diagram:
\[
\xymatrix{
	FA_j \ar[r]^{FA_{j,i}} \ar[d]_{\alpha_j} % chktex 3
		& FA_i \ar[r]^{FA_{i,l}} \ar[d]^{\alpha_i} % chktex 3
		& FA_l \ar[d] \ar[d]^{\alpha_l}\\ % chktex 3
	B_j \ar[r]_{B_{j,i}} % chktex 3
		& B_i \ar[r]_{B_{i,l}} % chktex 3
		& B_l
}
\]
For all $l < i$, the outer rectangle commutes by the induction hypothesis, and the right square by definition of $\alpha_i$
on the limit ordinal $i$. Since $B_i$ is a limit with projections $B_{i,l}$ for $l \leq i$, it follows
that the square on the left commutes, as desired.

To show that the construction extends to a functor,
let $\kappa \colon F \Rightarrow G$ be a morphism in $\DL{A,B}$
from some $(F,\lambda)$ to $(G,\rho)$. The natural transformations $(F,\lambda)$ and $(G,\rho)$ respectively
yield unique $\alpha^\lambda \colon F\fin{A} \Rightarrow \fin{B}$ and $\alpha^\rho \colon G \fin{A} \Rightarrow \fin{B}$ in
the above way. We need to prove that $\kappa$ is a morphism in $\K(\fin{A},\fin{B})$, i.e., that
\[
\xymatrix{
	F\fin{A} \ar@{=>}[rr]^{\kappa \fin{A}} \ar@{=>}[dr]_{\alpha^\lambda} % chktex 3
		& & G \fin{A} \ar@{=>}[dl]^{\alpha^\rho} \\ % chktex 3
		& \fin{B} &
}
\]
commutes. This has a straightforward proof by induction; for the successor case one uses that $\kappa$ is a morphism in $\DL{A,B}$,
and for limit ordinals $j$ the universal property of the limit $B_j$, the induction hypothesis and the definition of $\alpha^\lambda_j$
and $\alpha^\rho_j$.

For the final point in the statement: if $B_{k+1,k} \colon B_{k+1} \rightarrow B_{k}$ is an isomorphism, then $B_{k+1,k}^{-1} \colon
B_k \rightarrow B(B_{k+1})$ is a final $B$-coalgebra, and similarly for $A$ and $A_k$. Hence,
to show that $\alpha_k$ is the unique map induced by $\lambda$ as in~\eqref{eq:unique-map},
it suffices to show that the following diagram commutes:
\[
\xymatrix{
	FA_k \ar[r]^{\alpha_k}\ar[d]_{FA_{k+1,k}^{-1}} % chktex 3
		& B_k \ar[dd]^{B_{k+1,k}^{-1}} \\ % chktex 3
	FAA_k \ar[d]_{\lambda_{A_k}} \ar[dr]^{\alpha_{k+1}} % chktex 3
		& \\
	BFA_k \ar[r]_{B\alpha_k} % chktex 3
		& BB_k
}
\]
The triangle commutes by definition of $\alpha$, and the shape above it by naturality and the fact
that $A_{k+1,k}$ and $B_{k+1,k}$ are isomorphisms.
\end{proof}
The natural transformation $\alpha$ arising from $\lambda$ by the above lemma yields a natural transformation $\hat{\alpha} \colon F \Rightarrow \kfab$
due to the universal property of the right Kan extension. This will be shown to be the unique morphism
in $\DL{A,B}$, turning $\kfab$ into the companion. However this requires
a natural transformation $\kfab A \Rightarrow B \kfab$.
For its existence, we assume that $B$ preserves the right Kan extension $\kfab$.
This condition, as well as the concrete form of the
companion computed in this manner, becomes clearer when we instantiate
this result to the case of lattices
(Section~\ref{sec:cod-lattice}) and to $\Set$
(Section~\ref{sec:causal}).

\begin{theorem}\label{thm:companion-codensity}
	Let $A \colon \A \rightarrow \A$ and $B \colon \C \rightarrow \C$
	be endofunctors. %, with final sequences $\fin{A}$ and $\fin{B}$ respectively.
	Suppose
  the right Kan extension $\kfab$ exists and
  $B$ preserves it. Then there is a
  natural transformation $\tau \colon \kfab A \Rightarrow B \kfab$
  such that $(\kfab,\tau)$ is the companion of $(A, B)$.
\end{theorem}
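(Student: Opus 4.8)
The plan is to first build the natural transformation $\tau \colon \kfab A \Rightarrow B\kfab$ out of the preservation hypothesis, and then to deduce finality of $(\kfab,\tau)$ in $\DL{A,B}$ by combining Lemma~\ref{lm:dl-alpha} with the universal property of $\kfab = \Ran{\fin A}{\fin B}$. Write $\epsilon \colon \kfab\fin A \Rightarrow \fin B$ for the counit, and note the functor identities $\fin A(i+1) = A(\fin A(i))$ and $\fin B(i+1) = B(\fin B(i))$, natural in $i$: reindexing $\fin A$ (resp.\ $\fin B$) along the successor functor $i \mapsto i+1$ on $\Ord^{\op}$ yields $A\fin A$ (resp.\ $B\fin B$), directly from the definition of the final sequence.

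Since $B$ preserves $\kfab$, the functor $B\kfab$ is the right Kan extension $\Ran{\fin A}{B\fin B}$ with counit $B\epsilon$. Reindexing $\epsilon$ along the successor functor gives a natural transformation $\kfab A\fin A \Rightarrow B\fin B$ with components $\epsilon_{i+1}$, and the universal property of $\Ran{\fin A}{B\fin B} = B\kfab$ produces a unique $\tau \colon \kfab A \Rightarrow B\kfab$ with $B\epsilon \circ \tau\fin A$ equal to it; componentwise, $\tau$ is uniquely determined by
\[
  B\epsilon_i \circ \tau_{A_i} = \epsilon_{i+1} \qquad \text{for all } i \in \Ord.
\]
This makes $(\kfab,\tau)$ an object of $\DL{A,B}$. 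Reading the displayed equation as $\epsilon_{i+1} = B\epsilon_i \circ \tau_{A_i}$ shows that $\epsilon$ satisfies the recurrence of Lemma~\ref{lm:dl-alpha} applied to the distributive law $\tau$; by the uniqueness in that lemma, the natural transformation $\kfab\fin A \Rightarrow \fin B$ that it associates with $(\kfab,\tau)$ is exactly $\epsilon$.

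For finality, fix $(F,\lambda) \in \DL{A,B}$. Lemma~\ref{lm:dl-alpha} yields $\alpha \colon F\fin A \Rightarrow \fin B$ with $\alpha_{i+1} = B\alpha_i \circ \lambda_{A_i}$ for all $i$, and the universal property of $\kfab = \Ran{\fin A}{\fin B}$ yields a unique $\kappa \colon F \Rightarrow \kfab$ with $\epsilon \circ \kappa\fin A = \alpha$. I claim $\kappa$ is a morphism $(F,\lambda)\to(\kfab,\tau)$ in $\DL{A,B}$, i.e.\ $\tau \circ \kappa A = B\kappa \circ \lambda$. Both sides are natural transformations $FA \Rightarrow B\kfab$, so by the universal property of $B\kfab = \Ran{\fin A}{B\fin B}$ it suffices to check equality after whiskering with $\fin A$ and postcomposing with $B\epsilon$; unwinding the reindexings and using the defining equation of $\tau$ (components $\epsilon_{i+1}$), the relation $\epsilon\circ\kappa\fin A = \alpha$, and naturality, the left-hand side reduces to the family $i \mapsto \alpha_{i+1}$ and the right-hand side to $i \mapsto B\alpha_i \circ \lambda_{A_i}$, which coincide by the defining recurrence of $\alpha$. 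For uniqueness, if $\kappa' \colon (F,\lambda)\to(\kfab,\tau)$ is any morphism in $\DL{A,B}$, then by the functoriality part of Lemma~\ref{lm:dl-alpha} it is a morphism $(F,\alpha) \to (\kfab,\epsilon)$ in $\K(\fin A,\fin B)$ — using that $\epsilon$ is the transformation associated with $(\kfab,\tau)$ — that is, $\epsilon\circ\kappa'\fin A = \alpha$; since $(\kfab,\epsilon)$ is final in $\K(\fin A,\fin B)$, this forces $\kappa' = \kappa$. Hence $(\kfab,\tau)$ is final in $\DL{A,B}$, as required.

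The crux is the claim that $\kappa$ is a $\DL{A,B}$-morphism: this is the only place where the preservation hypothesis is genuinely used, since it is what lets one test the equation $\tau\circ\kappa A = B\kappa\circ\lambda$ against the counit $B\epsilon$ of $B\kfab$, and it requires some care in unwinding the whiskerings along $\fin A$ and the successor functor so as to land on the componentwise recurrence of Lemma~\ref{lm:dl-alpha}. Everything else is routine manipulation of universal properties.
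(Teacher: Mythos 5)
Your proof is correct and follows essentially the same route as the paper: $\tau$ is obtained from the preserved right Kan extension $B\kfab$ via the successor reindexing $\fin{B}S = B\fin{B}$, and finality is reduced to the recurrence of Lemma~\ref{lm:dl-alpha} by testing against the counit $B\epsilon$. The only cosmetic difference is that you establish existence and uniqueness of the mediating morphism separately (using the functoriality clause of Lemma~\ref{lm:dl-alpha} for uniqueness), whereas the paper packages both into a single equivalence between the $\DL{A,B}$-morphism condition and the recurrence; the underlying diagram chase is the same.
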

%\begin{proof}[Outline]
%  The preservation assumption means that $(B\cf{B}, B\epsilon)$ is a
%  right Kan extension of $B\fin{B}$ along $\fin{B}$. The natural
%  transformation $\tau$ is defined, using the universal property of
%  $B\epsilon$, as the unique
%  $\tau \colon \cf{B} B \Rightarrow B\cf{B}$ such that
%  $B\epsilon_i \circ \tau_{B_i} = \epsilon_{i+1} \colon \cf{B} B B_i
%  \Rightarrow BB_i$
%  for all $i$.  See the appendix for a full proof.
%\end{proof}

\begin{proof}
%The preservation condition can be phrased as commutativity of the following diagram:
%
%where $\iota$ is an isomorphism, and $\epsilon$ and $\epsilon'$ are the natural transformations
%associated to $\cf{B} = \Ran{\fin{B}}{\fin{B}}$ and $\Ran{B \circ \fin{B}}{\fin{B}}$ respectively.
By assumption, $(B\kfab, B\epsilon)$ is a right Kan extension of $B\fin{B}$
along $\fin{A}$. This means that for all $\alpha \colon H\fin{A} \Rightarrow B\fin{B}$,
there exists a unique $\hat{\alpha} \colon H \Rightarrow B\kfab$
such that $\alpha = B\epsilon \circ \hat{\alpha} {\fin{A}}$.
We use this universal property to define the natural transformation $\tau$, choosing
$H = \kfab A$.

To this end, consider the functor $S \colon \Ord^\op \rightarrow \Ord^\op$ defined by $S(i) = i+1$.
For any $F \colon \C \rightarrow \C$, we have
\begin{equation}\label{eq:b-s}
%\xymatrix{
%	\Ord^\op \ar[r]^{\fin{F}} \ar[d]_{S}
%		& \C \ar[d]^{F} \\
%	\Ord^\op \ar[r]_{\fin{F}}
%		& \C
%}
\fin{F} S = F \fin{F} \,
\end{equation}
simply expressing that $F_{i+1} = FF_i$ and $F_{j+1,i+1} = FF_{j,i}$ for all $i\leq j$, which both hold by definition of the final sequence.
As a consequence, there is the natural transformation on the top row of the diagram below:
\begin{equation}\label{eq:tau-in-proof}
\begin{gathered}
\xymatrix{
	\kfab A \fin{A} \ar@{=}[r] \ar@{=>}[d]_{\tau \fin{A}} % chktex 3
		& \kfab \fin{A} S \ar@{=>}[r]^-{\epsilon S} % chktex 3
		& \fin{B} S  \ar@{=}[r]
		& B\fin{B} \\
	B\kfab \fin{A} \ar@{=>}[urrr]_{B\epsilon} % chktex 3
		& & &
}
\end{gathered}
\end{equation}
By the universal property of $(B \kfab, B\epsilon)$ we obtain
$\tau \colon \kfab B \Rightarrow B \kfab$
as the unique natural transformation making the above diagram~\eqref{eq:tau-in-proof} commute.

We now show that
$(\kfab, \tau)$ is the companion of $(A,B)$, i.e., that it is final in the category $\DL{A,B}$.
Let $\lambda \colon FA \Rightarrow BF$ be a natural transformation.
We need to prove that there exists a unique
$\hat{\alpha} \colon F \Rightarrow \kfab$
making the following diagram commute:
\begin{equation}\label{eq:psi-morph}
\begin{gathered}
\xymatrix{
	FA \ar@{=>}[r]^-{\hat{\alpha} A } \ar@{=>}[d]_{\lambda} % chktex 3
		& \kfab A \ar@{=>}[d]^{\tau}\\ % chktex 3
	BF \ar@{=>}[r]_-{B\hat{\alpha}} % chktex 3
		& B \kfab
}
\end{gathered}
\end{equation}
First, observe that for every natural transformation of the form
$\alpha \colon F\fin{A} \Rightarrow \fin{B}$,
there is a unique $\hat{\alpha} \colon F \Rightarrow \kfab$
such that $\epsilon \circ \hat{\alpha}{\fin{A}} = \alpha$,
by the universal property of $\epsilon$.
We prove that $\hat{\alpha}$ satisfies~\eqref{eq:psi-morph}
if and only if $\alpha$ makes the following diagram commute:
\begin{equation}\label{eq:lambda-alpha}
\begin{gathered}
\xymatrix{
	FA\fin{A} \ar@{=>}[rr]^{\lambda \fin{A}} \ar@{=}[d] % chktex 3
		&
		& BF\fin{A} \ar@{=>}[d]^{B\alpha} \\ % chktex 3
	F\fin{A} S \ar@{=>}[r]_{\alpha S} % chktex 3
		& \fin{B} S \ar@{=}[r]
		& B\fin{B}
}
\end{gathered}
\end{equation}
By Lemma~\ref{lm:dl-alpha}, $\lambda$ induces a unique $\alpha$ making the above diagram commute.
Hence, it then follows that $\hat{\alpha}$ is the unique morphism to $\tau$.

By the universal property of $B\epsilon$,~\eqref{eq:psi-morph} commutes
if and only if the following equation holds:
\begin{equation}\label{eq:equation-bar-psi}
	B\epsilon \circ \tau {\fin{A}} \circ \hat{\alpha} {A\fin{A}}
	= B\epsilon \circ B\hat{\alpha} {\fin{A}} \circ \lambda {\fin{A}}
\end{equation}
Hence, it suffices to prove that~\eqref{eq:lambda-alpha} is equivalent to~\eqref{eq:equation-bar-psi}.

Consider the following diagram:
\[
\xymatrix@R=0.5cm@C=0.5cm{
	FA\fin{A} \ar@{=>}[rrrr]^{\hat{\alpha}A\fin{A}} \ar@{=>}[dddd]_{\lambda \fin{A}} \ar@{=}[dr] % chktex 3
		& & &
		& \kfab A \fin{A} \ar@{=>}[dddd]^{\tau \fin{A}} \ar@{=}[dl] \\ % chktex 3
		& F\fin{A} S \ar@{=>}[rr]^{\hat{\alpha}\fin{A}S} \ar@{=>}[dr]_{\alpha S} % chktex 3
		& & \kfab \fin{A} S \ar@{=>}[dl]^{\epsilon S} % chktex 3
		& \\
		& & \fin{B} S \ar@{=}[d]
		& & \\
		& & B \fin{B}
		& & \\
	B F \fin{A} \ar@{=>}[rrrr]_{B\hat{\alpha}\fin{A}} \ar@{=>}[urr]^{B\alpha} % chktex 3
		& & &
		& B\kfab\fin{A} \ar@{=>}[ull]_{B\epsilon} % chktex 3
}
\]
The two triangles commute by definition of $\hat{\alpha}$, the upper trapezoid
by the equality $A\fin{A} = \fin{A} S$, the right trapezoid by definition of $\tau$.
The left trapezoid is~\eqref{eq:lambda-alpha}. The equivalence of~\eqref{eq:lambda-alpha}
and~\eqref{eq:equation-bar-psi} follows from a straightforward diagram chase.
\end{proof}

In case $A=B$, the right Kan extension $\kfab$ is the codensity monad $\cf{B}$. It turns out that the unit and multiplication
of this codensity monad coincide with the unique monad structure induced on the companion by
Theorem~\ref{thm:monad-on-final}. This follows from uniqueness of such a monad structure turning $\tau$
into a distributive law, together with the following theorem.

\begin{theorem}\label{thm:tau-dl}
  Let $B \colon \C \rightarrow \C$ be an endofunctor such that the
  right Kan extension $\cf B$ exists and $B$ preserves it.  We have
  that $\tau$ is a distributive law of the codensity monad
  $(\cf{B},\eta,\mu)$ over $B$.
\end{theorem}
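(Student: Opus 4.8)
The plan is to verify directly the two axioms of a distributive law of a monad over a functor, namely $B\eta = \tau \circ \eta B$ and $\tau \circ \mu B = B\mu \circ \tau\cf{B} \circ \cf{B}\tau$, for the codensity monad structure $(\eta,\mu)$ of~\eqref{eq:def-cod-monad}. The main tool is the universal property of the right Kan extension: since $B$ preserves $\cf{B}$, the pair $(B\cf{B}, B\epsilon)$ is a right Kan extension of $B\fin{B}$ along $\fin{B}$, so two natural transformations $\gamma_1,\gamma_2 \colon H \Rightarrow B\cf{B}$ (for any functor $H$) coincide as soon as $B\epsilon \circ \gamma_1\fin{B} = B\epsilon \circ \gamma_2\fin{B}$. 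I would also keep at hand the defining equation of $\tau$ from the proof of Theorem~\ref{thm:companion-codensity}, namely $B\epsilon \circ \tau\fin{B} = \epsilon S$ under the identification $\fin{B}S = B\fin{B}$ of~\eqref{eq:b-s}, together with the two codensity monad laws $\epsilon \circ \eta\fin{B} = \id$ and $\epsilon \circ \mu\fin{B} = \epsilon \circ \cf{B}\epsilon$ coming from~\eqref{eq:def-cod-monad}.

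For the unit axiom, both sides are natural transformations $B \Rightarrow B\cf{B}$, so by the remark above it suffices to whisker with $\fin{B}$ and post-compose with $B\epsilon$. The left-hand side yields $B(\epsilon \circ \eta\fin{B}) = \id$, and the right-hand side yields $\epsilon S \circ \eta\fin{B}S = (\epsilon \circ \eta\fin{B})S = \id$, using the defining equation of $\tau$ and the unit law of the codensity monad; hence the two sides agree. For the multiplication axiom, both sides are natural transformations $\cf{B}\cf{B}B \Rightarrow B\cf{B}$, so again it suffices to compare them after whiskering with $\fin{B}$ and post-composing with $B\epsilon$. The left-hand side reduces, via the defining equation of $\tau$ and the multiplication law of the codensity monad, to $(\epsilon \circ \cf{B}\epsilon)S$. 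For the right-hand side I would first use naturality of $\tau$ applied to the natural transformation $\epsilon$ to rewrite $B\cf{B}\epsilon \circ \tau\cf{B}\fin{B}$ as $\tau\fin{B} \circ \cf{B}B\epsilon$, then apply $\cf{B}$ to the defining equation of $\tau$, and finally the defining equation of $\tau$ once more, arriving at $(\epsilon \circ \cf{B}\epsilon)S$ as well; the two computations match.

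The main obstacle is purely bookkeeping: one must handle carefully the several whiskerings and middle-four-interchange steps relating $\tau$, $\epsilon$, $\eta$, $\mu$ and the shift endofunctor $S$ on $\Ord^\op$, together with the identity $\fin{B}S = B\fin{B}$, and one must be sure that the pointwise comparison of morphisms into $B\cf{B}$ is exactly the universal property granted by $B$ preserving the Kan extension. Once these identifications are fixed, each axiom collapses to the corresponding monad law of the codensity monad, so no genuinely new ingredient is needed. Combined with the uniqueness statement of Theorem~\ref{thm:monad-on-final}, this also yields that the codensity monad structure $(\eta,\mu)$ coincides with the monad structure induced on the companion.
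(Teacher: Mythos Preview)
Your proposal is correct and follows essentially the same approach as the paper: both proofs use the universal property of the preserved right Kan extension $(B\cf{B},B\epsilon)$ to reduce each axiom to an equation after post-composing with $B\epsilon$, and then discharge these equations using the defining property $B\epsilon \circ \tau\fin{B} = \epsilon S$ of $\tau$, the codensity monad laws from~\eqref{eq:def-cod-monad}, and (for the multiplication axiom) naturality of $\tau$. The paper presents this via commutative diagrams whereas you write it equationally, but the content and the order of the steps are the same.
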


\begin{proof}
%We prove that $\tau$ is a distributive law of the codensity monad $(\cf{B},\eta,\mu)$
%over $B$.
%To this end, recall that $\eta$ and $\mu$ are defined as the unique natural transformations
%making the following diagrams commute:
%\begin{equation}\label{eq:def-cod-monad}
%\begin{gathered}
%\xymatrix{
%	\fin{B} \ar@{=>}[r]^{\eta \fin{B}} \ar@{=}[dr]
%		& \cf{B} \fin{B} \ar@{=>}[d]^{\epsilon} \\
%		& \fin{B}
%}
%\qquad
%\xymatrix{
%	\cf{B} \cf{B} \fin{B} \ar@{=>}[r]^{\cf{B} \epsilon} \ar@{=>}[d]_{\mu \fin{B}}
%		& \cf{B} \fin{B} \ar@{=>}[d]^{\epsilon} \\
%	\cf{B} \bar{B} \ar@{=>}[r]_{\epsilon}
%		& \fin{B}
%}
%\end{gathered}
%\end{equation}
For the unit axiom, we need to prove $\tau \circ \eta B = B \eta$,
%$$
%\xymatrix{
%	B \ar@{=>}[r]^{\eta B} \ar@{=>}[dr]_{B\eta}
%		& \cf{B} B \ar@{=>}[d]^{\tau} \\
%		& B\cf{B}
%}
%$$
which we do by showing that $B\epsilon \circ \tau {\fin{B}} \circ \eta {B\bar{B}} = B\epsilon \circ B\eta {\bar{B}}$;
the desired equality then follows by the universal property of the right Kan extension $(B\cf{B},B\epsilon)$.
The equality follows from commutativity of:
\[
\xymatrix{
	& & \cf{B} B \bar{B} \ar@{=}[d] \ar@{=>}[ddr]^{\tau \bar{B}} & \\ % chktex 3
  B \bar{B} \ar@{=}[r] \ar@{=>}[urr]^{\eta B \fin{B}} \ar@{=>}[dd]_{B\eta\bar{B}} \ar@{=}[drdr] % chktex 3
	& \bar{B} S \ar@{=}[dr]\ar@{=>}[r]^{\eta \bar{B} S} % chktex 3
	& \cf{B} \bar{B} S \ar@{=>}[d]^{\epsilon S} & \\ % chktex 3
	& & \bar{B} S \ar@{=}[d] & B \cf{B} \bar{B} \ar@{=>}[dl]^{B\epsilon}\\ % chktex 3
  B \cf{B} \bar{B} \ar@{=>}[rr]_{B\epsilon} % chktex 3
  	& & B\bar{B}
}
\]
The two triangles within the big square commute by definition of $\eta$~\eqref{eq:def-cod-monad},
the upper left triangle and the trapezoid in the square since $\bar{B} S = B \bar{B}$ (see~\eqref{eq:b-s}),
and the right triangle by definition of $\tau$ (see~\eqref{eq:tau-in-proof}).

For the other axiom, we need to prove $\tau \circ \mu B = B \mu \circ \tau \cf{B} \circ \cf{B} \tau$
%$$
%\xymatrix{
%	\cf{B} \cf{B} B \ar@{=>}[r]^{\mu B} \ar@{=>}[d]_{\cf{B} \tau}
%		& \cf{B} B \ar@{=>}[dd]^{\tau} \\
%	\cf{B} B \cf{B} \ar@{=>}[d]_{\tau \cf{B}}
%		& \\
%	B\cf{B} \cf{B} \ar@{=>}[r]_{B\mu}
%		& B \cf{B}
%}
%$$
which, in a similar manner as above for the unit, follows from the universal property of $B\epsilon$
and commutativity of the following diagram.
\[
\xymatrix{
	\cf{B}\cf{B} B  \fin{B} \ar@{=>}[rr]^{\mu B\fin{B}} \ar@{=}[dr] \ar@{=>}[ddd]_{\cf{B} \tau \fin{B}} % chktex 3
		& & \cf{B} B \fin{B} \ar@{=}[d] \ar@{=>}[r]^{\tau \fin{B}} % chktex 3
		& B \cf{B} \fin{B} \ar@{=>}[dddd]^{B\epsilon} \\ % chktex 3
		& \cf{B}\cf{B}\fin{B} S   \ar@{=>}[r]^{\mu {\bar{B}S}} \ar@{=>}[d]_{\cf{B}\epsilon S} % chktex 3
		& \cf{B} \bar{B} S \ar@{=>}[d]^{\epsilon S} % chktex 3
		& \\
		& \cf{B} \fin{B} S \ar@{=>}[r]_{\epsilon S} \ar@{=}[d] % chktex 3
		&  \fin{B} S \ar@{=}[ddr]
		& \\
	\cf{B}B\cf{B}\fin{B} \ar@{=>}[dd]_{\tau {\cf{B}\fin{B}}} \ar@{=>}[r]^{\cf{B} B \epsilon} % chktex 3
		& \cf{B} B \fin{B} \ar@{=>}[d]_{\tau \fin{B}} % chktex 3
		&
		& \\
		& B\cf{B} \fin{B} \ar@{=>}[rr]^{B\epsilon} % chktex 3
		&
		& B\fin{B}\\
	B\cf{B}\cf{B} \bar{B} \ar@{=>}[ur]^{B\cf{B}\epsilon} \ar@{=>}[rrr]_{B\mu \fin{B}} % chktex 3
	 & & & B \cf{B} \fin{B} \ar@{=>}[u]_{B\epsilon} % chktex 3
}
\]
The square in the middle commutes by definition of $\mu$ (see~\eqref{eq:def-cod-monad}).
The rest commutes, clockwise starting from the north, by
the equality $\bar{B} S = B \bar{B}$ (see~\eqref{eq:b-s}), twice definition of $\tau$
(see~\eqref{eq:tau-in-proof}), definition of $\mu$ (the south), naturality of $\tau$
and again definition of $\tau$.
\end{proof}

%Every distributive law induces an algebra on the final coalgebra (Section~\ref{sec:dl}).
The following result characterises the algebra induced on the final
coalgebra by the distributive law of the companion, in terms of the
counit $\epsilon$ of the codensity monad of $\fin{B}$. This plays an
important role for the case $\C=\Set$ (Section~\ref{sec:comp-set}).

\begin{lemma}\label{lm:alg-codensity}
  Let $A$ and $B$ be endofunctors such that the right Kan extension
  $\kfab$ exists and $B$ preserves it.
  % Let $(\kfab, \epsilon)$
  % be the right Kan extension, forming the companion with the natural
  % transformation $\tau$.
  If $A_{k+1,k}$ and $B_{k+1,k}$ are both isomorphisms for some $k$,
  then $\epsilon_k \colon \kfab A_k \rightarrow B_k$ is the unique map
  induced by $\tau$ as in the diagram below, where $F$ is a shorthand
  for $\kfab$.
 \[
	\xymatrix{
		FA_k \ar[r]^{\epsilon_k}\ar[d]_{FA_{k+1,k}^{-1}} % chktex 3
			& B_k \ar[dd]^{B_{k+1,k}^{-1}} \\ % chktex 3
		FAA_k \ar[d]_{\tau_{A_k}} % chktex 3
			& \\
		BFA_k \ar[r]_{B\epsilon_k} % chktex 3
			& BB_k
	}
 \]
\end{lemma}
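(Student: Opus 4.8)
The plan is to deduce this from Lemma~\ref{lm:dl-alpha}, applied to the distributive law $\lambda = \tau$ supplied by Theorem~\ref{thm:companion-codensity}. That lemma associates with $\tau$ a natural transformation $\alpha \colon F\fin{A} \Rightarrow \fin{B}$ (write $F = \kfab$), uniquely determined by the successor equations $\alpha_{i+1} = B\alpha_i \circ \tau_{A_i}$, and its last clause states that, under the hypothesis that $A_{k+1,k}$ and $B_{k+1,k}$ are isomorphisms, $\alpha_k$ is exactly the unique map induced by $\tau$ in the sense of~\eqref{eq:unique-map} --- which is precisely commutativity of the square displayed in the statement. So the whole lemma reduces to the identity $\epsilon = \alpha$, i.e.\ to identifying the counit of the right Kan extension $\kfab$ with the transformation that Lemma~\ref{lm:dl-alpha} extracts from $\tau$.

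To prove $\epsilon = \alpha$ I would argue by uniqueness rather than by a fresh transfinite induction. The counit $\epsilon \colon F\fin{A} \Rightarrow \fin{B}$ is itself a natural transformation of the right type, so by the uniqueness part of Lemma~\ref{lm:dl-alpha} it is enough to check that $\epsilon$ satisfies the successor equations $\epsilon_{i+1} = B\epsilon_i \circ \tau_{A_i}$ for every ordinal $i$. But this is just the defining property of $\tau$: in the proof of Theorem~\ref{thm:companion-codensity}, $\tau$ was introduced as the unique natural transformation making diagram~\eqref{eq:tau-in-proof} commute, which --- after the identifications $\fin{A}S = A\fin{A}$ and $\fin{B}S = B\fin{B}$ of~\eqref{eq:b-s}, with $S$ the successor functor on $\Ord^{\op}$ --- reads $\epsilon S = B\epsilon \circ \tau\fin{A}$. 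Reading off the component at an ordinal $i$ gives $(\epsilon S)_i = \epsilon_{i+1}$, $(B\epsilon)_i = B\epsilon_i$ and $(\tau\fin{A})_i = \tau_{A_i}$, hence exactly $\epsilon_{i+1} = B\epsilon_i \circ \tau_{A_i}$. Therefore $\epsilon = \alpha$, and the lemma follows by the final clause of Lemma~\ref{lm:dl-alpha}.

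The content of the argument is almost entirely bookkeeping, and I expect the only delicate point to be the passage from the whiskered equation $\epsilon S = B\epsilon \circ \tau\fin{A}$ defining $\tau$ to the pointwise successor equation of Lemma~\ref{lm:dl-alpha}; this hinges on unwinding the equalities $\fin{A}S = A\fin{A}$, $\fin{B}S = B\fin{B}$ and on how horizontal composition acts on components. No genuine transfinite induction is needed in this proof, since it is already packaged inside Lemma~\ref{lm:dl-alpha}, and no use is made of $B$ preserving $\kfab$ beyond what is needed for $\tau$ itself to be available.
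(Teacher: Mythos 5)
Your proposal is correct and follows essentially the same route as the paper: the paper's proof likewise reads off $\epsilon_{i+1} = B\epsilon_i \circ \tau_{A_i}$ from the defining diagram of $\tau$ and then invokes the uniqueness and final clause of Lemma~\ref{lm:dl-alpha}. Your version just spells out the whiskering bookkeeping that the paper leaves implicit.
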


\begin{proof}
  By definition of $\tau$ in the proof of
  Theorem~\ref{thm:companion-codensity}, we have
  $B\epsilon_i \circ \tau_{A_i} = \epsilon_{i+1}$ for all $i$.
  The result follows by Lemma~\ref{lm:dl-alpha}.
  %Hence,
%  the triangle commutes:
%    $$
%	\xymatrix{
%		FA_k \ar[r]^{\alpha_k}\ar[d]_{FA_{k+1,k}^{-1}}
%			& B_k \ar[dd]^{B_{k+1,k}^{-1}} \\
%		FAA_k \ar[d]_{\tau_{A_k}} \ar[dr]^{\epsilon_{k+1}}
%			& \\
%		BFA_k \ar[r]_{B\epsilon_k}
%			& BB_k
%	}
%  $$
%  The other inner shape commutes by naturality of $\alpha$ and that both $A_{k+1,k}$ and $B_{k+1,k}$ are isomorphisms.
\end{proof}

\begin{corollary}
  Let $B \colon \C \rightarrow \C$ be an endofunctor such that the
  right Kan extension $\cf B$ exists and $B$ preserves it.  Let
  $(\cf{B},\epsilon)$ be the codensity monad of $\fin{B}$, with
  distributive law $\tau$ and monad structure
  $(\cf{B},\eta,\mu)$. Further, suppose $\C$ has an initial object
  $0$. If $B_{k+1,k}$ is an isomorphism for some $k$, then $\mu_0$ is
  isomorphic (as a $\cf{B}$-algebra) to $\epsilon_k$, i.e., there is
  an isomorphism $\iota \colon \cf{B}0 \rightarrow B_k$ (in $\C$)
  making the following diagram commute.
  \begin{equation}\label{eq:mu-eps-iso}
  \vcenter{
  \xymatrix{
  	\cf{B} \cf{B} 0 \ar[r]^{\cf{B} \iota} \ar[d]_{\mu_0} % chktex 3
  		& \cf{B} B_k \ar[d]^{\epsilon_k} \\ % chktex 3
  	\cf{B} 0 \ar[r]_{\iota}^{\cong} % chktex 3
  		& B_k
  }
  }
  \end{equation}
\end{corollary}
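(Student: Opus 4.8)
The plan is to realise $\iota$ as the unique $B$-coalgebra isomorphism between two final $B$-coalgebras, and then to observe that, purely on general grounds, any coalgebra morphism into a final coalgebra is automatically a morphism of the $\tau$-induced algebras; applying this to $\iota$ yields~\eqref{eq:mu-eps-iso}.

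First I would set up the two final $B$-coalgebras. By Theorem~\ref{thm:companion-codensity} with $A=B$, the pair $(\cf{B},\tau)$ is the companion of $B$, and by Theorem~\ref{thm:tau-dl} $\tau$ is a distributive law of the monad $(\cf{B},\eta,\mu)$ over $B$. Since $\C$ has an initial object, Theorem~\ref{thm:T0final} gives that $(\cf{B}0,\,\tau_0\circ\cf{B}!_{B0})$ is a final $B$-coalgebra whose $\tau$-induced algebra, in the sense of~\eqref{eq:induced-alg}, is $\mu_0$. On the other side, since $B_{k+1,k}$ is an isomorphism, $B_{k+1,k}^{-1}\colon B_k\to BB_k$ is a final $B$-coalgebra, and Lemma~\ref{lm:alg-codensity} with $A=B$ says that $\epsilon_k\colon\cf{B}B_k\to B_k$ is exactly its $\tau$-induced algebra (the diagram in that lemma is precisely~\eqref{eq:induced-alg} for $\zeta=B_{k+1,k}^{-1}$). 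Let $\iota\colon\cf{B}0\to B_k$ be the unique $B$-coalgebra morphism between these two final coalgebras; as a morphism between final objects it is an isomorphism in $\C$.

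The main step is a general claim: for any $B$-coalgebra morphism $h\colon(Z_1,\zeta_1)\to(Z_2,\zeta_2)$ with $(Z_2,\zeta_2)$ final, writing $\alpha_i\colon\cf{B}Z_i\to Z_i$ for the algebra induced by $\tau$ on $(Z_i,\zeta_i)$, one has $h\circ\alpha_1=\alpha_2\circ\cf{B}h$. To see this, equip each $\cf{B}Z_i$ with the coalgebra structure $\tau_{Z_i}\circ\cf{B}\zeta_i\colon\cf{B}Z_i\to B\cf{B}Z_i$; by~\eqref{eq:induced-alg} each $\alpha_i$ is then a coalgebra morphism $(\cf{B}Z_i,\tau_{Z_i}\circ\cf{B}\zeta_i)\to(Z_i,\zeta_i)$. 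Moreover $\cf{B}h$ is a coalgebra morphism $(\cf{B}Z_1,\tau_{Z_1}\circ\cf{B}\zeta_1)\to(\cf{B}Z_2,\tau_{Z_2}\circ\cf{B}\zeta_2)$: indeed $\tau_{Z_2}\circ\cf{B}\zeta_2\circ\cf{B}h=\tau_{Z_2}\circ\cf{B}(Bh\circ\zeta_1)=\tau_{Z_2}\circ\cf{B}Bh\circ\cf{B}\zeta_1=B\cf{B}h\circ\tau_{Z_1}\circ\cf{B}\zeta_1$, using that $h$ is a coalgebra morphism and naturality of $\tau$. Hence both $h\circ\alpha_1$ and $\alpha_2\circ\cf{B}h$ are coalgebra morphisms from $(\cf{B}Z_1,\tau_{Z_1}\circ\cf{B}\zeta_1)$ to the final coalgebra $(Z_2,\zeta_2)$, and therefore coincide.

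Finally I would apply this claim to $h=\iota$. Combined with the identifications $\alpha_1=\mu_0$ and $\alpha_2=\epsilon_k$ from the first step, it yields $\iota\circ\mu_0=\epsilon_k\circ\cf{B}\iota$, which is precisely the commutativity of~\eqref{eq:mu-eps-iso}; and $\iota$ is an isomorphism in $\C$, as required. The only mildly delicate point is the verification that $\cf{B}h$ transports the canonical coalgebra structures on the $\cf{B}Z_i$, but as shown this is a one-line computation from naturality of $\tau$; everything else is assembly of the previously established results.
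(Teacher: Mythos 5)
Your proposal is correct and follows essentially the same route as the paper: identify $\mu_0$ and $\epsilon_k$ as the $\tau$-induced algebras on the two final coalgebras $\cf{B}0$ and $B_k$ (via Theorem~\ref{thm:T0final} and Lemma~\ref{lm:alg-codensity}), take $\iota$ to be the unique coalgebra isomorphism between them, and conclude. The paper dismisses the last step as ``easy to establish''; your explicit finality argument --- that $\iota\circ\mu_0$ and $\epsilon_k\circ\cf{B}\iota$ are both coalgebra morphisms from $(\cf{B}\cf{B}0,\,\tau_{\cf{B}0}\circ\cf{B}\zeta_1)$ into the final coalgebra $B_k$, hence coincide --- is exactly the intended filling-in and is verified correctly.
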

\begin{proof}
  By Theorem~\ref{thm:tau-dl}, $\tau$ is a distributive law of the codensity
  monad $(\cf{B},\eta,\mu)$ over $B$.  Since $\cf{B}$ is the companion
  of $B$ (Theorem~\ref{thm:companion-codensity}), $\eta$ and $\mu$ coincide with the natural transformations
  in Theorem~\ref{thm:monad-on-final}.  By Theorem~\ref{thm:T0final},
  $\cf{B}0$ is the carrier of a final coalgebra, and $\mu_0$ is the algebra induced
  on it by $\tau$.

  Since $B_{k+1,k}$ is an isomorphism, $B_k$ is a final coalgebra, and hence, since $\cf{B}0$ is also final,
  there is an isomorphism (of coalgebras) $\iota \colon \cf{B}0 \rightarrow B_k$.
  Further, by Lemma~\ref{lm:alg-codensity}, $\epsilon_k$ is the algebra
  induced by $\tau$ on the final $B$-coalgebra $B_k$. It is now easy to establish~\eqref{eq:mu-eps-iso}.
\end{proof}

%\begin{corollary}\label{cor:alg-codensity}
%	If the hypotheses of Theorem~\ref{thm:companion-codensity} are met, and $\C$ has an initial object $0$, then
%	$\cf{B} 0$ is the carrier of a final $B$-coalgebra, and
%	$\mu_0 \colon \cf{B} \cf{B} 0 \rightarrow \cf{B}0$ is the
%	algebra induced on it by $\tau$.
%\end{corollary}
%\begin{proof}
%	By Theorem~\ref{prop:monad-on-final}, $\eta$ and $\mu$
%	are the unique natural transformations such that $\tau$ is a distributive law
%	of $(\cf{B},\eta,\mu)$ over $B$. By Corollary~\ref{cor:final-0}, $\cf{B}0$ is the
%	carrier of a final coalgebra. Finally by Proposition~\ref{prop:alg-on-final},
%	$\mu_0$ is indeed the algebra induced on $\cf{B}0$.
%\end{proof}
%
%
%
%\begin{lemma}\label{lm:cont-eps-mu}
%	Let $B \colon \C \rightarrow \C$ be $\omega$-continuous,
%	and suppose $\C$ has an initial object $0$.
%	The algebra $\epsilon_\omega \colon \cf{B} B_\omega \rightarrow B_\omega$
%	is isomorphic to the algebra $\mu_0 \colon \cod{F}\cod{F} 0 \rightarrow \cod{F} 0$,
%	where $\mu$ is the multiplication of the codensity monad.
%\end{lemma}

\section{Codensity and the companion of a monotone function}\label{sec:cod-lattice}

Throughout this section, let $b \colon L \rightarrow L$ be a monotone
function on a complete lattice.  By
Theorem~\ref{thm:companion-codensity}, the companion of a monotone
function $b$ (viewed as a functor on a poset category) is given by the
right Kan extension of the final sequence
$\fin{b} \colon \Ord^{\op} \rightarrow L$ along itself. Using
Lemma~\ref{lm:pw-codensity}, we obtain the characterisation of the
companion given in the Introduction~(\ref{eq:pw16}).

\begin{theorem}
  The companion $t$ of a monotone function $b$ on a complete lattice is given by
  \[t: x\mapsto \bigwedge_{x \leq b_i} b_i\]
\end{theorem}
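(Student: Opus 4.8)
The plan is to combine Theorem~\ref{thm:companion-codensity} with the pointwise formula for right Kan extensions in Lemma~\ref{lm:pw-codensity}, specialised to the poset case. First I would observe that a monotone function $b$ on a complete lattice $L$, viewed as a functor on the poset category, has a final sequence $\fin{b}\colon\Ord^{\op}\rightarrow L$ which is exactly the sequence $b_i$ of Equation~(\ref{eq:pw16}): $b_0=\top$ is the final object, $b_{i+1}=b(b_i)$, and $b_j=\bigwedge_{i<j}b_i$ at limit ordinals, since limits in a poset category are infima. Because $L$ is complete, $B$ (i.e.\ $b$) is automatically a monotone endofunctor and preserves all limits, so in particular it preserves the right Kan extension $\cf{b}=\Ran{\fin{b}}{\fin{b}}$ as soon as the latter exists. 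Existence is guaranteed by Lemma~\ref{lm:pw-codensity}, since in a complete lattice the relevant limits (infima) always exist. Hence Theorem~\ref{thm:companion-codensity} applies and the companion $t$ of $b$ is the underlying functor of $\cf{b}$.

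Next I would compute $\cf{b}$ pointwise using Lemma~\ref{lm:pw-codensity} with $\C=\D=\E=L$ and $F=G=\fin{b}$. For a fixed $x\in L$, the comma category $\Delta_x/\fin{b}$ has as objects the pairs $(i,f)$ where $i\in\Ord$ and $f\colon x\rightarrow b_i$ is a morphism in $L$ — but a morphism in a poset category is just the assertion $x\leq b_i$. So the objects of $\Delta_x/\fin{b}$ are precisely the ordinals $i$ with $x\leq b_i$. The forgetful functor to $L$ composed with $\fin{b}$ sends such an $i$ to $b_i$, so the limit computing $\cf{b}(x)$ is the infimum $\bigwedge_{x\leq b_i}b_i$. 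This is exactly the claimed formula $t\colon x\mapsto\bigwedge_{x\leq b_i}b_i$.

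The only genuinely delicate point is verifying the hypotheses of Theorem~\ref{thm:companion-codensity}, namely that the right Kan extension exists and that $b$ preserves it; both reduce, in the poset setting, to completeness of $L$ (all infima exist, and a monotone map between complete lattices need not preserve infima in general, but here the relevant statement is that $b\circ\cf{b}$ is again a right Kan extension, which follows because the pointwise limit formula is stable under applying a monotone map only when that map preserves the infima in question — so one should note that this is precisely the content of $b$ preserving the limits, and in a complete lattice we do have to use that $\bigwedge$ of any family exists, while naturality forces the comparison to be an identity). I would phrase this as: since $L$ is complete, Lemma~\ref{lm:pw-codensity} gives existence of $\cf{b}$, and monotonicity together with completeness gives preservation, so Theorem~\ref{thm:companion-codensity} yields that $\cf{b}$ with its induced distributive law is the companion. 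Everything else is the unwinding of the comma category described above, which is routine. I expect no real obstacle beyond making the translation ``morphism in a poset $=$ an inequality'' carefully and confirming the preservation hypothesis, which is immediate in this degenerate setting.
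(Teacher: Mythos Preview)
Your overall strategy matches the paper's: apply Theorem~\ref{thm:companion-codensity} after computing $\cf{b}$ pointwise via Lemma~\ref{lm:pw-codensity}, and your unwinding of the comma category to obtain $\bigwedge_{x\leq b_i} b_i$ is correct.

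However, there is a genuine gap in your treatment of the preservation hypothesis. You correctly note that a monotone map need not preserve infima, but then you dismiss the issue with ``naturality forces the comparison to be an identity'' and ``monotonicity together with completeness gives preservation''. Neither claim is valid. In a poset category every diagram commutes automatically, so naturality is vacuous and cannot force the canonical comparison $b\circ\cf{b}\leq \Ran{\fin b}{b\fin b}$ to be an equality. Concretely, preservation amounts to
\[
b\Bigl(\bigwedge_{x\leq b_i} b_i\Bigr) \;=\; \bigwedge_{x\leq b_i} b(b_i)
\]
for every $x$, and the inequality $\geq$ does not follow from monotonicity of $b$ alone.

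The paper supplies the missing idea: the final sequence $(b_i)$ is decreasing and \emph{stagnates} at some ordinal $\epsilon$. The index set $\{i : x\leq b_i\}$ is downward-closed and closed under limits (since $b_\lambda=\bigwedge_{i<\lambda} b_i$), hence is either all of $\Ord$ or has a greatest element $k$. In either case the infimum $\bigwedge_{x\leq b_i} b_i$ is actually attained as a single term $b_k$ (taking $k=\epsilon$ in the first case), and likewise $\bigwedge_{x\leq b_i} b(b_i)=b(b_k)$, so the equality is immediate. Without this stagnation argument the preservation hypothesis of Theorem~\ref{thm:companion-codensity} remains unjustified.
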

\begin{proof}
  By Lemma~\ref{lm:pw-codensity}, the codensity monad $\cod{\bar{b}}$
  can be computed by
  \[\cod{\bar{b}}(x) = \Ran{\fin{b}}{\fin{b}}(x) =  \bigwedge_{x \leq b_i} b_i\,,\]
  a limit that exists since $L$ is a complete lattice.  We apply
  Theorem~\ref{thm:companion-codensity} to show that $\cod{\bar{b}}$
  is the companion of $b$.  The preservation condition of the theorem
  amounts to the equality
  $b \circ \Ran{\fin{b}}{\fin{b}} = \Ran{\fin{b}}{b \circ \fin{b}}$
  which, by Lemma~\ref{lm:pw-codensity}, in turn amounts to
  \[
  b(\bigwedge_{x \leq b_i} b_i) = \bigwedge_{x \leq b_i} b(b_i)
  \]
  for all $x \in L$. The sequence ${(b_i)}_{i\in\Ord}$ is decreasing and
  stagnates at some ordinal $\epsilon$; therefore, the two
  intersections collapse into their last terms, say $b_k$ and
  $b(b_k)$ (with $k$ the greatest ordinal such that
  $x\not\leq b_{k+1}$, or $\epsilon$ if such an ordinal does
  not exist). The equality follows.
\end{proof}

In fact, the category $\K(b)$ defined in Section~\ref{sec:codensity}
instantiates to the following: an object is a monotone function
$f \colon L \rightarrow L$ such that $f(b_i) \leq b_i$ for all
$i \in \Ord$, and an arrow from $f$ to $g$ exists iff $f \leq g$.  The
companion $t$ is final in this category. This
yields the following characterisation of functions below the
companion.

\begin{proposition}\label{prop:cond-comp-mon}
  Let $t$ be the companion of a monotone function $b$ on a complete
  lattice.  For all monotone functions $f$ we have $f \leq t$ iff
  $\forall i \in \Ord: f(b_i) \leq b_i$.
\end{proposition}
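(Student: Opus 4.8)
The plan is to read the statement off the finality of the companion, once the relevant category of distributive laws has been spelled out in the poset setting --- which is precisely what the discussion immediately preceding the proposition does. By Theorem~\ref{thm:companion-codensity} the companion $t$ is the codensity monad $\cod{\fin{b}}$ of the final sequence, hence the final object of the category $\K(\fin{b})$; and, as recorded above, in the present setting this category has as objects the monotone maps $f\colon L\to L$ with $f(b_i)\leq b_i$ for all $i\in\Ord$, with a morphism from $f$ to $g$ existing exactly when $f\leq g$ (in which case it is unique). So the first step is just to invoke this instantiation.

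Granting it, both implications are one-liners. For ``only if'', I would observe that $t$ is itself an object of $\K(\fin{b})$, so $t(b_i)\leq b_i$ for all $i$, whence $f\leq t$ gives $f(b_i)\leq t(b_i)\leq b_i$ for every $i$. For ``if'', assuming $f$ is monotone with $f(b_i)\leq b_i$ for all $i\in\Ord$, the map $f$ is an object of $\K(\fin{b})$, so finality of $t$ supplies a morphism $f\to t$ in $\K(\fin{b})$, which by the above is exactly $f\leq t$.

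There is no genuine obstacle here --- the proof is pure bookkeeping --- but the point that requires a little care, and which is already dealt with in the paragraph preceding the proposition, is the faithful translation of the abstract definition of $\K(\fin{b})$ into these order-theoretic statements: one must check that in a poset category the naturality squares for $\alpha\colon f\fin{b}\Rightarrow\fin{b}$ and the compatibility triangle for a morphism $f\Rightarrow g$ commute automatically, so that an object collapses to the family of inequalities $f(b_i)\leq b_i$ and a morphism to the single inequality $f\leq g$. Once this is granted, the proposition follows immediately from finality of the companion.
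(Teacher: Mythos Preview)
Your proposal is correct and matches the paper's approach exactly: the paper does not give a separate proof but simply records that $\K(\fin{b})$ instantiates to the poset of monotone $f$ with $f(b_i)\leq b_i$ (ordered pointwise), observes that the companion is final there, and states the proposition as an immediate consequence. Your write-up just makes the two one-line implications explicit.
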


A key intuition about up-to techniques is that they should at least
preserve the greatest fixpoint (i.e., up-to context is valid only when
bisimilarity is a congruence). It is however well-known that this is
not a sufficient condition~\cite{San98MFCS,SW01}. The above
proposition gives a stronger and better intuition: a technique should
preserve all approximations of the greatest fixpoint (the elements of
the final sequence) to be below the companion, and thus sound.

This intuition on complete lattices leads us to the abstract notion of
causality we introduce in the following section.

\section{Causality via right Kan extensions}\label{sec:causal}

We focus on the right Kan extension of the final sequence of $A$
along the final sequence of $B$, where both $A$ and $B$ are $\omega$-continuous $\Set$ endofunctors.
For such functors,
$A_\omega$ and $B_\omega$ are the carriers of the respective final coalgebras, and
Lemma~\ref{lm:codensity-set} provides us with a description of the
codensity monad in terms of natural transformations of the form
${(\fin{A}-)}^X \Rightarrow \fin{B}$. We show that such natural
transformations correspond to a new abstract notion which we call
\emph{causal operations}. Based on this correspondence and
Theorem~\ref{thm:companion-codensity}, we will get a concrete
understanding of the companion of $(A,B)$ in Section~\ref{sec:comp-set}.

\begin{definition}\label{def:causal-alg}
  Let $A,B,F\colon \Set \rightarrow \Set$ be functors.  An \emph{($\omega$)-causal operation} (from $A$ to $B$)
  is a map $\alpha \colon FA_\omega \rightarrow B_\omega$ such that for every set $X$, functions
  $f,g \colon X \rightarrow A_\omega$ and $i < \omega$, we have
  that $A_{\omega, i} \circ f = A_{\omega,i} \circ g$ implies
  $B_{\omega,i} \circ \alpha \circ Ff = B_{\omega,i} \circ \alpha \circ Fg$, i.e.,
  the commutativity of the diagram on the left-hand side below implies commutativity
  of the right-hand side.
  \[
  \xymatrix@R=0.5cm{
    & A_\omega \ar[dr]^{A_{\omega,i}} & \\ % chktex 3
    X \ar[ur]^{f} \ar[dr]_g % chktex 3
    & & A_i \\
    & A_\omega \ar[ur]_{A_{\omega,i}} % chktex 3
  } \quad \quad \quad
  \xymatrix@R=0.5cm{
    & FA_\omega \ar[r]^{\alpha} & B_\omega \ar[dr]^{B_{\omega,i}} & \\ % chktex 3
    FX \ar[ur]^{Ff} \ar[dr]_{Fg} % chktex 3
    & & & B_i \\
    & FA_\omega \ar[r]_{\alpha} % chktex 3
    & B_\omega \ar[ur]_{B_{\omega,i}} % chktex 3
  }
  \]
  If $A=B$ then we refer to such $\alpha$ as a \emph{causal algebra}.
  Further, an \emph{($\omega$)-causal function} on $|V|$ arguments is a causal
  operation where $F = {(-)}^V$.  Equivalently,
  $\alpha \colon {(A_\omega)}^V \rightarrow B_\omega$ is a causal function iff for
  every $h,k \in {(A_\omega)}^V$ and every $i < \omega$:
  \[
  A_{\omega,i} \circ h = A_{\omega,i} \circ k \quad \text{implies} \quad
  B_{\omega,i} (\alpha (h)) = B_{\omega,i} (\alpha(k)) \,.
  \]
  Causal operations form a category $\causal{A,B}$: an object is a pair
  $(F,\alpha \colon FA_\omega \rightarrow B_\omega)$ where $\alpha$ is
  causal, and a morphism from $(F,\alpha)$ to $(G,\beta)$ is a natural
  transformation $\kappa \colon F \Rightarrow G$ such that
  $\beta \circ \kappa_{A_\omega} = \alpha$.
\end{definition}

\begin{example}\label{ex:causal-streams}
  Recall from Example~\ref{ex:finseq-streams} that, for the functor
  $BX = \real\times X$, $B_i$ is the set of lists of length $i$, and in
  particular $B_\omega$ is the set of streams over $\real$.  We focus
  first on causal \emph{functions}. To this end, for
  $\sigma,\tau \in B_\omega$, we write $\sigma \equiv_i \tau$ if
  $\sigma$ and $\tau$ are equal up to $i$, i.e., $\sigma(k) = \tau(k)$
  for all $k < i$. It is easy to verify that a function of the form
  $\alpha \colon {(B_\omega)}^n \rightarrow B_\omega$ is causal iff for
  all $\sigma_1, \ldots, \sigma_n, \tau_1, \ldots, \tau_n$ and all
  $i < \omega$: if $\sigma_j \equiv_i \tau_j$ for all $j \leq n$ then
  $\alpha(\sigma_1, \ldots, \sigma_n) \equiv_i
  \alpha(\tau_1,\ldots,\tau_n)$.

  For instance, taking $n=2$,
  $\mathsf{alt}(\sigma,\tau) = (\sigma(0),\tau(1),\sigma(2),\tau(3),
  \ldots)$
  is causal, whereas
  $\even(\sigma) = (\sigma(0), \sigma(2), \ldots)$ (with
  $n=1$) is not causal. Standard operations from the
  stream calculus~\cite{Rutten05}, such as pointwise stream addition and shuffle product,
  are causal.

  The above notion of causal function (with a finite set of arguments
  $V$) agrees with the standard notion of causal stream function
  (e.g.,~\cite{HansenKR16}). Our notion of causal \emph{algebras}
  generalises it from single functions to algebras for arbitrary
  functors. This includes polynomial functors modelling a
  signature.

  For $A = \real$,
  the algebra
  $\alpha \colon \powf(B_\omega) \rightarrow B_\omega$ for the finite powerset functor $\powf$, defined by
  $\alpha(S)(n) = \min \{\sigma(n) \mid \sigma \in S\}$, is a
  causal algebra which is not a causal function.
  The algebra $\beta \colon \powf(B_\omega) \rightarrow B_\omega$ given by
  $\beta(S)(n) = \sum_{\sigma \in S} \sigma(n)$ is \emph{not} causal
  according to Definition~\ref{def:causal-alg}. Intuitively,
  $\beta(\{\sigma,\tau\})(i)$ depends on equality of $\sigma$ and $\tau$, since addition of real numbers is not idempotent.
\end{example}

\begin{example}\label{ex:causal-even}
  Let $BX = \real \times X$ as above, and
  $AX = \real \times \real \times X$.  Then $A_i$ is the set of lists
  over $\real \times \real$ of length $i$, isomorphic to the set of
  lists over $\real$ of length $2i$. In particular, $A_\omega$ is
  (isomorphic to) the set $\real^\omega$ of all lists over $\real$.
  The projections $A_{\omega,i} \colon A_\omega \rightarrow A_i$ map a
  stream to the first $2i$ elements.

  An $n$-ary causal function from $A$ to $B$ is a map
  $\alpha \colon {(\real^\omega)}^n \rightarrow \real^\omega$ such that
  for all $\sigma_1, \ldots, \sigma_n, \tau_1, \ldots, \tau_n$ and all
  $i < \omega$: if $\sigma_j \equiv_{2i} \tau_j$ for all $j \leq n$
  then
  $\alpha(\sigma_1, \ldots, \sigma_n) \equiv_i
  \alpha(\tau_1,\ldots,\tau_n)$. For instance, the function $\even$
  described in the previous example is a causal function from $A$ to
  $B$ (but not from $B$ to $B$).

  The function $\double \colon \real^\omega \rightarrow \real^\omega$
  given by
  $\double(\sigma) = (\sigma(0),\sigma(0), \sigma(1), \sigma (1),
  \ldots)$ is causal from $B$ to $A$. It is easy to check that causal
  functions compose, so that $\even \circ \double$ is causal from $B$
  to $B$, and $\double \circ \even$ is causal from $A$ to $A$.  Such a
  behaviour is reminiscent of the sized-types approach to
  coinductive data-types in type theory~\cite{AbelP13}.
\end{example}

\begin{example}
  Let $BX = X + 1$. Recall the characterisation
  $B_i = \{0, 1, \ldots, i\}$ from Example~\ref{ex:finseq-plusone}; in
  particular $B_\omega = \mathbb{N} \cup \{\omega\}$, ordered as
  usual, with $\omega$ the top element.  According to
  Definition~\ref{def:causal-alg}, a function
  $f \colon B_\omega \rightarrow B_\omega$ is causal if for all
  $x,y \in B_\omega$, $i \in \mathbb{N}$:
  \[
  (x=y \vee i \leq \min(x, y)) \rightarrow (f(x) = f(y) \vee i
  \leq \min(f(x), f(y))) \,.\] It can be shown that this is equivalent
  to:
  \[
  \forall x,y \in B_\omega .\, (x > f(x) \wedge y > f(x))
  \rightarrow f(x) = f(y) \,.
  \]

  Now let $AX = \real \times X$, and $B$ as above. Using the notation
  $\equiv_i$ from Example~\ref{ex:causal-streams}, a function
  $f \colon \real^\omega \rightarrow \mathbb{N} \cup \{\omega\}$ is
  causal (from $A$ to $B$) if for all $\sigma, \tau \in \real^\omega$
  and all $i \in \mathbb{N}$:
  \[
  \sigma \equiv_i \tau \rightarrow (f(\sigma) = f(\tau) \vee i \leq \min(f(\sigma),f(\tau))) \,.
  \]
  For instance, the function
  $f(\sigma) = \min\{i \mid \sigma(i) = 0 \}$ which computes the
  position of the first zero in $\sigma$, or $\omega$ if zero does
  not appear, is causal.
\end{example}

\begin{example}
  For the functor $BX = 2 \times X^A$, $B_\omega = \pow(A^*)$ is the
  set of languages over $A$ (Example~\ref{ex:finseq-automata}).  Given
  languages $L$ and $K$, we write $L \equiv_i K$ if $L$ and $K$
  contain the same words of length below $i$.  A function
  $\alpha \colon {(\pow(A^*))}^n \rightarrow \pow(A^*)$ is causal iff
  for all languages $L_1, \ldots, L_n, K_1, \ldots, K_n$: if
  $L_j \equiv_i K_j$ for all $j \leq n$ then
  $\alpha(L_1, \ldots, L_n) \equiv_i \alpha(K_1,\ldots,K_n)$. For
  instance, union, concatenation, Kleene star, and shuffle of
  languages are all causal.  An example of a causal algebra which is
  not a causal function is
  $\alpha \colon \pow(\pow(A^*)) \rightarrow \pow(A^*)$ defined by
  union.
\end{example}

The following result connects causal operations to natural
transformations of the form $F \fin{A} \Rightarrow \fin{B}$ (which,
from Section~\ref{sec:codensity}, form a category $\K(\fin{A},\fin{B})$).

\begin{theorem}\label{thm:causal-alg}
  Let $A,B,F \colon \Set \rightarrow \Set$ be functors, and suppose $A$ and $B$ are
  $\omega$-continuous.  The category $\causal{A,B}$ of causal
  operations is isomorphic to the category $\K(\fin{A},\fin{B})$. Concretely,
  there is a one-to-one correspondence
  between natural transformations $\alpha \colon F \fin{A} \Rightarrow \fin{B}$
  and causal algebras $\alpha_\omega \colon FA_\omega \rightarrow B_\omega$.
  \[
  \frac{\alpha \colon F \fin{A} \Rightarrow \fin{B}}
  {\alpha_\omega \colon FA_\omega \rightarrow B_\omega \qquad \text{causal}}
  \]
  From top to bottom, this is given by evaluation at $\omega$.
%  Moreover, we have $\beta \circ \kappa {\fin{B}} = \alpha$ iff
%  $\beta_\omega \circ \kappa_{B_\omega} = \alpha_\omega$ for any
%  $\alpha \colon F \fin{B} \Rightarrow \fin{B}$,
%  $\beta \colon G \fin{B} \Rightarrow \fin{B}$ and
%  $\kappa \colon F \Rightarrow G$.
  Moreover, for any $\alpha \colon F \fin{A} \Rightarrow \fin{B}$,
$\beta \colon G \fin{A} \Rightarrow \fin{B}$ and
$\kappa \colon F \Rightarrow G$, we have
$\beta \circ \kappa \fin{A} = \alpha$ (as on the left below) iff
  $\beta_\omega \circ \kappa {A_\omega} = \alpha_\omega$ (as on the right).
\[
\xymatrix{
	F\fin{A} \ar[rr]^{\kappa \fin{A}} \ar[dr]_{\alpha} % chktex 3
		& & G\fin{A} \ar[dl]^{\beta} \\ % chktex 3
		& \fin{B}
}
\quad
\quad
\xymatrix{
	FA_\omega \ar[rr]^{\kappa_{A_\omega}} \ar[dr]_{\alpha_\omega} % chktex 3
		& & GA_\omega \ar[dl]^{\beta_\omega} \\ % chktex 3
		& B_\omega
}
\]
%for any $\alpha \colon F \fin{A} \Rightarrow \fin{B}$,
%$\beta \colon G \fin{A} \Rightarrow \fin{B}$ and
%$\kappa \colon F \Rightarrow G$.
\end{theorem}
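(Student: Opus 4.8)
The plan is to establish the one-to-one correspondence between natural transformations $\alpha \colon F\fin{A} \Rightarrow \fin{B}$ and causal operations $\alpha_\omega \colon FA_\omega \rightarrow B_\omega$, using $\omega$-continuity of $A$ and $B$ to identify $A_\omega$ and $B_\omega$ as the limits of the respective final sequences. First I would spell out how to pass from a natural transformation $\alpha$ to a map $\alpha_\omega$: this is simply evaluation at the ordinal $\omega$. For the causality condition, suppose $f,g \colon X \rightarrow A_\omega$ satisfy $A_{\omega,i} \circ f = A_{\omega,i} \circ g$ for some $i < \omega$. Then I would compute, using naturality of $\alpha$ at the connecting map $(\omega,i)$ in $\Ord^{\op}$, that $B_{\omega,i} \circ \alpha_\omega \circ Ff = \alpha_i \circ FA_{\omega,i} \circ Ff = \alpha_i \circ F(A_{\omega,i} \circ f) = \alpha_i \circ F(A_{\omega,i} \circ g) = B_{\omega,i} \circ \alpha_\omega \circ Fg$, which is precisely causality. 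So every natural transformation restricts to a causal operation.

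For the converse — the main content of the theorem — I would start from a causal operation $\alpha_\omega \colon FA_\omega \rightarrow B_\omega$ and reconstruct a full natural transformation $\alpha \colon F\fin{A} \Rightarrow \fin{B}$ by transfinite definition along the final sequence. For $i < \omega$, I would define $\alpha_i \colon FA_i \rightarrow B_i$ by exploiting the surjectivity of the projection $A_{\omega,i} \colon A_\omega \rightarrow A_i$ (true for $\omega$-continuous $\Set$-functors, since the final sequence up to $\omega$ is a chain of surjections) and the fact that, by causality applied with $X$ ranging over suitable sets and $f,g$ the relevant choice maps, $B_{\omega,i} \circ \alpha_\omega \circ F(\text{section of }A_{\omega,i})$ does not depend on the chosen section; concretely $\alpha_i$ is determined by requiring $B_{\omega,i} \circ \alpha_\omega = \alpha_i \circ FA_{\omega,i}$. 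One then checks naturality of $\alpha$ in $\Set$ and compatibility with connecting maps $B_{j,i}$ for $i \le j \le \omega$; for $j = \omega$ the limit cone property of $B_\omega = \lim_{i<\omega} B_i$ forces $\alpha_\omega$ to agree with the reconstructed component, giving that the two passages are mutually inverse. For ordinals $j > \omega$ one continues the sequence exactly as in Lemma~\ref{lm:dl-alpha}: at successors set $\alpha_{i+1} = $ whatever is forced (here no distributive law is around, so one just uses that $A_i, B_i$ have stabilised for $i \ge \omega$ when $A,B$ are $\omega$-continuous, so $\alpha_i$ for $i \ge \omega$ is essentially constant), and at limits take the unique mediating map into the limit, as in the cited lemma.

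The hard part will be showing well-definedness of the finite components $\alpha_i$ and, above all, that this reconstruction is inverse to evaluation at $\omega$ — i.e.\ that an arbitrary causal $\alpha_\omega$ equals the $\omega$-component of the natural transformation it generates. This uses crucially that $B_\omega$ is a limit: a natural transformation $F\fin{A} \Rightarrow \fin{B}$ is determined on $A_\omega$ by its components $\alpha_i$ for $i < \omega$ (via the limit cone $(B_{\omega,i})_{i<\omega}$), and causality is exactly the coherence condition needed for the family $(B_{\omega,i}\circ\alpha_\omega)_{i<\omega}$ to factor through those $\alpha_i$'s. The final naturality-square statement — that $\beta \circ \kappa\fin{A} = \alpha$ iff $\beta_\omega \circ \kappa_{A_\omega} = \alpha_\omega$ — then follows readily: the forward direction is evaluation at $\omega$, and the converse follows because both sides, as natural transformations $F\fin{A}\Rightarrow\fin{B}$, agree on all $A_i$ with $i<\omega$ (by restricting the $\omega$-equation along $A_{\omega,i}$ and the already-established correspondence) and hence agree everywhere by the limit argument, which upgrades the object-level isomorphism to an isomorphism of categories $\causal{A,B} \cong \K(\fin{A},\fin{B})$.
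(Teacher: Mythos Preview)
Your proposal is correct and follows essentially the same route as the paper's proof: evaluate at $\omega$ and use naturality for the forward direction; for the converse, define $\alpha_i$ for $i<\omega$ via a section of the surjection $A_{\omega,i}$ (with causality guaranteeing independence of the chosen section), handle $i\geq\omega$ via stabilisation of the sequences, and then verify naturality and the bijection. The paper carries out the naturality check by an explicit three-case analysis ($i,j<\omega$; $i<\omega\leq j$; $\omega\leq i\leq j$) and gives the formula $\alpha_i = B_{i,\omega}^{-1}\circ\alpha_\omega\circ FA_{i,\omega}$ for $i\geq\omega$ rather than invoking Lemma~\ref{lm:dl-alpha} (which concerns distributive laws and is not directly applicable here), but these are presentational differences, not substantive ones.
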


\begin{proof}
	Let $\alpha \colon F\fin{A} \Rightarrow \fin{B}$. We need to prove that $\alpha_\omega$ is causal; to this end,
	let $f,g \colon X \rightarrow FA_\omega$ be functions such
	that $A_{\omega,i} \circ f = A_{\omega,i} \circ g$ for some $i$.
	Then the following diagram commutes:
	\[
	\xymatrix{
		& FA_\omega \ar[r]^{\alpha_\omega} \ar[dr]_{FA_{\omega,i}} & B_\omega \ar[dr]^{B_{\omega,i}} & \\ % chktex 3
	FX \ar[ur]^{Ff} \ar[dr]_{Fg} % chktex 3
		& & FA_i \ar[r]^{\alpha_i} & B_i \\ % chktex 3
		& FA_\omega \ar[ur]^{FA_{\omega,i}} \ar[r]_{\alpha_\omega} % chktex 3
		& B_\omega \ar[ur]_{B_{\omega,i}} % chktex 3
	}
	\]
	by assumption and naturality of $\alpha$. Hence $\alpha_\omega$ is causal.

	Next, we show how to define a natural transformation $\alpha \colon F\fin{A} \Rightarrow \fin{B}$ from a given causal $\alpha_\omega \colon FA_\omega \rightarrow B_\omega$.
	Since $A$ is $\omega$-continuous (similarly for $B$) and any $\Set$ endofunctor preserves epimorphisms, one can prove
	by induction that for any $i<\omega$, the map $A_{\omega,i}$ is an epi. We will use that epis
	in $\Set$ split, i.e., every $A_{\omega,i}$ has a right inverse $A_{\omega,i}^{-1}$
	with $A_{\omega,i} \circ A_{\omega,i}^{-1} = \id$.

	Given $\alpha_\omega \colon FA_\omega \rightarrow B_\omega$, define $\alpha \colon F\bar{A} \Rightarrow \bar{B}$
	on a component $i < \omega$ by
	\[
	\xymatrix{
		FA_i \ar[r]^{F(A_{\omega,i}^{-1})} % chktex 3
			& FA_\omega \ar[r]^{\alpha_\omega} % chktex 3
			& B_\omega \ar[r]^{B_{\omega,i}} % chktex 3
			& B_i
	}
	\]
	where $A_{\omega,i}^{-1}$ is a right inverse of $A_{\omega,i}$. It follows from causality of $\alpha_\omega$
	that the choice of right inverse is irrelevant. On
	a component $i \geq \omega$, $\alpha$ is defined by
	\[
	\xymatrix{
		FA_i \ar[r]^{FA_{i,\omega}} % chktex 3
			& FA_\omega \ar[r]^{\alpha_\omega} % chktex 3
			& B_\omega \ar[r]^{B_{i,\omega}^{-1}} % chktex 3
			& B_i
	}
	\]
	where $B_{i,\omega}^{-1}$ is the inverse of $B_{i,\omega}$ (which is an isomorphism, since $B$ is $\omega$-continuous).

	We need to show that $\alpha$ is a natural transformation, and that the correspondence is bijective.
	For the bijective correspondence, first note that mapping $\alpha_\omega$ to $\alpha$ and back
	trivially yields $\alpha_\omega$ again. Conversely, given $\alpha$, we need to prove that the
	following diagrams commute for $i<\omega$ (on the left)
	and $i \geq \omega$ (on the right):
	\[
	\xymatrix{
	FA_i \ar[r]^{\alpha_i} \ar[d]_{F(A_{\omega,i}^{-1})} % chktex 3
		& B_i \\ % chktex 3
	FA_\omega \ar[r]_{\alpha_\omega} % chktex 3
		& B_\omega \ar[u]_{B_{\omega,i}} % chktex 3
	}
	\qquad
	\xymatrix{
	FA_i \ar[r]^{\alpha_i} \ar[d]_{FA_{i,\omega}} % chktex 3
		& B_i \\
	FA_\omega \ar[r]_{\alpha_\omega} % chktex 3
		& B_\omega \ar[u]_{B_{i,\omega}^{-1}} % chktex 3
	}
	\]
	The case $i<\omega$ follows by naturality of $\alpha$ and since $A_{\omega,i}^{-1}$ is a right inverse of $A_{\omega,i}$,
	the case $i \geq \omega$ by naturality of $\alpha$ and since $B_{i,\omega}^{-1}$ is a (left) inverse of $B_{i,\omega}$.

	It remains to be shown that $\alpha$, defined from a given $\alpha_\omega$ as above, is natural, using that
	$\alpha_\omega$ is causal. To this end, let $i \leq j$; to prove is that the following diagram commutes:
	\[
	\xymatrix{
		FA_i \ar[r]^{\alpha_i} % chktex 3
			& B_i \\ % chktex 3
		FA_j \ar[r]_{\alpha_j} \ar[u]^{FA_{j,i}} % chktex 3
			& B_j \ar[u]_{B_{i,j}} % chktex 3
	}
	\]
	where $\alpha_i$, $\alpha_j$ are defined from $\alpha_\omega$ as above.
	We proceed with a case distinction.

	If $i,j < \omega$, then the following diagram commutes:
	\[
	\xymatrix{
		A_j \ar[r]^{A_{j,i}} \ar[d]_{A_{\omega,j}^{-1}} \ar@{=}[dr] % chktex 3
			& A_i\ar[r]^{A_{\omega,i}^{-1}} \ar@{=}[dr] % chktex 3
			& A_\omega \ar[d]^{A_{\omega,i}} \\ % chktex 3
		A_\omega \ar[r]^{A_{\omega,j}} \ar@/_15pt/[rr]_{A_{\omega,i}} % chktex 3 chktex 25
			& A_j \ar[r]^{A_{j,i}} % chktex 3
			& A_i
	}
	\]
	since $A_{\omega,i}^{-1}$ and $A_{\omega,j}^{-1}$ are right inverses (for the two triangles)
	and the final sequence $\bar{A}$ is a functor (for the crescent). By causality of $\alpha_\omega$
	(and functoriality of $\bar{B}$) we obtain commutativity of:
	\[
	\xymatrix{
		FA_i \ar[r]^{F(A_{\omega,i}^{-1})} % chktex 3
		  & FA_\omega \ar[r]^{\alpha_\omega} % chktex 3
		  & B_\omega \ar[r]^{B_{\omega,i}} % chktex 3
		  & B_i \\
		FA_j \ar[u]^{FA_{j,i}} \ar[r]_{F(A_{\omega,j}^{-1})} % chktex 3
			& FA_\omega \ar[r]_{\alpha_\omega} % chktex 3
			& B_\omega \ar[ur]^(0.4){B_{\omega,i}} \ar[r]_{B_{\omega,j}} % chktex 3
			& B_j \ar[u]_{B_{j,i}} % chktex 3
	}
	\]
	which is what we needed to prove, by definition of $\alpha_i$ and $\alpha_j$.

	If $i < \omega \leq j$, then the following diagram commutes:
	\[
	\xymatrix{
		A_i \ar[r]^{A_{\omega,i}^{-1}} % chktex 3
			& A_\omega \ar[r]^{A_{\omega,i}} % chktex 3
			& A_i \\
		A_j \ar[rr]_{A_{j,\omega}} \ar[u]^{A_{j,i}} % chktex 3
			& & A_\omega \ar[u]_{A_{\omega,i}} % chktex 3
	}
	\]
	since $A_{\omega,i} \circ A_{\omega,i}^{-1} = \id$, and the final sequence $\bar{A}$ is a functor. Hence,
	by causality of $\alpha$ we obtain
	the commutativity of the large inner part in:
	\[
	\xymatrix{
		FA_i \ar[r]^{F(A_{\omega,i}^{-1})} % chktex 3
		  & FA_\omega \ar[r]^{\alpha_\omega} % chktex 3
		  & B_\omega \ar[r]^{B_{\omega,i}} % chktex 3
		  & B_i \\
		FA_j \ar[u]^{FA_{j,i}} \ar[r]_{FA_{j,\omega}} % chktex 3
			& FA_\omega \ar[r]_{\alpha_\omega} % chktex 3
			& B_\omega \ar[ur]^(0.4){B_{\omega,i}} \ar[r]_{B_{j,\omega}^{-1}} % chktex 3
			& B_j \ar[u]_{B_{j,i}} % chktex 3
	}
	\]
	The triangle commutes since $\bar{B}$ is functorial and $B_{j,\omega}^{-1}$ is an inverse
	of $B_{j,\omega}$.

	Finally, if $\omega \leq i \leq j$, then we immediately obtain commutativity of:
	\[
	\xymatrix{
		FA_i \ar[r]^{FA_{i,\omega}} % chktex 3
		  & FA_\omega \ar[r]^{\alpha_\omega} % chktex 3
		  & B_\omega \ar[r]^{B_{i,\omega}^{-1}} % chktex 3
		  & B_i \\
		FA_j \ar[u]^{FA_{j,i}} \ar[r]_{FA_{j,\omega}} \ar[ur]_(0.6){FA_{j,\omega}} % chktex 3
			& FA_\omega \ar[r]_{\alpha_\omega} % chktex 3
			& B_\omega \ar[ur]^(0.4){B_{i,\omega}^{-1}} \ar[r]_{B_{j,\omega}^{-1}} % chktex 3
			& B_j \ar[u]_{B_{j,i}} % chktex 3
	}
	\]
	The triangles commute by functoriality of $\bar{A},\bar{B}$ and the fact that $B_{i,\omega}^{-1}$
	and $B_{j,\omega}^{-1}$ are inverses of $B_{i,\omega}$ and $B_{j,\omega}$ respectively.

	This concludes the one-to-one correspondence between natural transformations $\alpha$ and
	causal algebras $\alpha_\omega$. We turn to the second correspondence in the statement:
	the equivalence \[\beta \circ \kappa {\fin{A}} = \alpha \qquad \text{ iff } \qquad
  \beta_\omega \circ \kappa_{A_\omega} = \alpha_\omega\]
  for any
  $\alpha \colon F \fin{A} \Rightarrow \fin{B}$,
  $\beta \colon G \fin{A} \Rightarrow \fin{B}$ and
  $\kappa \colon F \Rightarrow G$.
%	  $$
%\xymatrix{
%	F\fin{B} \ar[rr]^{\kappa \fin{B}} \ar[dr]_{\alpha}
%		& & G\fin{B} \ar[dl]^{\beta} \\
%		& \fin{B}
%}
%\quad \text{  }
%\quad
%\xymatrix{
%	FB_\omega \ar[rr]^{\kappa_{B_\omega}} \ar[dr]_{\alpha_\omega}
%		& & GB_\omega \ar[dl]^{\beta_\omega} \\
%		& B_\omega
%}
%$$
%for any $\alpha \colon F \fin{B} \Rightarrow \fin{B}$,
%$\beta \colon G \fin{B} \Rightarrow \fin{B}$ and
%$\kappa \colon F \Rightarrow G$.
From left to right this is trivial. Conversely, suppose $\beta_\omega \circ \kappa_{A_\omega} = \alpha_\omega$.
By the above, both $\alpha$ and $\beta$ extend to natural transformations.
First, suppose $i < \omega$.
%	$\beta_\omega$ extends to a natural transformation $\beta \colon G\fin{A} \Rightarrow \fin{B}$.
	%
	We need to prove that the outside of the following diagram commutes:
	\[
	\xymatrix{
		FA_i \ar[rrr]^{\kappa_{A_i}} \ar[dr]_(0.6){FA_{\omega,i}^{-1}} \ar[ddd]_{\alpha_i} % chktex 3
			& & & GA_i\ar[ddd]^{\beta_i}\\ % chktex 3
			& FA_\omega \ar[d]_{\alpha_\omega} \ar[r]^{\kappa_{A_\omega}} % chktex 3
			& GA_\omega \ar[d]^{\beta_\omega} \ar[ur]_(0.4){GA_{\omega,i}} % chktex 3
			& \\
			& B_\omega \ar[dl]_(0.4){B_{\omega,i}} \ar@{=}[r] % chktex 3
			& B_\omega \ar[dr]^(0.4){B_{\omega,i}} \\ % chktex 3
			B_i \ar@{=}[rrr]
			& & & B_i
	}
	\]
	The middle square commutes by assumption. The rest, clockwise starting at the north,
	by naturality of $\kappa$ and $A_{\omega,i}^{-1}$ being a right inverse of $A_{\omega,i}$,
	naturality of $\beta$, trivially, and by naturality of $\alpha$
	and $A_{\omega,i}^{-1}$ being a right inverse of $A_{\omega,i}$.

	For $i \geq \omega$, we have
	$\alpha_i = B_{i,\omega}^{-1} \circ \alpha_\omega \circ FA_{i,\omega}$ and
	$\beta_i = B_{i,\omega}^{-1} \circ \beta_\omega \circ FA_{i,\omega}$, hence
	it suffices to prove commutativity of the diagram below.
	\[
	\xymatrix{
		FA_i \ar[d]_{FA_{i,\omega}} \ar[rr]^{\kappa_{A_i}} % chktex 3
			& & GA_i \ar[d]^{GA_{i,\omega}}\\ % chktex 3
		FA_\omega \ar[rr]^{\kappa_{A_\omega}} \ar[d]_{\alpha_\omega} % chktex 3
			& & GA_\omega \ar[d]^{\beta_\omega} \\ % chktex 3
		B_\omega \ar[dr]_{B_{i,\omega}^{-1}} \ar@{=}[rr] % chktex 3
			& & B_\omega \ar[dl]^{B_{i,\omega}^{-1}} \\ % chktex 3
			& B_i &
	}
	\]
	That follows,
	again, from naturality of $\kappa$ and the assumption.
%From left to right this is trivial. Conversely, suppose $\beta_\omega \circ \kappa_{A_\omega} = \alpha_\omega$.
%Let $i < \omega$. By the above, we have
%	$\alpha_i = B_{\omega,i} \circ \alpha_\omega \circ FA_{\omega,i}^{-1}$
%	and $\beta_i = B_{\omega,i} \circ \beta_\omega \circ FA_{\omega,i}^{-1}$,
%	for any right inverse $A_{\omega,i}^{-1}$ of $A_{\omega,i}$.
%
%	Hence, it suffices
%	to prove that the diagram on the left below commutes:
%	$$
%	\xymatrix{
%		FA_i \ar[d]_{F(A_{\omega,i}^{-1})} \ar[rr]^{\kappa_{A_i}}
%			& & GA_i \ar[d]^{G(A_{\omega,i}^{-1})}\\
%		FA_\omega \ar[rr]^{\kappa_{A_\omega}} \ar[d]_{\alpha_\omega}
%			& & GA_\omega \ar[d]^{\beta_\omega} \\
%		B_\omega \ar[dr]_{B_{\omega,i}} \ar@{=}[rr]
%			& & B_\omega \ar[dl]^{B_{\omega,i}} \\
%			& B_i &
%	}
%	\qquad
%	\xymatrix{
%		FA_i \ar[d]_{FA_{i,\omega}} \ar[rr]^{\kappa_{A_i}}
%			& & GA_i \ar[d]^{GA_{i,\omega}}\\
%		FA_\omega \ar[rr]^{\kappa_{A_\omega}} \ar[d]_{\alpha_\omega}
%			& & GA_\omega \ar[d]^{\beta_\omega} \\
%		B_\omega \ar[dr]_{B_{i,\omega}^{-1}} \ar@{=}[rr]
%			& & B_\omega \ar[dl]^{B_{i,\omega}^{-1}} \\
%			& B_i &
%	}
%	$$
%	The upper square of the left diagram commutes by naturality of $\kappa$, and the lower by assumption.
%	For $i \geq \omega$, we have
%	$\alpha_i = B_{i,\omega}^{-1} \circ \alpha_\omega \circ FA_{i,\omega}$ and
%	$\beta_i = B_{i,\omega}^{-1} \circ \beta_\omega \circ FA_{i,\omega}$, hence
%	it suffices to prove commutativity of the diagram on the right above. That follows,
%	again, from naturality of $\kappa$ and the assumption.
\end{proof}

By the above theorem, the universal property of the right Kan extension
amounts to the following property of causal algebras.

\begin{corollary}\label{cor:final-causal-alg}
  Suppose $A,B \colon \Set \rightarrow \Set$ are $\omega$-continuous.
  Let $\epsilon$ be the counit of $\kfab$.  Then $\epsilon_\omega$ is
  final in $\causal{A,B}$, i.e., $\epsilon_\omega$ is causal and for every causal algebra
  $\alpha_\omega \colon FA_\omega \rightarrow B_\omega$, there is a unique
  natural transformation $\hat{\alpha} \colon F \Rightarrow \kfab$
  such that $\epsilon_\omega \circ \hat{\alpha}_{A_\omega} = \alpha_\omega$.
  \[
  \xymatrix{
    FA_\omega \ar[rr]^{\hat{\alpha}_{A_\omega}} \ar[dr]_{\alpha_\omega} & % chktex 3
    & \kfab A_\omega \ar[dl]^{\epsilon_\omega} \\ % chktex 3
    & B_\omega & }
  \]
\end{corollary}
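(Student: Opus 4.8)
The plan is to read this corollary off directly from Theorem~\ref{thm:causal-alg} together with the universal property that defines $\kfab$. Recall that $(\kfab,\epsilon)$ is, by definition, a final object of the category $\K(\fin{A},\fin{B})$, and that Theorem~\ref{thm:causal-alg} exhibits an isomorphism of categories $\causal{A,B}\cong\K(\fin{A},\fin{B})$ which, from $\K(\fin{A},\fin{B})$ to $\causal{A,B}$, is given by evaluation at $\omega$ (on objects $(F,\alpha)\mapsto(F,\alpha_\omega)$, on morphisms $\kappa\mapsto\kappa$). Since an isomorphism of categories preserves final objects, the image $(\kfab,\epsilon_\omega)$ of $(\kfab,\epsilon)$ is final in $\causal{A,B}$; unfolding what a final object of $\causal{A,B}$ is gives exactly the statement of the corollary.

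\textbf{More explicitly.} First, $\epsilon\colon\kfab\fin{A}\Rightarrow\fin{B}$ is a natural transformation of the right shape, so by the first part of Theorem~\ref{thm:causal-alg} its $\omega$-component $\epsilon_\omega\colon\kfab A_\omega\rightarrow B_\omega$ is a causal algebra; hence $(\kfab,\epsilon_\omega)$ is an object of $\causal{A,B}$. Now let $(F,\alpha_\omega)$ be an arbitrary object of $\causal{A,B}$, i.e.\ $\alpha_\omega\colon FA_\omega\rightarrow B_\omega$ causal. By Theorem~\ref{thm:causal-alg} there is a unique natural transformation $\alpha\colon F\fin{A}\Rightarrow\fin{B}$ whose $\omega$-component is $\alpha_\omega$. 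By finality of $(\kfab,\epsilon)$ in $\K(\fin{A},\fin{B})$ there is a unique $\hat{\alpha}\colon F\Rightarrow\kfab$ with $\epsilon\circ\hat{\alpha}\fin{A}=\alpha$, and by the second part of Theorem~\ref{thm:causal-alg} (the equivalence of the two triangles, applied with $G=\kfab$ and $\beta=\epsilon$) this identity is equivalent to $\epsilon_\omega\circ\hat{\alpha}_{A_\omega}=\alpha_\omega$; so such an $\hat{\alpha}$ exists. For uniqueness, if $\hat{\beta}\colon F\Rightarrow\kfab$ also satisfies $\epsilon_\omega\circ\hat{\beta}_{A_\omega}=\alpha_\omega$, then applying the same equivalence in the other direction gives $\epsilon\circ\hat{\beta}\fin{A}=\alpha$, whence $\hat{\beta}=\hat{\alpha}$ by uniqueness of the morphism induced into the right Kan extension.

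\textbf{Main obstacle.} There is no real obstacle here: the corollary is essentially a translation of Theorem~\ref{thm:causal-alg} into the language of the universal property of $\kfab$. The only point needing a little care is to invoke the morphism part of Theorem~\ref{thm:causal-alg} in both directions, so that both the existence and the uniqueness of $\hat{\alpha}$ transfer from $\K(\fin{A},\fin{B})$ to $\causal{A,B}$; everything else is bookkeeping.
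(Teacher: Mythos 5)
Your proposal is correct and is exactly the argument the paper intends: the corollary is stated with the one-line justification that ``the universal property of the right Kan extension amounts to'' this property via Theorem~\ref{thm:causal-alg}, and your write-up simply spells out that transfer of finality along the isomorphism of categories, including the two-directional use of the morphism correspondence needed for both existence and uniqueness of $\hat{\alpha}$. No gaps.
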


%By Lemma~\ref{lm:cont-eps-mu}, we can phrase the above in terms of the multiplication of the codensity monad.
%\begin{corollary}\label{cor:final-causal-mult}
%	Let $B \colon \Set \rightarrow \Set$ be $\omega$-continuous.
%	Let $\mu$ be the multiplication of the codensity monad $\cf{B}$. Then
%	$\mu_\emptyset$ is a final object in the category of causal algebras (modulo the isomorphism
%	 $B_\omega \cong \cf{B}(\emptyset)$).
%\end{corollary}
By Lemma~\ref{lm:codensity-set} and Theorem~\ref{thm:causal-alg}, we
obtain the following concrete description of the relevant right Kan extension
as a \emph{functor of causal functions}.

\begin{theorem}\label{thm:causalfunctions}
  Let $A,B \colon \Set \rightarrow \Set$ be $\omega$-continuous
  functors. The right Kan extension $\kfab$ is given by
  \begin{align*}
    \kfab(X) &= \{\alpha \colon A_\omega^X \rightarrow B_\omega \mid \alpha \text{ is a causal function}\}\,,\\
    \kfab(h \colon X \rightarrow Y)(\alpha) &= \lambda f. \,\alpha(f \circ h)\,,
  \end{align*}
  and, for the counit
  $\epsilon \colon \kfab\fin{A} \Rightarrow \fin{B}$, we have
  $\epsilon_\omega(\alpha \colon A_\omega^{A_\omega} \rightarrow
  B_\omega) = \alpha(\id_{A_\omega})$.

  Finally, given $\alpha \colon F\fin{A} \Rightarrow \fin{B}$,
  the unique natural transformation $\hat{\alpha} \colon F \Rightarrow \kfab$
%  such that $\epsilon_\omega \circ \hat{\alpha}_{A_\omega} = \alpha$ is given
such that $\epsilon \circ \hat{\alpha} \fin{A} = \alpha$ is given
  by $\hat{\alpha}_X(S) \colon A_\omega^X \rightarrow B_\omega$, $f \mapsto \alpha_\omega \circ Ff(S)$.
  Equivalently, $\hat{\alpha}$ is the unique natural transformation
  such that $\epsilon_\omega \circ \hat{\alpha}_{A_\omega} = \alpha_\omega$.
\end{theorem}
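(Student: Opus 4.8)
The plan is to derive the statement by combining two results already in hand: the pointwise formula for right Kan extensions into $\Set$ (Lemma~\ref{lm:codensity-set}) and the identification of natural transformations $F\fin{A}\Rightarrow\fin{B}$ with causal operations (Theorem~\ref{thm:causal-alg}). I would apply Lemma~\ref{lm:codensity-set} with $\C=\Ord^\op$, $F=\fin{A}$ and $G=\fin{B}$, both regarded as functors into $\Set$. Its only hypothesis is that, for each set $X$, the collection $\{\alpha\colon{(\fin{A}-)}^X\Rightarrow\fin{B}\}$ is a set; this needs a word because $\Ord^\op$ is large. Observing that ${(\fin{A}-)}^X={(-)}^X\circ\fin{A}$, Theorem~\ref{thm:causal-alg} applied with the $\Set$-endofunctor ${(-)}^X$ shows that every such natural transformation is determined by its value at $\omega$, which is a \emph{function} ${(A_\omega)}^X\to B_\omega$; since there is only a set of these, the hypothesis holds. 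Lemma~\ref{lm:codensity-set} then yields that $\kfab=\Ran{\fin{A}}{\fin{B}}$ exists, with $\kfab(X)=\{\alpha\colon{(\fin{A}-)}^X\Rightarrow\fin{B}\}$ and with the action on morphisms and the counit described componentwise as in that lemma.

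The remaining work is to transport these componentwise formulas along the bijection of Theorem~\ref{thm:causal-alg}, which is evaluation at $\omega$. Under it, $\kfab(X)$ becomes the set of causal functions ${(A_\omega)}^X\to B_\omega$ (i.e.\ causal operations with $F={(-)}^X$, per Definition~\ref{def:causal-alg}). For $h\colon X\to Y$, Lemma~\ref{lm:codensity-set} gives ${(\kfab(h)(\alpha))}_i\colon f\mapsto\alpha_i(f\circ h)$; reading this at $i=\omega$ gives $\kfab(h)(\alpha)=\lambda f.\,\alpha(f\circ h)$. The counit $\epsilon\colon\kfab\fin{A}\Rightarrow\fin{B}$ from the lemma has $i$-component $\alpha\mapsto\alpha_i(\id_{A_i})$, so $\epsilon_\omega(\alpha)=\alpha(\id_{A_\omega})$. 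Finally, for the $\Set$-endofunctor $F$ of the statement (playing the role of $H$ in Lemma~\ref{lm:codensity-set}) and $\beta=\alpha$, the lemma describes the induced $\hat{\alpha}\colon F\Rightarrow\kfab$ by ${(\hat{\alpha}_X(S))}_i\colon f\mapsto\alpha_i\circ Ff(S)$, which at $i=\omega$ reads $\hat{\alpha}_X(S)=\lambda f.\,\alpha_\omega\circ Ff(S)$. The equivalent description — that $\hat{\alpha}$ is the unique natural transformation with $\epsilon_\omega\circ\hat{\alpha}_{A_\omega}=\alpha_\omega$ — is exactly the second half of Theorem~\ref{thm:causal-alg} (the equivalence $\beta\circ\kappa\fin{A}=\alpha$ iff $\beta_\omega\circ\kappa_{A_\omega}=\alpha_\omega$), or equivalently Corollary~\ref{cor:final-causal-alg}.

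I do not expect a serious obstacle: the whole theorem is an unwinding of Lemma~\ref{lm:codensity-set} and Theorem~\ref{thm:causal-alg}. The one point needing a little care is coherence of the identification — that the bijection ``evaluate at $\omega$'' commutes with the functorial action, the counit, and the universal property produced by Lemma~\ref{lm:codensity-set}. Since that bijection is restriction to a single component and every clause of Lemma~\ref{lm:codensity-set} is phrased componentwise, this is routine, so the genuine content is already carried by Theorem~\ref{thm:causal-alg}. It is worth flagging that Theorem~\ref{thm:causal-alg} also supplies the smallness fact that licenses the use of Lemma~\ref{lm:codensity-set} over the large index category $\Ord^\op$.
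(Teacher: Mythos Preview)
Your proposal is correct and follows exactly the approach the paper indicates: the paper does not give a separate proof of this theorem, but prefaces it with the sentence ``By Lemma~\ref{lm:codensity-set} and Theorem~\ref{thm:causal-alg}, we obtain the following concrete description\ldots'', which is precisely your strategy of applying the pointwise Kan-extension formula and then transporting along the evaluate-at-$\omega$ bijection. Your explicit verification of the smallness hypothesis of Lemma~\ref{lm:codensity-set} over the large index category $\Ord^\op$ is a worthwhile addition that the paper leaves implicit.
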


Taking $A = B = \real \times \Id$, the above right Kan extension (codensity monad)
maps a set $X$ to set of all causal stream functions with $|X|$ arguments. Similarly, for the
functor $X \mapsto 2 \times X^A$ we obtain a functor of causal
functions on languages.

\begin{example}
The right Kan extension $\kfab$ has a universal property in the category of causal algebras,
via Corollary~\ref{cor:final-causal-alg}. We give an example for the case of streams, taking $A = B = \real \times \Id$,
so the relevant right Kan extension $\kfab$ is the codensity monad $\cf{B}$.
Consider a causal algebra on streams, of the form $\alpha_\omega  \colon \real^\omega \times \real^\omega \rightarrow \real^\omega$ for the functor $SX = X^2$.
It follows from Theorem~\ref{thm:causalfunctions}
that the unique natural transformation $\hat{\alpha} \colon S \Rightarrow \cf{B}$
such that $\epsilon_\omega \circ \hat{\alpha}_{\real^\omega} = \alpha_\omega$
is
given on an $X$-component
$\hat{\alpha}_X \colon X \times X \rightarrow \cf{B}(X)$ by
\[\hat{\alpha}_X(x,y)(f \colon X \rightarrow \real^\omega) = \alpha_\omega(f(x),f(y)) \, .\]
%To see
%this, notice that for all $\sigma, \tau \in \real^\omega$, we have: $\epsilon_\omega ( \hat{\alpha}_{\real^\omega}(\sigma,\tau)) = \hat{\alpha}_{\real^\omega}(\sigma,\tau)(\id_{\real^\omega})
% = \alpha(\id_{\real_\omega}(\sigma), \id_{\real^\omega}(\tau)) = \alpha(\sigma,\tau)$.

%By Corollary~\ref{cor:final-causal-alg} there is a unique natural transformation $\hat{\alpha} \colon S \Rightarrow \cf{B}$
%such that $\epsilon_\omega \circ \hat{\alpha}_{\real^\omega} = \alpha$. It is defined on an $X$-component
%$\hat{\alpha}_X \colon X \times X \rightarrow \cf{B}(X)$ by
%$$\hat{\alpha}_X(x,y)(f \colon X \rightarrow \real^\omega) = \alpha(f(x),f(y)) \, .$$
%To see
%this, notice that for all $\sigma, \tau \in \real^\omega$, we have: $\epsilon_\omega ( \hat{\alpha}_{\real^\omega}(\sigma,\tau)) = \hat{\alpha}_{\real^\omega}(\sigma,\tau)(\id_{\real^\omega})
% = \alpha(\id_{\real_\omega}(\sigma), \id_{\real^\omega}(\tau)) = \alpha(\sigma,\tau)$.
\end{example}

\section{Companion of polynomial \texorpdfstring{$\Set$}{Set} functors}\label{sec:comp-set}

The previous sections give us a concrete understanding of the
codensity monad of the final sequence of a $\Set$ functor in terms of
causal functions, and Theorem~\ref{thm:companion-codensity} provides
us with a sufficient condition for this codensity monad to be the
companion. We now focus on several applications of these results.

A rather general class of functors that satisfy the hypotheses of
Theorem~\ref{thm:companion-codensity} is given by the \emph{polynomial
  functors}. Automata, stream systems, Mealy and Moore machines,
various kinds of trees, and many more are all examples of coalgebras
for polynomial functors (e.g.,~\cite{Jacobs:coalg}).  A functor
$B \colon \Set \rightarrow \Set$ is called polynomial (in a single
variable) if it is isomorphic to a functor of the form
\[
X \mapsto \coprod_{a \in A} X^{B_a}
\]
for some $A$-indexed collection ${(B_a)}_{a \in A}$ of sets. As
explained in~\cite[1.18]{gambino2013polynomial}, a $\Set$ functor $B$
is polynomial if and only if it preserves connected limits. This
implies existence and preservation by $B$ of the codensity monad of
$\fin{B}$, as required by Theorem~\ref{thm:companion-codensity}.
\begin{lemma}\label{lm:connected}
  If $B \colon \Set \rightarrow \Set$ is polynomial, then $\cf B$
  exists and $B$ preserves it.
\end{lemma}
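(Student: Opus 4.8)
The plan is to exhibit $\cf{B}=\Ran{\fin{B}}{\fin{B}}$ as a pointwise limit via Lemma~\ref{lm:pw-codensity}, and then to read off both its existence and its preservation by $B$ from two facts: the comma categories involved are connected, and a polynomial functor preserves connected limits. First I would unpack polynomiality. By \cite[1.18]{gambino2013polynomial} a polynomial $\Set$ functor preserves connected limits; since $\omega^{\op}$ is connected, $B$ is in particular $\omega$-continuous, so $B_\omega$ carries a final $B$-coalgebra and $B_{\omega+1,\omega}$ is an isomorphism. A transfinite induction---successor step: apply $B$ to an isomorphism; limit step: a limit of isomorphisms---then gives that $B_{j,\omega}$ is an isomorphism for every $j\geq\omega$, i.e.\ the final sequence $\fin{B}$ stabilises from $\omega$ on. This is what tames the size issue: although $\Ord^{\op}$ is a proper class, only the small segment $(\omega+1)^{\op}$ really carries information.

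Next, for existence, I would instantiate Lemma~\ref{lm:pw-codensity} with $\C=\Ord^{\op}$ and $F=G=\fin{B}$, so that it suffices to produce, for each set $X$, the limit of $\Delta_X/\fin{B}\to\Ord^{\op}\xrightarrow{\fin{B}}\Set$. Using stabilisation, a compatible family over this diagram is determined by its components at the objects $(\omega,h)$, $h\colon X\to B_\omega$: the component at $(i,f)$ with $i>\omega$ is forced through the isomorphism $B_{i,\omega}$, and at $i<\omega$ through a chosen section of the split epimorphism $B_{\omega,i}$ (well-definedness amounting exactly to causality of the $\omega$-component, cf.\ Definition~\ref{def:causal-alg}). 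Hence this limit is in bijection with a subset of $B_\omega^{B_\omega^X}$, so it is a set and exists in $\Set$; Lemma~\ref{lm:pw-codensity} then yields $\cf{B}=\Ran{\fin{B}}{\fin{B}}$ together with $\cf{B}(X)=\lim(\Delta_X/\fin{B}\to\Set)$. (Alternatively one may invoke Lemma~\ref{lm:codensity-set} and Theorem~\ref{thm:causalfunctions}, which describe this set as the set of causal functions $B_\omega^X\to B_\omega$.) The identical argument with $G=B\fin{B}$---which also stabilises from $\omega$ on, since $BB_{j,i}$ is an isomorphism whenever $\omega\leq i\leq j$---shows that $\Ran{\fin{B}}{B\fin{B}}$ exists and is the pointwise limit $X\mapsto\lim(\Delta_X/\fin{B}\xrightarrow{\fin{B}}\Set\xrightarrow{B}\Set)$.

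Finally, for preservation, I would note that each comma category $\Delta_X/\fin{B}$ has the terminal object $(0,\,!_X\colon X\to B_0=1)$---since $B_0=1$ is terminal in $\Set$ there is a unique morphism into it from any $(i,f)$---and is therefore connected. As $B$ preserves connected limits it preserves the limit computing $\cf{B}(X)$, whence $B(\cf{B}(X))=\lim(\Delta_X/\fin{B}\xrightarrow{\fin{B}}\Set\xrightarrow{B}\Set)=\Ran{\fin{B}}{B\fin{B}}(X)$ naturally in $X$; a routine inspection of the pointwise formula shows this isomorphism carries $B\epsilon$ to the counit of $\Ran{\fin{B}}{B\fin{B}}$, which is precisely the assertion that $B$ preserves $\cf{B}$.

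I expect the genuine obstacle to be none of this categorical content---it reduces to ``polynomial $=$ preserves connected limits'' plus ``the comma categories have a terminal object''---but the size bookkeeping that makes the proper-class-indexed limits over $\Delta_X/\fin{B}$ legitimate and preserved. Stabilisation of the final sequence is exactly the tool for this: once one has cut the diagram down to its cofinal behaviour around $\omega$, the limits become honest sets and everything else is routine.
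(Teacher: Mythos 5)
Your proposal is correct and follows essentially the same route as the paper: existence via the pointwise-limit formula of Lemma~\ref{lm:pw-codensity} together with stabilisation of the final sequence at $\omega$, and preservation via connectedness of the comma category $\Delta_X/\fin{B}$ (the paper exhibits the zigzag through $(0,\,!_X)$ where you note it is terminal --- the same observation) combined with the fact that polynomial functors preserve connected limits. Your explicit reduction of the limit to the $\omega$-indexed components is, if anything, a slightly more careful treatment of the size issue than the paper's appeal to essential smallness.
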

For the proof, we first recall that a category is \emph{connected}~\cite{mac1998categories} if
it is inhabited and there is a zigzag of morphisms between any two objects $X$ and $Y$:
a finite collection of morphisms of the form
\[
\xymatrix{
	X = X_0 \ar[r]
		& X_1
		& \ar [l] X_2 \ar[r]
		& X_3
		& \ar[l] X_4 \ar[r]
		& \ldots
		& \ar[l]X_n = Y \,.
}
\]
A \emph{connected limit} is a limit over a connected category.
\begin{proof}
	Since $B$ is polynomial, $B_{k,\omega}$ is an isomorphism for each $k \geq \omega$, which
	implies that the category $\Delta_X / \fin{B}$ is essentially small for every set $X$.
	Hence, the limit
	\begin{equation}\label{eq:lim-in-proof}
			\lim\left((\Delta_X / \fin{B}) \rightarrow \Ord^\op \xrightarrow{\fin{B}} \Set \right)
	\end{equation}
	exists for each $X$, which, by Lemma~\ref{lm:pw-codensity}, defines the codensity monad $\cf{B}$.
	As mentioned above, since $B$ is polynomial, it preserves connected limits~\cite{gambino2013polynomial}.
	%~\cite[Proposition 1.16]{gambino2013polynomial}.
	We show that $\Delta_X / \fin{B}$ is connected:
	$\Delta_X / \fin{B}$ is inhabited since there is the arrow $!_X \colon X \rightarrow B_0 = 1$; and
	 for any $f \colon X \rightarrow B_i$, there is the arrow $B_{i,0}$ to $!_X \colon X \rightarrow B_0$, which is a morphism in
	$\Delta_X / \fin{B}$ by uniqueness. Hence, $B$ preserves the limits in~\eqref{eq:lim-in-proof}. This implies that $B$ preserves $\cf{B}$, which we describe in detail.

	Denote, for a given set $X$, the limiting cone of~\eqref{eq:lim-in-proof} by
    \[
        {\{s^X_f \colon \cf{B} X \rightarrow B_i\}}_{f \in B_i^X, i \in \Ord}.
    \]
	The counit of the codensity monad is defined by $\epsilon_i = s_{\id_{B_i}}^{B_i}$ (see, e.g.,~\cite{nlabrkan,mac1998categories}).
	Since $B$ preserves these limits, for each $X$, we have that
	\[
        {\{Bs^X_f \colon B\cf{B} X \rightarrow BB_i\}}_{f \in B_i^X, i \in \Ord}
    \]
    is the limit
	\[
	\lim\left((\Delta_X / \fin{B}) \rightarrow \Ord^\op \xrightarrow{\fin{B}} \Set \xrightarrow{B} \Set \right) \,.
	\]
	Hence, by Lemma~\ref{lm:pw-codensity}, $B\cf{B}$ is a right Kan extension of $B\fin{B}$ along $\fin{B}$, with counit defined on $i \in \Ord$ by
	$Bs^{B_i}_{\id_{B_i}} = B\epsilon_i$ as desired.
\end{proof}

As a consequence, if $B$ is polynomial, the functor of causal
functions in Theorem~\ref{thm:causalfunctions} is the companion of
$B$.

\subsection{Causal algebras and coinduction up-to}%
\label{sec:causal-coinduction}

As explained in Section~\ref{sec:dl}, a distributive law of $F$ over $B$
allows one to strengthen the coinduction principle, formalised in terms of
$BF$-coalgebras, leading to an expressive coinductive definition (and proof)
technique. This approach is formally supported by Theorem~\ref{thm:valid-bartels}.  Based
on the characterisation of the companion in terms of causal algebras,
we obtain a new validity theorem, which
does not mention distributive laws at all, but is stated purely in
terms of causal algebras.
\begin{theorem}\label{thm:sol:causal}
  Let $B \colon \Set \rightarrow \Set$ be a polynomial functor, with
  final coalgebra $(B_\omega,\zeta)$.  Let
  $\alpha \colon FB_\omega \rightarrow B_\omega$ be a causal algebra.
  Then coinduction up to $\alpha$ is valid.
%  For every $f \colon X \rightarrow BFX$ there is a unique
%  $\sol{f} \colon X \rightarrow Z$ such that the following diagram
%  commutes.
%  $$
%  \xymatrix{
%    X \ar[d]_{f} \ar[rr]^{\sol{f}}
%    & & B_\omega \ar[d]^{\zeta} \\
%    BFX \ar[r]_-{BF\sol{f}}
%    & BFB_\omega \ar[r]_{B\alpha}
%    & BB_\omega
%  }
%  $$
\end{theorem}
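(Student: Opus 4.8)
The plan is to reduce coinduction up to $\alpha$ to coinduction up to the algebra induced by the companion, which is valid by Corollary~\ref{cor:solutions-comp}, by pushing every coalgebra along the canonical map of $F$ into the companion.

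First I would set up the companion. Since $B$ is polynomial it is $\omega$-continuous, so $B_\omega$ carries the final coalgebra and, by Lemma~\ref{lm:connected} together with Theorem~\ref{thm:companion-codensity}, the codensity monad $\cf B$ of the final sequence $\fin B$ (with counit $\epsilon$ and distributive law $\tau$) is the companion of $B$. Because $B_{\omega+1,\omega}$ is an isomorphism, Lemma~\ref{lm:alg-codensity} (with $A = B$, $k = \omega$) identifies the algebra induced on $B_\omega$ by the companion with the counit component $\epsilon_\omega \colon \cf B B_\omega \rightarrow B_\omega$. Hence Corollary~\ref{cor:solutions-comp} tells us that coinduction up to $\epsilon_\omega$ is valid: every coalgebra $\bar g \colon X \rightarrow B\cf B X$ has a unique $\sol{\bar g} \colon X \rightarrow B_\omega$ with $\zeta \circ \sol{\bar g} = B\epsilon_\omega \circ B\cf B \sol{\bar g} \circ \bar g$.

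Next I would use the characterisation of the companion as the functor of causal functions. By Corollary~\ref{cor:final-causal-alg} (equivalently Theorem~\ref{thm:causalfunctions}), the causal algebra $\alpha$ yields a unique natural transformation $\hat\alpha \colon F \Rightarrow \cf B$ with $\epsilon_\omega \circ \hat\alpha_{B_\omega} = \alpha$. Given a coalgebra $g \colon X \rightarrow BFX$, set $\bar g \eqdef B\hat\alpha_X \circ g \colon X \rightarrow B\cf B X$ and put $\sol g \eqdef \sol{\bar g}$. Using naturality of $\hat\alpha$, i.e.\ $\cf B \sol g \circ \hat\alpha_X = \hat\alpha_{B_\omega} \circ F\sol g$, together with $\epsilon_\omega \circ \hat\alpha_{B_\omega} = \alpha$, one rewrites $B\alpha \circ BF\sol g \circ g = B(\epsilon_\omega \circ \hat\alpha_{B_\omega} \circ F\sol g)\circ g = B\epsilon_\omega \circ B\cf B\sol g \circ \bar g = \zeta \circ \sol g$, so $\sol g$ solves the diagram of Definition~\ref{def:validity-coind}. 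For uniqueness, the same chain of equalities read in the other direction shows that any $h \colon X \rightarrow B_\omega$ solving that diagram is a solution of $\bar g$, hence $h = \sol{\bar g}$ by uniqueness of the latter.

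There is no real obstacle here once Sections~\ref{sec:companion}--\ref{sec:comp-set} are in place: the argument is bookkeeping. The two points requiring a little care are the identification of the companion-induced algebra on $B_\omega$ with the counit component $\epsilon_\omega$ (which is exactly what Lemma~\ref{lm:alg-codensity} provides, using $\omega$-continuity of $B$), and the fact that $\hat\alpha$ carries solutions of $g$ to solutions of $\bar g$ and back, which is just its naturality combined with its defining equation.
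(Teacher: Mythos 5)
Your proposal is correct and follows essentially the same route as the paper's proof: identify the companion with the codensity monad via Lemma~\ref{lm:connected} and Theorem~\ref{thm:companion-codensity}, identify its induced algebra with $\epsilon_\omega$ via Lemma~\ref{lm:alg-codensity}, factor $\alpha$ through $\epsilon_\omega$ by Corollary~\ref{cor:final-causal-alg}, and transport solutions along $B\hat\alpha_X$ using naturality. The only difference is cosmetic: the paper presents the final step as a pasted diagram where you write the equational chain.
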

\begin{proof}
  Consider the codensity monad $\cf{B}$, with counit $\epsilon$.
  By Lemma~\ref{lm:connected}, the functor $B$ satisfies the hypotheses of
  Theorem~\ref{thm:companion-codensity} and hence $\cf{B}$ is the underlying
  functor of the companion.
  By Lemma~\ref{lm:alg-codensity} (and since any polynomial functor is $\omega$-continuous), $\epsilon_\omega$ is the algebra
  induced by $\tau$ on the final coalgebra.
  Thus, by Corollary~\ref{cor:solutions-comp}, coinduction up to $\epsilon_\omega$ is valid.

  Since $\alpha$ is causal, by
  Corollary~\ref{cor:final-causal-alg} there is a unique natural
  transformation $\hat{\alpha} \colon F \Rightarrow \cf{B}$ such that
  $\epsilon_\omega \circ \hat{\alpha}_{B_\omega} = \alpha$.

  Let $f \colon X \rightarrow BFX$.
  Since coinduction up to $\epsilon_\omega$ is valid, there is
  a unique $\sol{f}$ making the outside of the following diagram commute (note that
  we abuse notation, using $\sol{f}$ to refer to the unique map associated
  to the coalgebra $B\hat{\alpha}_X \circ f$ by the validity of coinduction up to $\epsilon_\omega$).
  \[
  \xymatrix{
	  X \ar[d]_{f} \ar[rr]^{\sol{f}} % chktex 3
		  & & B_\omega \ar[d]^{\zeta} \\ % chktex 3
	  BFX \ar[r]^-{BF\sol{f}} \ar[d]_{B\hat{\alpha}_X} % chktex 3
		  & BFB_\omega \ar[r]^{B\alpha} \ar[d]_{B\hat{\alpha}_{B_\omega}} % chktex 3
		  & BB_\omega \ar@{=}[d]\\
	  B\cf{B}X \ar[r]_{B\cf{B}\sol{f}} % chktex 3
		  & B\cf{B}B_\omega \ar[r]_{B\epsilon_\omega} % chktex 3
		  & BB_\omega
  }
  \]
  The lower left square commutes by naturality, and the lower right square
  by definition of $\hat\alpha$. Thus the outside of the diagram
  commutes if and only if the inner rectangle commutes. It follows that
  $\sol{f}$ is the unique map making the rectangle commute, which is
  what we needed to prove.
%  Consider the codensity monad $\cf{B}$, with counit $\epsilon$. By
%  Lemma~\ref{lm:connected}, the functor $B$ satisfies the hypotheses of
%  Theorem~\ref{thm:companion-codensity}. By
%  Corollary~\ref{cor:final-causal-alg}, there is a unique natural
%  transformation $\hat{\alpha} \colon F \Rightarrow \cf{B}$ such that
%  $\epsilon_\omega \circ \hat{\alpha}_{B_\omega} = \alpha$.  By
%  Theorem~\ref{thm:companion-codensity}, there is a distributive law
%  $\tau$ of the monad $(\cf{B},\eta,\mu)$ over $\fin{B}$.  By
%  Lemma~\ref{lm:alg-codensity} (and since any polynomial functor is $\omega$-continuous), $\epsilon_\omega$ is the algebra
%  induced by $\tau$ on the final coalgebra.
%
%  Let $f \colon X \rightarrow BFX$. By
%  Corollary~\ref{cor:solutions-comp}, there exists a unique $\sol{f}$
%  making the outside of the following diagram commute (for $f$ in the
%  statement of the corollary we take $B\hat{\alpha}_X \circ f$).
%  $$
%  \xymatrix{
%	  X \ar[d]_{f} \ar[rr]^{\sol{f}}
%		  & & B_\omega \ar[d]^{\zeta} \\
%	  BFX \ar[r]^-{BF\sol{f}} \ar[d]_{B\hat{\alpha}_X}
%		  & BFB_\omega \ar[r]^{B\alpha} \ar[d]_{B\hat{\alpha}_{B_\omega}}
%		  & BB_\omega \ar@{=}[d]\\
%	  B\cf{B}X \ar[r]_{B\cf{B}\sol{f}}
%		  & B\cf{B}B_\omega \ar[r]_{B\epsilon_\omega}
%		  & BB_\omega
%  }
%  $$
%  The lower left square commutes by naturality, and the lower right square
%  by definition of $\hat\alpha$. Thus the outside of the diagram
%  commutes if and only if the inner rectangle commutes. It follows that
%  $\sol{f}$ is the unique map making the rectangle commute, which is
%  what we needed to prove.
\end{proof}
\begin{example}%
  \label{ex:stream-ops}
  For the functor $BX = \real \times X$, $B_\omega$ is the set of streams.
  Let $SX=X^2$, and consider the coalgebra
  $f \colon 1 \rightarrow BS1$ with $1=\{*\}$, defined by
  $* \mapsto (1,(*,*))$.  Pointwise addition is a causal function on
  streams, modelled by an algebra on $B_\omega$ for the functor
  $S$. By Theorem~\ref{thm:sol:causal} we obtain a unique solution
  $\sigma \in B_\omega$, satisfying $\sigma_0 = 1$ and
  $\sigma' = \sigma \oplus \sigma$. Similarly, the shuffle product of
  streams is causal, so that by applying Theorem~\ref{thm:sol:causal}
  with that algebra and the same coalgebra $f$ we obtain a unique
  stream $\sigma$ satisfying $\sigma_0 = 1$,
  $\sigma' = \sigma \otimes \sigma$.
  %The latter, in fact, is the
%  sequence of Catalan numbers, see, e.g.,~\cite{HansenKR16}.

  As explained in the Introduction, this method also allows one to
  define functions on streams. For instance, for the shuffle product,
  define a $BS$-coalgebra
  $f \colon {(B_\omega)}^2 \rightarrow {BS(B_\omega)}^2$ by
  $f(\sigma,\tau) = (\sigma_0 \times \tau_0,
  ((\sigma',\tau),(\tau,\sigma')))$.
  Since addition of streams is causal, by Theorem~\ref{thm:sol:causal}
  there is a unique
  $\sol{f} \colon B_\omega \times B_\omega \rightarrow B_\omega$ such
  that $\sol{f}(\sigma,\tau)(0) = \sigma(0) \times \tau(0)$ and
  $(\sol{f}(\sigma,\tau))' = (\sol{f}(\sigma', \tau) \oplus
  \sol{f}(\sigma,\tau'))$,
  matching the definition given in the
  Introduction~(\ref{eq:times1}). Notice that not every function
  defined in this way is causal; for instance, it is easy to define
  $\even$ (see Example~\ref{ex:causal-streams}), even with the
  standard coinduction principle (i.e., where $F = \Id$ and
  $\alpha=\id$).
\end{example}

\begin{example}%
  \label{ex:cfg}
  Consider the functor $BX = 2 \times X^A$, whose final coalgebra
  consists of the set $\pow(A^*)$ of languages. A $B\pow$-coalgebra
  $f \colon X \rightarrow 2 \times {(\pow(X))}^A$ is a non-deterministic
  automaton. Taking the causal algebra
  $\alpha \colon \pow(\pow(A^*)) \rightarrow \pow(A^*)$ defined by
  union, the unique map $\sol{f} \colon X \rightarrow \pow(A^*)$ from
  Theorem~\ref{thm:sol:causal} is the usual language semantics of
  non-deterministic automata.

  In~\cite{WinterBR13}, a context-free grammar (in Greibach normal
  form) is modelled as a $B\pow^*$-coalgebra
  $f \colon X \rightarrow 2 \times {({\pow(X)}^*)}^A$, where $X$ are the non-terminals,
  and its semantics is defined operationally by turning $f$ into a deterministic
  automaton with state space $\pow(X^*)$.  In~\cite{RW13} this operational view is
  related to the semantics of CFGs in terms of language equations.
  Consider the causal algebra
  $\alpha \colon \pow({\pow(A^*)}^*) \rightarrow \pow(A^*)$ defined by
  union and language composition:
  $\alpha(S) = \bigcup_{L_1 \ldots L_k \in S} L_1 L_2 \ldots L_k$.
  By Theorem~\ref{thm:sol:causal},
  any context-free grammar $f$ has a unique solution
  $\sol{f} \colon X \rightarrow \pow(A^*)$ assigning a language
  to every non-terminal; the commutativity of the diagram in~\ref{thm:sol:causal}
  amounts to the fact that this is a solution of the grammar $f$ viewed
  as a system of equations over the set $\pow(A^*)$ of languages.
  As such, we obtain an elementary coalgebraic semantics of
  CFGs that does not require us to relate it to an
  operational semantics.
  %	A context-free grammar is in Greibach normal form if all
  % productions from a non-terminal $S$ are either of the form
  % $S \rightarrow \varepsilon$, where $\varepsilon$ is the empty
  % word, or of the form $S \rightarrow a S_1S_2 \ldots S_n$ where $a$
  % is an alphabet letter, and $S_1, \ldots, S_n$ are again
  % non-terminals. Such a grammar can be interpreted as a system of
  % equations on languages; its unique solution is the semantics of
  % the grammar.  In~\cite{}, a context-free grammar (in Greibach
  % normal form) is modelled as a coalgebra
  % $f \colon X \rightarrow 2 \times (\pow(X)^*)^A$, where $X$ is the
  % set of non-terminals, and semantics is defined operationally by
  % turning $f$ into a deterministic automaton over $\pow(X)^*$.
  % In~\cite{}, this operational view was proved equivalent to the
  % semantics in terms of language equations.  In the current setting,
  % we obtain the coalgebraic semantics of context-free grammars in
  % terms of language equations. To this end, take the algebra
  % $\alpha \colon \pow(\pow(A^*)^*) \rightarrow \pow(A^*)$, defined
  % by mapping a set of lists of languages to the union of their
  % composition.  This algebra is clearly causal, hence by
  % Theorem~\ref{} we obtain the coalgebraic semantics of context-free
  % grammars, without having to explicitly define an operational
  % semantics.
\end{example}

\subsection{Causal algebras and distributive laws}\label{sec:causal-alg-dl}

Another application of the fact that the codensity monad is the
companion is that the final causal algebra in
Corollary~\ref{cor:final-causal-alg} is, by
Lemma~\ref{lm:alg-codensity}, the algebra induced by a
distributive law. Hence, \emph{any} causal algebra is ``definable'' by
a distributive law, in the sense that it factors as a (component of a)
natural transformation followed by the algebra induced by a
distributive law. This is stated in Corollary~\ref{cor:causal-iff-dl} below, which follows from the following more general result.
\begin{theorem}\label{thm:definable}
	Let $A,B \colon \Set \rightarrow \Set$ be functors, where $A$ is $\omega$-continuous and $B$ is polynomial.
	An algebra $\alpha_\omega \colon FA_\omega \rightarrow B_\omega$
	is causal if and only if there is a functor $G$, a natural transformation $\lambda \colon GA \Rightarrow BG$
	and a natural transformation $\kappa \colon F \Rightarrow G$ such that
	the following diagram commutes:
	\[
	\xymatrix{
	  FA_\omega \ar[rr]^{\kappa_{A_\omega}} \ar[dr]_{\alpha_\omega} & & GA_\omega \ar[dl]^{\beta_\omega} \\ % chktex 3
	  & B_\omega &
	}
	\]
	where $\beta_\omega$ is the unique map induced by $\lambda$ as in~\eqref{eq:unique-map}.
	% such that
%	$$
%	  \xymatrix{
%		  FA_k \ar[r]^{\beta_k}\ar[d]_{FA_{k+1,k}^{-1}}
%			  & B_k \ar[dd]^{B_{k+1,k}^{-1}} \\
%		  FAA_k \ar[d]_{\lambda_{A_k}}
%			  & \\
%		  BFA_k \ar[r]_{B\beta_k}
%			  & BB_k
%	  }
%	$$
\end{theorem}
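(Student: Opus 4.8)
The plan is to take $G=\kfab$, the right Kan extension of the final sequence $\fin{B}$ along the final sequence $\fin{A}$, to recognise it as the companion of $(A,B)$, and then to read the statement off the universal property of its counit. First I would record a technical preliminary: since $B$ is polynomial it is $\omega$-continuous (polynomial functors preserve all connected limits and $\omega^{\op}$ is connected), so both $A$ and $B$ are $\omega$-continuous, and a light adaptation of the argument of Lemma~\ref{lm:connected} shows that $\kfab$ exists and is preserved by $B$. Indeed, because $A$ is $\omega$-continuous the connecting maps $A_{k,\omega}$ are isomorphisms for $k\geq\omega$, so the comma category $\Delta_X/\fin{A}$ is essentially small and (as $\Set$ is complete) the pointwise limits of Lemma~\ref{lm:pw-codensity} exist, giving $\kfab$; moreover $\Delta_X/\fin{A}$ is connected (inhabited by $!_X\colon X\to A_0=1$, with any $f\colon X\to A_i$ connected to it via $A_{i,0}$, which is a morphism there by finality of $1$), so the polynomial functor $B$ preserves these limits, hence preserves $\kfab$. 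Therefore Theorem~\ref{thm:companion-codensity} applies, yielding $\tau\colon\kfab A\Rightarrow B\kfab$ with $(\kfab,\tau)$ the companion of $(A,B)$, and Lemma~\ref{lm:alg-codensity} (instantiated at $k=\omega$, using $\omega$-continuity) identifies the counit component $\epsilon_\omega\colon\kfab A_\omega\to B_\omega$ with the map induced by $\tau$ as in~\eqref{eq:unique-map}.

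For the ``only if'' direction I would assume $\alpha_\omega$ causal and invoke Corollary~\ref{cor:final-causal-alg} to obtain the unique natural transformation $\hat\alpha\colon F\Rightarrow\kfab$ with $\epsilon_\omega\circ\hat\alpha_{A_\omega}=\alpha_\omega$, then set $G=\kfab$, $\lambda=\tau$, $\kappa=\hat\alpha$. By the preliminary, the map $\beta_\omega$ induced by $\lambda=\tau$ is exactly $\epsilon_\omega$, so the required triangle is precisely the defining equation of $\hat\alpha$.

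For the ``if'' direction I would assume $G$, $\lambda\colon GA\Rightarrow BG$ and $\kappa\colon F\Rightarrow G$ are given with $\alpha_\omega=\beta_\omega\circ\kappa_{A_\omega}$. By Lemma~\ref{lm:dl-alpha}, $\lambda$ induces a natural transformation $\beta\colon G\fin{A}\Rightarrow\fin{B}$ whose $\omega$-component is the map induced by $\lambda$ as in~\eqref{eq:unique-map} (using that $A$ and $B$ are $\omega$-continuous), hence agrees with the $\beta_\omega$ in the statement. Then $\beta\circ\kappa\fin{A}\colon F\fin{A}\Rightarrow\fin{B}$ is a natural transformation whose evaluation at $\omega$ is $\beta_\omega\circ\kappa_{A_\omega}=\alpha_\omega$, and by Theorem~\ref{thm:causal-alg} the $\omega$-component of any natural transformation $F\fin{A}\Rightarrow\fin{B}$ is a causal algebra, so $\alpha_\omega$ is causal.

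The only genuinely new work compared with the earlier results is the technical preliminary — checking that $\kfab$ and its preservation by $B$ survive the weakening of the hypothesis on $A$ from polynomial to $\omega$-continuous; both directions then assemble directly from Theorem~\ref{thm:companion-codensity}, Lemma~\ref{lm:alg-codensity}, Corollary~\ref{cor:final-causal-alg}, Lemma~\ref{lm:dl-alpha} and Theorem~\ref{thm:causal-alg}. I expect the main (mild) obstacle to be bookkeeping: making sure the side conditions of those lemmas — essential smallness, connectedness, and the isomorphisms $A_{\omega+1,\omega}$ and $B_{\omega+1,\omega}$ — are all genuinely in place; there is no difficult computation.
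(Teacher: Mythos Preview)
Your proposal is correct and follows essentially the same route as the paper: the ``if'' direction via Lemma~\ref{lm:dl-alpha} and Theorem~\ref{thm:causal-alg}, the ``only if'' direction by taking $G=\kfab$, $\lambda=\tau$, $\kappa=\hat\alpha$ via Corollary~\ref{cor:final-causal-alg}, Theorem~\ref{thm:companion-codensity} and Lemma~\ref{lm:alg-codensity}. You are in fact slightly more careful than the paper, which simply cites Lemma~\ref{lm:connected} (stated only for $A=B$) without spelling out the adaptation to general $\omega$-continuous $A$ that you supply.
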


\begin{proof}
	First note that the type of $\beta_\omega$ is correct: since $A,B$ are both $\omega$-continuous,
	$A_\omega$ and $B_\omega$ are final coalgebras.
	For the implication from right to left, by Lemma~\ref{lm:dl-alpha}, a natural transformation
	$\lambda \colon GA \Rightarrow BG$ defines a natural transformation $\beta \colon G\fin{A} \Rightarrow \fin{B}$
	such that $\beta_\omega$ is the unique map induced by $\lambda$, and
	hence satisfies the above diagram.
	By Theorem~\ref{thm:causal-alg},
	$\alpha_\omega = \beta_\omega \circ \kappa_{A_\omega}$ is causal, since
	$\beta \circ \kappa \fin{A} \colon F\fin{A} \Rightarrow \fin{B}$
	is a natural transformation.

	For the converse, let $\alpha \colon FA_\omega \rightarrow B_\omega$ be causal.
	We instantiate $G$ and $\beta_\omega$ in the statement of the theorem respectively to $\kfab$
	and $\epsilon_\omega$, for $\epsilon$ its counit.
	By Corollary~\ref{cor:final-causal-alg}, there is a natural transformation $\hat{\alpha} \colon F \Rightarrow \kfab$
	such that $\alpha = \epsilon_{\omega} \circ \hat{\alpha}_{A_\omega}$. %, where $\epsilon$ is the counit of $\kfab$.
	Further, by Lemma~\ref{lm:connected}, $B$ satisfies the hypotheses of Theorem~\ref{thm:companion-codensity}.
	Hence, $\kfab$ is the companion with a natural transformation $\tau \colon \kfab A \Rightarrow B\kfab$, and
	by Lemma~\ref{lm:alg-codensity}, $\epsilon_\omega$ is the unique map induced by $\tau$.
\end{proof}
In particular, taking $A=B$, the above theorem is an expressivity result
for algebras defined by distributive laws. To make this more precise,
suppose $B \colon \Set \rightarrow \Set$ has a final
coalgebra $(Z,\zeta)$.  We say an algebra
$\alpha \colon FZ \rightarrow Z$ is \emph{definable by a distributive
  law} if there exists a distributive law
$\lambda \colon GB \Rightarrow BG$ with induced algebra
$\beta \colon GZ \rightarrow Z$ and a natural transformation
$\kappa \colon F \Rightarrow G$ such that the following commutes:
\[
\xymatrix{
  FZ \ar[rr]^{\kappa_{Z}} \ar[dr]_{\alpha} & & GZ \ar[dl]^{\beta} \\ % chktex 3
  & Z &
}
\]
\begin{corollary}\label{cor:causal-iff-dl}
  Let $B \colon \Set \rightarrow \Set$ be polynomial.  An algebra
  $\alpha_\omega \colon FB_\omega \rightarrow B_\omega$ is causal if and only
  if it is definable by a distributive law.
\end{corollary}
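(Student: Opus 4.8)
The plan is to derive this directly from Theorem~\ref{thm:definable}, which is already stated at the level of two functors $A,B$, by specialising to $A=B$. The only points to verify are that the hypotheses of that theorem are met in this special case, and that its conclusion, when $A=B$, is exactly the statement ``definable by a distributive law'' as defined just above the corollary.

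First I would observe that a polynomial functor $B$ is $\omega$-continuous: it preserves connected limits, and in particular limits of $\omega^{\op}$-chains, so $B_\omega$ is the carrier of a final $B$-coalgebra. This $B_\omega$ is exactly the object called $(Z,\zeta)$ in the definition of definability by a distributive law, so the two formulations concern the same algebras $\alpha_\omega \colon FB_\omega \rightarrow B_\omega$. In particular, taking $A=B$ polynomial, the hypotheses of Theorem~\ref{thm:definable} --- ``$A$ is $\omega$-continuous'' and ``$B$ is polynomial'' --- both hold.

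Next I would apply Theorem~\ref{thm:definable} with $A=B$: it yields that $\alpha_\omega$ is causal if and only if there exist a functor $G$, a natural transformation $\lambda \colon GB \Rightarrow BG$, and a natural transformation $\kappa \colon F \Rightarrow G$ with $\alpha_\omega = \beta_\omega \circ \kappa_{B_\omega}$, where $\beta_\omega$ is the unique map induced by $\lambda$ through diagram~\eqref{eq:unique-map}. When $A=B$, the square~\eqref{eq:unique-map} is precisely the square~\eqref{eq:induced-alg} defining the algebra on the final coalgebra induced by a distributive law; equivalently, this is recorded in the last sentence of Lemma~\ref{lm:dl-alpha}, which says that for $A=B$ the induced map is the algebra induced by $\lambda$ on the final coalgebra. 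Hence $\beta_\omega$ is exactly the induced algebra $\beta \colon GB_\omega \rightarrow B_\omega$, and the factorisation $\alpha_\omega = \beta_\omega \circ \kappa_{B_\omega}$ is literally the commuting triangle in the definition of ``definable by a distributive law''. The corollary follows.

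The argument has no real obstacle: it is a pure instantiation, and the single point requiring care --- that ``the unique map induced by $\lambda$ as in~\eqref{eq:unique-map}'' coincides with ``the algebra induced by $\lambda$ on the final coalgebra'' when $A=B$ --- is already established in Lemma~\ref{lm:dl-alpha}. All the substantive work was done in proving Theorem~\ref{thm:definable}, which in turn rests on Theorem~\ref{thm:causal-alg}, Corollary~\ref{cor:final-causal-alg}, Lemma~\ref{lm:connected} and Lemma~\ref{lm:alg-codensity}; here one only has to recognise the special case.
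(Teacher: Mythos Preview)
Your proposal is correct and matches the paper's approach exactly: the corollary is stated immediately after Theorem~\ref{thm:definable} with the remark ``In particular, taking $A=B$, the above theorem is an expressivity result for algebras defined by distributive laws,'' and no further proof is given. Your verification that polynomial implies $\omega$-continuous and that the map of~\eqref{eq:unique-map} specialises to the induced algebra of~\eqref{eq:induced-alg} when $A=B$ is, if anything, more explicit than the paper itself.
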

\begin{example}
Since the functors for stream systems and automata are polynomial, by
Corollary~\ref{cor:causal-iff-dl} we obtain that a
function $f \colon {(\real^\omega)}^V \rightarrow \real^\omega$ on streams over $\real$,
or a function $f \colon {(\pow(A^*))}^V \rightarrow \pow(A^*)$ on languages over $A$,
is causal if and only if
it is definable by a distributive law.

As explained in Example~\ref{ex:causal-even}, the function $\even$ is
not causal (from $B$ to $B$, where $B = \real \times \Id$) but it is causal from $A$ to $B$,
where $A = \real \times \real \times \Id$. It follows from Corollary~\ref{cor:causal-iff-dl} %Theorem~\ref{thm:definable}
that $\even$ is \emph{not} definable by a distributive law of the form $GB \Rightarrow BG$.
However, by Theorem~\ref{thm:definable}, there is a distributive law
$\lambda \colon GA \Rightarrow BG$ inducing an algebra $\beta_\omega \colon G\real^\omega \rightarrow \real^\omega$
(recall that $\real^\omega$ is the final coalgebra of both $A$ and $B$) and
a natural transformation $\kappa \colon \Id \Rightarrow G$ such that $\even = \beta_\omega \circ \kappa_{\real^\omega}$.
Indeed, $\even$ arises directly as the unique operation induced by a natural transformation of the form $GA \Rightarrow BG$
where $G = \Id$,
as shown in Example~\ref{ex:even-gen-dl}.
\end{example}

In~\cite{HansenKR16}, a similar result to
Corollary~\ref{cor:causal-iff-dl} is shown concretely for causal
stream functions, and this is extended to languages
in~\cite{RotBR16}. In both cases, very specific presentations of
distributive laws for the systems at hand are used to present the
distributive law based on a ``syntax'', which however is not too
clearly distinguished from the semantics: it consists of a single
operation symbol for every causal function.  In our case, in the proof
of Theorem~\ref{thm:definable}, we use the \emph{companion}, which
consists of the actual functions rather than a syntactic
representation. Indeed, the setting of Theorem~\ref{thm:definable}
applies more abstractly to \emph{all} causal algebras, not just causal
functions. However, it remains an intriguing question how to obtain a
concrete syntactic characterisation of a distributive law for a given
causal algebra.

%Our result applies immediately to deterministic automata, since the functor $X \mapsto 2 \times X^A$ is polynomial as well.
%We thus immediately obtain that function on languages is causal iff it is definable by a distributive law,
%recovering the concrete result in~\cite{RotBR16} (again, modulo a syntactic presentation of causal functions). There are many other examples of polynomial functors whose coalgebras model systems of interest (see~\cite{Jacobs:coalg}), such as $X \mapsto O \times X^A$ for any set $O$, whose coalgebras are Moore machines, $X \mapsto (A \times X)^I$, whose coalgebras are Mealy machines, or $X \mapsto A \times X^*$, whose final coalgebra consists of infinite trees node-labelled in $A$.

\subsection{Soundness of up-to techniques}%
\label{ssec:fibrations}

The \emph{contextual closure} of an algebra is one of the most
powerful up-to techniques, which allows one to exploit algebraic
structure in bisimulation proofs. In~\cite{bppr:acta:16}, it is shown
that the contextual closure is sound (compatible) on any bialgebra for
a distributive law. Here, we move away from the explicit requirement of a distributive law
and give an elementary condition for soundness of the contextual closure on the
final coalgebra: that the algebra under consideration is causal. In
fact, we prove that this implies that the contextual closure lies
below the companion, which not only gives soundness, but also allows
to combine it with other up-to techniques.

%ARDEN'S RULE EXAMPLE
\begin{example}\label{ex:up-to-da}
To motivate and illustrate the use of contextual closure as an up-to technique, we recall
from~\cite{RotBR16} a coinductive proof of \emph{Arden's rule}.

First let $\Rel_{\pow(A^*)} = \pow(\pow(A^*) \times \pow(A^*))$, and
consider the function $b \colon \Rel_{\pow(A^*)} \rightarrow \Rel_{\pow(A^*)}$ defined by
\[b(R) = \{(L,K) \mid \varepsilon \in L \text{ iff } \varepsilon \in K, \text{ and } \forall a \in A. \, (L_a, K_a)\} \]
where, given $L \in \pow(A^*)$ and $a \in A$, $L_a = \{w \mid aw \in L\}$, called language derivative.
A relation $R$ is a \emph{bisimulation} if $R \subseteq b(R)$; concretely, $R$ is a bisimulation if for every pair $(L,K) \in R$:
$L$ contains the empty word iff $K$ does, and $(L_a,K_a) \in R$ for all $a \in A$. The greatest fixpoint of $b$ is language equality.
Hence, the coinduction principle asserts that languages $L,K$ are equal whenever they are contained in a bisimulation.

Arden's rule states that for every three languages
$L,K,M \in \pow(A^*)$, if $L=KL + M$ and $K$ does not contain the empty word, then $L=K^*M$. To prove it, one may try to show that, for $L,K,M$ satisfying
the assumption, $R = \{(L,K^*M)\}$ is a bisimulation. However, this fails, since
\[
L_a = {(KL+M)}_a = K_{a}L + M_a, \qquad \text{whereas} \qquad {(K^*M)}_a = K_{a}K^*M + M_a\,,
\]
using so-called Brzozowski derivatives to compute the $a$-dervatives~\cite{Rutten98}. The pair $(L_a,{(K^*M)}_a)$
is not related by $R$. However,
it is related by the bigger relation $\ctx(\rfl(R))$, where $\rfl, \ctx \colon \Rel_{\pow(A^*)} \rightarrow \Rel_{\pow(A^*)}$
are the following functions on relations:
$\rfl(R) = R \cup \{(L,L) \mid L \subseteq A^*\}$ is the reflexive closure and
$\ctx(R)$ is the \emph{contextual closure} of $R$ (with respect to sum and composition).
More precisely, $\ctx(R)$ is
the least relation satisfying the following rules:
\begin{align*}
&\frac{
(L,K) \in R}{
(L,K) \in \ctx(R)
}
\quad
\frac{
(L_1,K_1) \in \ctx(R) \quad
(L_2,K_2) \in \ctx(R)}
{(L_1 + L_2,K_1 + K_2) \in \ctx(R)}
\\
&\qquad\qquad\frac{
(L_1,K_1) \in \ctx(R) \quad
(L_2,K_2) \in \ctx(R)}
{(L_1 L_2,K_1 K_2) \in \ctx(R)}
\end{align*}
Hence we have $R \subseteq b(\ctx(\rfl(R)))$,
which means that $R$ is a \emph{bisimulation up to $\ctx \circ \rfl$}. To conclude that $L=K^*M$,
it suffices to prove that $\ctx$ and $\rfl$ are both below the companion of $b$ (cf.~\cite{pous:lics16:cawu}).
Both $\rfl$ and $\ctx$ are, in fact, compatible, which follows from~\cite{bppr:acta:16}.
We focus on $\ctx$. Showing compatibility using the techniques of~\cite{bppr:acta:16} requires providing a distributive law.
However, it also follows as an instance of Theorem~\ref{thm:causal-below-t} below
that $\ctx$ is below the companion, relying on causality
instead of having to provide an explicit distributive law.
\end{example}

We generalise the functions $b$ and $\ctx$ from the above examples
to speak more abstractly about soundness of the contextual closure
for bisimilarity proofs, following the approach of~\cite{bppr:acta:16} to up-to techniques.
This approach is formulated at the abstract level of coinductive predicates in fibrations.
However, we only recall a few necessary definitions, and refer to~\cite{bppr:acta:16}
for details. %A full treatment is beyond the scope of this paper.

For a set $X$, let $\Rel_X = \pow(X \times X)$ be the lattice of relations, ordered
by subset inclusion.
For a functor $B \colon \Set \rightarrow \Set$,
bisimulations on a $B$-coalgebra $(X,f)$ are the post-fixed points of
the monotone function $b_f \colon \Rel_X \rightarrow \Rel_X$, defined by
\[b_f(R) = {(f\times f)}^{-1} \circ \Rel(B)(R) \,.\]
Here ${(f \times f)}^{-1}$ is inverse image along $f\times f$, and $\Rel(B)$ is the
\emph{relation lifting} of $B$, defined for any relation $R$ with
projections $\pi_1, \pi_2$ by
  \[
  \Rel(B)(R) = \{(x,y) \mid \exists z \in BR . \, x = B\pi_1(z), \, y = B\pi_2(z)\}\,,
  \]
 see, e.g.,~\cite{Jacobs:coalg}.
Contextual closure
$\ctx_\alpha \colon \Rel_X \rightarrow \Rel_X$ with respect to an
algebra $\alpha \colon FX \rightarrow X$ is defined dually by
\[\ctx_\alpha(R) = \textstyle{\coprod}_\alpha \circ \Rel(F)(R)\]
where
$\textstyle{\coprod}_\alpha$ is direct image along
$\alpha \times \alpha$.

We first prove a general property of algebras and the contextual closure.
\begin{lemma}\label{lm:ctx-alg}
  Let $X$ be a set, $F, G \colon \Set \rightarrow \Set$ functors,
  $\alpha \colon FX \rightarrow X$ and $\beta \colon GX \rightarrow X$
  algebras, and $\kappa \colon F \Rightarrow G$ a natural transformation, such that $\alpha = \beta \circ
  \kappa_X$. %as below:
  % $$
  % \xymatrix{
  %	FX \ar[rr]^{\kappa_{X}} \ar[dr]_{\alpha} & & GX \ar[dl]^{\beta} \\
  % & X &
  % }
  %   $$
  Then $\ctx_\alpha \leq \ctx_\beta$.
  %	Consequently, for any $b \colon \Rel_X \rightarrow \Rel_X$
  % with companion $t$, if $\ctx_\beta \leq t$
  %	then $\ctx_\alpha \leq t$.
\end{lemma}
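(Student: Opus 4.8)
The plan is to unfold the definitions of $\ctx_\alpha$ and $\ctx_\beta$ and then perform a one-line element chase whose only inputs are naturality of $\kappa$ and the hypothesis $\alpha=\beta\circ\kappa_X$. This is essentially a routine verification; the main thing to get right is at which objects naturality of $\kappa$ must be applied.

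Concretely, fix a relation $R\subseteq X\times X$ with projections $\pi_1,\pi_2\colon R\rightarrow X$. Unwinding $\textstyle\coprod_\alpha$ (direct image along $\alpha\times\alpha$) and $\Rel(F)$ gives
\[
\ctx_\alpha(R)=\{(\alpha(F\pi_1(z)),\alpha(F\pi_2(z)))\mid z\in FR\},
\]
and similarly $\ctx_\beta(R)=\{(\beta(G\pi_1(w)),\beta(G\pi_2(w)))\mid w\in GR\}$. Given $(a,b)\in\ctx_\alpha(R)$, choose a witness $z\in FR$ with $a=\alpha(F\pi_1(z))$ and $b=\alpha(F\pi_2(z))$. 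Then, using $\alpha=\beta\circ\kappa_X$ and naturality of $\kappa$ at each $\pi_k$, namely $\kappa_X\circ F\pi_k=G\pi_k\circ\kappa_R$, we obtain $a=\beta(G\pi_1(\kappa_R(z)))$ and $b=\beta(G\pi_2(\kappa_R(z)))$, so $w:=\kappa_R(z)\in GR$ witnesses $(a,b)\in\ctx_\beta(R)$. Since $R$ was arbitrary, $\ctx_\alpha\leq\ctx_\beta$.

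The same argument can be phrased point-free: naturality of $\kappa$ shows $\textstyle\coprod_{\kappa_X}\circ\Rel(F)(R)\subseteq\Rel(G)(R)$ for every $R$ (map each witness $z\in FR$ to $\kappa_R(z)\in GR$), and then, using that direct image along a composite factors as the composite of direct images, $\textstyle\coprod_\alpha=\textstyle\coprod_\beta\circ\textstyle\coprod_{\kappa_X}$, together with monotonicity of $\textstyle\coprod_\beta$, one gets $\ctx_\alpha=\textstyle\coprod_\beta\circ\textstyle\coprod_{\kappa_X}\circ\Rel(F)\leq\textstyle\coprod_\beta\circ\Rel(G)=\ctx_\beta$. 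There is no serious obstacle in this proof; the only points requiring a little care are to invoke naturality of $\kappa$ at the projection maps of $R$ rather than merely at $X$, and to observe that $\textstyle\coprod_{\beta\circ\kappa_X}=\textstyle\coprod_\beta\circ\textstyle\coprod_{\kappa_X}$ by functoriality of direct image.
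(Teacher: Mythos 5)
Your proof is correct, and its point-free form is exactly the computation in the paper: decompose $\textstyle\coprod_{\beta\circ\kappa_X}$ as $\textstyle\coprod_\beta\circ\textstyle\coprod_{\kappa_X}$ and then use the inclusion $\textstyle\coprod_{\kappa_X}\circ\Rel(F)(R)\subseteq\Rel(G)(R)$. The only difference is how that key inclusion is justified: the paper obtains it abstractly, from the fact that $\kappa$ lifts to a natural transformation $\Rel(\kappa)\colon\Rel(F)\Rightarrow\Rel(G)$ together with a cited general lemma about direct images in a fibration, whereas you verify it by a direct element chase, sending a witness $z\in FR$ to $\kappa_R(z)\in GR$ via naturality of $\kappa$ at the projections $\pi_1,\pi_2\colon R\to X$. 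Your version is self-contained and arguably more transparent for $\Set$; the paper's phrasing is the one that would transfer verbatim to other fibrations.
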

\begin{proof}
  %	This proof relies on the setting and terminology
  % of~\cite{bppr:acta:16}, which we do not fully recall here.
  The natural transformation $\kappa$ lifts to a natural
  transformation \[\Rel(\kappa) \colon \Rel(F) \Rightarrow \Rel(G),\]
  see~\cite[Exercise 4.4.6]{Jacobs:coalg}. It follows from a general
  property of fibrations (see~\cite[Lemma 14.5]{bppr:acta:16}) that
  there exists a natural transformation of the form
  \[\textstyle{\coprod}_{\kappa_X} \circ \Rel(F) \Rightarrow \Rel(G)
  \colon \Rel_X \rightarrow \Rel_{GX}.\]  Hence, we obtain a natural
  transformation
  \begin{align*}
    \ctx_\alpha &= \textstyle{\coprod}_{\alpha} \circ \Rel(F) \\
                & = \textstyle{\coprod}_{\beta \circ \kappa_X} \circ \Rel(F) \\
                & = \textstyle{\coprod}_{\beta} \circ \textstyle{\coprod}_{\kappa_X} \circ \Rel(F) \\
                &\Rightarrow \textstyle{\coprod}_{\beta} \circ \Rel(G) = \ctx_{\beta} \,.
  \end{align*}
  This is a natural transformation in $\Rel_X$, which just means that
  $\ctx_\alpha \leq \ctx_\beta$.
\end{proof}

\begin{theorem}\label{thm:causal-below-t}
  Let $B \colon \Set \rightarrow \Set$ be a polynomial functor, and
  $(B_\omega, \zeta)$ a final $B$-coalgebra.  Let $t_\zeta$ be the
  companion of $b_\zeta$. For any causal algebra
  $\alpha \colon FB_\omega \rightarrow B_\omega$, we have
  $\ctx_\alpha \leq t_\zeta$.
\end{theorem}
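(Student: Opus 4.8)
The plan is to combine three ingredients already available: (1) the characterisation of the companion of $b_\zeta$ in terms of the final sequence of $b_\zeta$ (Proposition~\ref{prop:cond-comp-mon}), which says $\ctx_\alpha \leq t_\zeta$ iff $\ctx_\alpha((b_\zeta)_i) \leq (b_\zeta)_i$ for every ordinal $i$; (2) the fact that, since $B$ is polynomial and hence $\omega$-continuous, the final sequence of $B$ stabilises at $\omega$, so the relevant final sequence of $b_\zeta$ on $\Rel_{B_\omega}$ also stabilises quickly, and the only interesting elements $(b_\zeta)_i$ are indexed by $i \leq \omega$; and (3) an explicit description of $(b_\zeta)_i$ as the relation "equal up to depth $i$", i.e. the kernel of $B_{\omega,i} \circ (\text{appropriate projection})$ — concretely, $(L,K) \in (b_\zeta)_i$ iff $B_{\omega,i}(L) = B_{\omega,i}(K)$, where here I abuse notation writing $B_{\omega,i}$ for the $i$-th projection from the final coalgebra $B_\omega$. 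This last fact should be a small lemma (or folklore): the $i$-th approximant of bisimilarity on the final coalgebra is exactly $i$-step behavioural equivalence.

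First I would establish (3): by induction on $i$, $(b_\zeta)_i = \ker(B_{\omega,i})$ as a relation on $B_\omega$, using that $\zeta$ is an isomorphism onto $BB_\omega$ and that $(b_\zeta)_{i+1} = (\zeta\times\zeta)^{-1}\circ\Rel(B)((b_\zeta)_i)$, together with the fact that on $\Set$, $\Rel(B)$ of a kernel pair is again related to a kernel (here polynomiality of $B$, or at least preservation of weak pullbacks, makes relation lifting well behaved). Then I would unfold what $\ctx_\alpha((b_\zeta)_i) \leq (b_\zeta)_i$ means: by definition $\ctx_\alpha(R) = \coprod_\alpha \circ \Rel(F)(R)$, so a pair lies in $\ctx_\alpha((b_\zeta)_i)$ iff it is of the form $(\alpha(F\pi_1(z)),\alpha(F\pi_2(z)))$ for some $z \in FR$ with $R = (b_\zeta)_i$. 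Writing $z$ via its two legs $f = \pi_1\circ(\text{span for }z), g = \pi_2\circ(\text{span for }z) \colon W \to B_\omega$ (so $z = F\langle f,g\rangle(w)$ for suitable $w$), the hypothesis $\text{im}\langle f,g\rangle \subseteq (b_\zeta)_i$ says precisely $B_{\omega,i}\circ f = B_{\omega,i}\circ g$, and the conclusion we must prove is $B_{\omega,i}(\alpha(Ff(w))) = B_{\omega,i}(\alpha(Fg(w)))$. But that is exactly the defining implication of a causal algebra (Definition~\ref{def:causal-alg}) applied to the set $W$, the functions $f,g$, and the index $i < \omega$. For the case $i = \omega$ (and any $i \geq \omega$, where $(b_\zeta)_i = (b_\zeta)_\omega$ is already language/behavioural equality and is a fixpoint), we use that $B_{\omega,\omega} = \id$ and $\alpha$ is a genuine function, so the inclusion is trivial; alternatively, causality at all finite stages plus $\omega$-continuity gives it.

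The main obstacle I anticipate is the bookkeeping in step (3) and in the passage between "a pair in $\ctx_\alpha(R)$" and "a span landing in $R$": one must be careful that $\Rel(F)(R)$ is generated by spans $\langle f,g\rangle\colon W\to B_\omega\times B_\omega$ factoring through $R$, so that the element $z\in FR$ genuinely corresponds to an element of $FW$ and a pair of maps $f,g$, and that causality is being applied with the correct set in the role of $X$ (here $W$, not $B_\omega$). Once the dictionary "$(b_\zeta)_i = \ker B_{\omega,i}$" and "$\ctx_\alpha(R)$ = image of spans through $R$ under $\alpha$" is in place, the inequality $\ctx_\alpha((b_\zeta)_i)\leq (b_\zeta)_i$ is a direct restatement of causality, and Proposition~\ref{prop:cond-comp-mon} finishes the proof. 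A clean alternative that sidesteps the explicit description of the final sequence: use Lemma~\ref{lm:ctx-alg} together with Theorem~\ref{thm:definable}/Corollary~\ref{cor:causal-iff-dl} to write $\alpha = \epsilon_\omega\circ\hat\alpha_{B_\omega}$ with $\epsilon_\omega$ the algebra induced by the distributive law $\tau$ of the companion monad $\cf B$, so $\ctx_\alpha \leq \ctx_{\epsilon_\omega}$; then invoke the result of~\cite{bppr:csllics14:fibupto,bppr:acta:16} that the contextual closure of an algebra induced by a distributive law is compatible — hence below the companion $t_\zeta$. I would present the direct argument as primary and mention this distributive-law route as a remark, since it is the one that "hides the distributive law from the end-user" exactly as advertised in the introduction.
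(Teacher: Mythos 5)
Your proposal is correct, and your \emph{primary} argument takes a genuinely different route from the paper, while your ``clean alternative'' is in fact exactly the paper's proof: the paper factors $\alpha = \epsilon_\omega \circ \hat{\alpha}_{B_\omega}$ via Corollary~\ref{cor:final-causal-alg}, applies Lemma~\ref{lm:ctx-alg} to get $\ctx_\alpha \leq \ctx_{\epsilon_\omega}$, and then invokes the bialgebra compatibility result of~\cite{bppr:acta:16} for the $\tau$-bialgebra $(B_\omega,\epsilon_\omega,\zeta)$ to conclude $\ctx_{\epsilon_\omega}\leq t_\zeta$. Your direct argument instead runs through Proposition~\ref{prop:cond-comp-mon} and the identification $(b_\zeta)_i = \ker(B_{\omega,i})$ for $i\leq\omega$ (with stabilisation at the diagonal from $\omega$ onwards), so that $\ctx_\alpha((b_\zeta)_i)\leq (b_\zeta)_i$ becomes a literal restatement of Definition~\ref{def:causal-alg} applied to the two legs of a span into $(b_\zeta)_i$. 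The steps you flag as needing care do all go through: the companion of $b_\zeta$ always exists in the complete lattice $\Rel_{B_\omega}$, so Proposition~\ref{prop:cond-comp-mon} applies; the induction $(b_\zeta)_{i+1}=\ker(BB_{\omega,i}\circ\zeta)=\ker(B_{\omega,i+1})$ uses that polynomial functors preserve (weak) pullbacks, hence $\Rel(B)(\ker f)=\ker(Bf)$, and that one may take $\zeta = B_{\omega+1,\omega}^{-1}$ up to the canonical isomorphism of final coalgebras; and the case $i\geq\omega$ reduces to $\ctx_\alpha(\Delta)\leq\Delta$, which holds because $\alpha$ is a function. What each approach buys: the paper's proof is shorter given the external compatibility theorem and situates the result in the fibrational framework of~\cite{bppr:acta:16}, whereas your direct proof is self-contained, makes visible exactly where (and only where) causality at each finite stage is used, and turns the informal remark after Proposition~\ref{prop:cond-comp-mon} --- that a sound technique should preserve every approximant of the greatest fixpoint --- into an actual proof. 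The only point to spell out in a write-up is the ``folklore'' lemma $(b_\zeta)_i=\ker(B_{\omega,i})$, which is not stated anywhere in the paper and genuinely requires the weak-pullback-preservation argument you sketch.
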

\begin{proof} %[of Theorem~\ref{thm:causal-below-t}]
  By Lemma~\ref{lm:connected}, $B$ satisfies the hypotheses of
  Theorem~\ref{thm:companion-codensity}, and hence by
  Lemma~\ref{lm:alg-codensity}, $\epsilon_\omega$ is the
  algebra induced by the distributive law $\tau$ of the companion.
  This means that $(B_\omega,\epsilon_\omega,\zeta)$ is a
  $\tau$-bialgebra, and it follows from~\cite[Corollary
  6.8]{bppr:acta:16} that $\ctx_{\epsilon_\omega}$ is
  $b_\zeta$-compatible.  Thus
  $\ctx_{\epsilon_\omega} \leq t_\zeta$.
  Now, let $\alpha \colon FB_\omega \rightarrow B_\omega$ be causal.  By
  Corollary~\ref{cor:final-causal-alg}, there exists a natural
  transformation $\hat{\alpha} \colon F \Rightarrow \cf{B}$ such that
  $\alpha = \epsilon_{\omega} \circ \hat{\alpha}_{B_\omega}$.  By
  Lemma~\ref{lm:ctx-alg} we obtain
  $\ctx_\alpha \leq \ctx_{\epsilon_\omega}$, hence
  $\ctx_\alpha \leq t_\zeta$.
\end{proof}

This implies that one can safely use the contextual closure for
\emph{any} causal algebra, such as union, concatenation and Kleene
star of languages, or product and sum of streams.
In particular, we recover the soundness of the contextual closure
in Example~\ref{ex:up-to-da} from the above theorem
and the simple observation that language union and composition are causal.

Endrullis et
al.~\cite{EndrullisHB13} prove the soundness of \emph{causal contexts}
in combination with other up-to techniques, for equality of
streams. The soundness of causal algebras for streams is a special
case of Theorem~\ref{thm:causal-below-t}, but the latter provides
more: being below the companion, it is possible to compose it to other
such functions to obtain combined up-to techniques in a modular
fashion, cf.~\cite{pous:lics16:cawu}.

%	Given an algebra $\alpha \colon FX \rightarrow X$, we define
%	$$
%	\ctx_\alpha = {\textstyle{\coprod}_{\alpha}} \circ {\Rel(F)}  : \Rel_X \rightarrow \Rel_X \,.
%	$$
%\begin{lemma}
%	Let $X$ be a set, and consider algebras $\alpha,\beta$ and a natural transformation $\kappa$ as below:
%$$
%\xymatrix{
%	FX \ar[rr]^{\kappa_{X}} \ar[dr]_{\alpha} & & GX \ar[dl]^{\beta} \\
%		& X &
%}
%$$
%	Then $\ctx_\alpha \leq \ctx_\beta$.
%	Consequently, for any $b \colon \Rel_X \rightarrow \Rel_X$ with companion $t$, if $\ctx_\beta \leq t$
%	then $\ctx_\alpha \leq t$.
%\end{lemma}
%
%\marginpar{lemma to appendix, corollary becomes a theorem}
%
%\begin{corollary}
%	Let $B \colon \Set \rightarrow \Set$ be a functor which preserves connected limits,
%	and let $\zeta \colon B_\omega \rightarrow BB_\omega$ be a final coalgebra.
%	Let $t$ be the companion of $\zeta^* \circ \Rel(B)$.
%	For any causal algebra $\alpha \colon FB_\omega \rightarrow B_\omega$:
%	$\ctx_\alpha \leq t$.
%\end{corollary}

\section{Abstract GSOS}%
\label{sec:gsos}

To obtain expressive specification formats, Turi and
Plotkin~\cite{TuriP97} use natural transformations of the form
$\lambda \colon F(B \times \Id) \Rightarrow BF^*$, where $F^*$ is the
free monad for $F$. These are the so-called \emph{abstract GSOS
  specifications}. In this section we show that they are
actually equally expressive as plain distributive laws of a functor
$F$ over $B$, if the conditions of Theorem~\ref{thm:companion-codensity} apply
(assuring that the codensity monad is the companion).
This is in a similar spirit as Section~\ref{sec:causal-alg-dl}, but we give a proof
here that does not require results on causal algebras.

If $B$ has a final coalgebra $(Z,\zeta)$, then any abstract GSOS
specification $\lambda \colon F(B \times \Id) \Rightarrow BF^*$
defines an algebra $\alpha \colon FZ \rightarrow Z$ on it, which is
the unique algebra making the following diagram commute.
\[
\xymatrix{
	FZ \ar[d]_{\alpha} \ar[r]^-{F\langle \zeta, \id \rangle} % chktex 3
		& F(B \times \Id)Z \ar[r]^-{\lambda_Z} % chktex 3
		&  BF^*Z \ar[d]^{B\alpha^*} \\ % chktex 3
	Z \ar[rr]_{\zeta} % chktex 3
		& & BZ
}
\]
Here $\alpha^*$ is the Eilenberg-Moore algebra for the free monad
corresponding to $\alpha$.  Intuitively, this algebra gives the
interpretation of the operations defined by $\lambda$.

Like plain distributive laws (Lemma~\ref{lm:dl-alpha}), abstract GSOS
specifications induce natural transformations of the form
$F\fin{B} \Rightarrow \fin{B}$.
\begin{lemma}\label{lm:dl-alpha-gsos}
  For every $\lambda \colon F(B \times \Id) \Rightarrow BF^*$ there is
  a unique $\alpha \colon F\fin{B} \Rightarrow \fin{B}$ such that for
  all $i \in \Ord$:
  $\alpha_{i+1} = B\alpha_i^* \circ \lambda_{B_i} \circ F\langle \id,
  B_{i+1,i}\rangle$.
  Moreover, if $B_{k+1,k}$ is an isomorphism for some $k$, then
  $\alpha_k$ is the algebra induced by $\lambda$ on the final
  coalgebra.
\end{lemma}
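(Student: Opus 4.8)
The plan is to follow the pattern of the proof of Lemma~\ref{lm:dl-alpha}, with the successor step now dictated by the abstract GSOS format instead of a plain distributive law. First I would construct $\alpha$ by transfinite recursion. The component $\alpha_0 \colon FB_0 \to B_0$ is forced since $B_0 = 1$ is terminal; the successor component is given by the prescribed formula $\alpha_{i+1} = B\alpha_i^*\circ\lambda_{B_i}\circ F\langle\id, B_{i+1,i}\rangle$, whose types match because $(B\times\Id)B_i = BB_i\times B_i = B_{i+1}\times B_i$ and $\alpha_i^*\colon F^*B_i\to B_i$ is the Eilenberg--Moore algebra of the $F$-algebra $\alpha_i$; and for a limit ordinal $j$, since $B_j = \lim_{i<j}B_i$, naturality forces $\alpha_j$ to be the unique mediating arrow satisfying $B_{j,i}\circ\alpha_j = \alpha_i\circ FB_{j,i}$ for all $i<j$ (the required cone condition for this to be well defined is naturality for the pairs $(i,i')$, $i\le i'<j$, which is part of the induction hypothesis below). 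This prescription determines $\alpha$ uniquely among natural transformations satisfying the successor equation, so only existence, i.e.\ naturality, requires work.

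Naturality, namely $B_{j,i}\circ\alpha_j = \alpha_i\circ FB_{j,i}$ for all $i\le j$, I would prove by transfinite induction on $j$, exactly as in Lemma~\ref{lm:dl-alpha}: the case $i=j$ is trivial, the limit-$j$ case is immediate from the definition of $\alpha_j$, and in the successor case $j=m+1$ it suffices to show the implication ``square $(i,m)$ commutes $\Rightarrow$ square $(i+1,m+1)$ commutes''. The core of that step is a diagram chase using (a) the defining identities of the final sequence, $B_{m+1,i+1}=BB_{m,i}$ together with the functoriality identities $B_{i+1,i}\circ B_{m+1,i+1}=B_{m+1,i}=B_{m,i}\circ B_{m+1,m}$, which yield the pairing identity $\langle\id,B_{i+1,i}\rangle\circ BB_{m,i} = ((B\times\Id)B_{m,i})\circ\langle\id,B_{m+1,m}\rangle$; (b) naturality of $\lambda$ along $B_{m,i}\colon B_m\to B_i$, i.e.\ $\lambda_{B_i}\circ F(B\times\Id)B_{m,i} = BF^*B_{m,i}\circ\lambda_{B_m}$; and (c) the fact that an $F$-algebra morphism lifts to a morphism of the induced Eilenberg--Moore algebras for $F^*$, so the induction hypothesis ($B_{m,i}$ is an $F$-algebra morphism from $\alpha_m$ to $\alpha_i$) gives $\alpha_i^*\circ F^*B_{m,i}=B_{m,i}\circ\alpha_m^*$. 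Combining these rewrites both sides of the square $(i+1,m+1)$ to $B(B_{m,i}\circ\alpha_m^*)\circ\lambda_{B_m}\circ F\langle\id,B_{m+1,m}\rangle$. The remaining subcases inside the successor step, $i=0$ (trivial, $B_0$ terminal) and $i$ a limit ordinal (use the induction hypothesis and the universal property of $B_i$ as a limit, as in Lemma~\ref{lm:dl-alpha}), go through verbatim.

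For the ``moreover'' part, suppose $B_{k+1,k}$ is an isomorphism, so $\zeta \eqdef B_{k+1,k}^{-1}\colon B_k\to BB_k$ is a final $B$-coalgebra. By the naturality just established for the pair $(k,k+1)$ and invertibility of $B_{k+1,k}$ we get $\alpha_{k+1}=B_{k+1,k}^{-1}\circ\alpha_k\circ FB_{k+1,k}$; precomposing the successor formula for $\alpha_{k+1}$ with $F(B_{k+1,k}^{-1})$ and using $\langle\id_{B_{k+1}},B_{k+1,k}\rangle\circ B_{k+1,k}^{-1}=\langle B_{k+1,k}^{-1},\id_{B_k}\rangle=\langle\zeta,\id\rangle$ yields exactly $\zeta\circ\alpha_k=B\alpha_k^*\circ\lambda_{B_k}\circ F\langle\zeta,\id\rangle$, which is the defining equation of the algebra induced by $\lambda$ on the final coalgebra; uniqueness of that algebra concludes. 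I expect the main obstacle to be bookkeeping in the successor step of the naturality induction, in particular getting the pairing identity (a) right and tracking which connecting maps of the final sequence are definitionally equal to applications of $B$, rather than anything conceptually deep; the one genuinely imported fact is that $(-)^*$ sends $F$-algebra morphisms to Eilenberg--Moore morphisms of $F^*$.
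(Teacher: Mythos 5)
Your proposal is correct and follows essentially the same route as the paper: the paper likewise defines $\alpha$ by the successor formula, proves naturality by the same transfinite induction as in Lemma~\ref{lm:dl-alpha} with the successor square decomposed into exactly your three pieces (the pairing/functoriality identity, naturality of $\lambda$, and the fact that ${(-)}^*$ sends $F$-algebra morphisms to morphisms of the induced $F^*$-algebras), and derives the ``moreover'' part from naturality at $(k,k+1)$ together with invertibility of $B_{k+1,k}$ and the identity $\langle\id,B_{k+1,k}\rangle\circ B_{k+1,k}^{-1}=\langle B_{k+1,k}^{-1},\id\rangle$.
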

\begin{proof}
  The transformation $\alpha$ is determined by the successor case
  given in the definition.  Naturality is proved in a similar way as
  in Lemma~\ref{lm:dl-alpha}, with the relevant diagram in the
  successor case replaced by:
  \[
  \xymatrix@C=1.5cm{
    FBB_j \ar[r]^-{F\langle \id, B_{j+1,j}\rangle} \ar[d]_{FBB_{j,i}} % chktex 3
    & F(B\times \Id) B_j \ar[r]^-{\lambda_{B_j}} \ar[d]^{F(B \times \Id)B_{j,i}} % chktex 3
    & BF^*B_j \ar[r]^{B\alpha_j^*} \ar[d]^{BF^*B_{j,i}} % chktex 3
    & BB_j \ar[d]^{BB_{j,i}}\\ % chktex 3
    FBB_i \ar[r]_-{F\langle \id, B_{i+1,i}\rangle} % chktex 3
    & F(B\times \Id) B_i \ar[r]_-{\lambda_{B_i}} % chktex 3
    & BF^*B_i \ar[r]_{B\alpha_i^*} % chktex 3
    & BB_i
  }
  \]
  The left square commutes since
  $B_{i+1,i} \circ BB_{j,i} = B_{i+1,i} \circ B_{j+1,i+1} = B_{j+1,i}
  = B_{j,i} \circ B_{j+1,j}$
  by functoriality and definition of the final sequence. The middle
  square commutes by naturality.  The one on the right commutes, since
  $B_{j,i}$ is (by assumption in the inductive proof) an algebra
  morphism, i.e., $B_{j,i} \circ \alpha_j = \alpha_i \circ FB_{j,i}$,
  and hence $B_{j,i} \circ \alpha_j^* = \alpha_i^* \circ F^*B_{j,i}$
  (it holds in general that the ${(-)}^*$ construction preserves algebra
  homomorphisms).

  Suppose $B_{k+1,k} \colon B_{k+1} \rightarrow B_{k}$ is an
  isomorphism. Then $B_{k+1,k}^{-1} \colon B_k \rightarrow B(B_{k+1})$
  is a final $B$-coalgebra. Consider the following diagram:
  \[
  \xymatrix@C=1.5cm{
    FB_k \ar[dd]_{\alpha_k} \ar[r]^-{F\langle B_{k+1,k}^{-1}, \id \rangle} \ar[dr]_{FB_{k+1,k}^{-1}} % chktex 3
    & F(B \times \Id)B_k \ar[r]^-{\lambda_{B_k}} % chktex 3
    &  BF^*B_k \ar[dd]^{B\alpha_k^*} \\ % chktex 3
    &
    FBB_k \ar[u]_{F\langle \id, B_{k+1,k}\rangle} \ar[dr]^{\alpha_{k+1}} % chktex 3
    & \\
    B_k \ar[rr]_{B_{k+1,k}^{-1}} % chktex 3
    & & BB_k
  }
  \]
  The big triangle commutes by naturality and the fact that
  $B_{k+1,k}$ is an isomorphism, the small triangle since $B_{k+1,k}$
  is an isomorphism, and the remaining inner shape by definition of
  $\alpha$. Hence, $\alpha_k$ is the algebra induced on the final
  coalgebra by $\lambda$.
\end{proof}
This places abstract GSOS specifications within the framework of the
companion, constructed via the codensity monad of the final sequence
$\fin{B}$. Whenever that construction applies (e.g., for polynomial
functors), any algebra defined by an abstract GSOS is thus already
definable by a plain distributive law over $B$.
\begin{theorem}\label{thm:abs-gsos-def}
  Let $B \colon \C \rightarrow \C$ such that $\cf B$ exists, $B$
  preserves $\cf B$, and $B_{k+1,k} \colon B_{k+1} \rightarrow B_k$ is an
  iso for some $k$.  Every algebra induced on the final coalgebra by
  an abstract GSOS specification
  $\lambda \colon F(B \times \Id) \Rightarrow BF^*$ is definable by a
  distributive law over $B$ (cf. Section~\ref{sec:causal-alg-dl}).
\end{theorem}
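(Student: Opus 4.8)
The plan is to identify the required distributive law with the companion of $B$ itself, exploiting that under the hypotheses the companion is realised as the codensity monad $\cf B$ of the final sequence. Fix an abstract GSOS specification $\lambda \colon F(B \times \Id) \Rightarrow BF^*$ and let $\alpha_k$ be the algebra it induces on the final coalgebra. Since $B_{k+1,k}$ is an isomorphism, $B_k$ is the carrier of a final $B$-coalgebra; by Lemma~\ref{lm:dl-alpha-gsos}, $\lambda$ also determines a natural transformation $\alpha \colon F\fin{B} \Rightarrow \fin{B}$, and the last clause of that lemma states that its $k$-th component is exactly this $\alpha_k$. It therefore suffices, by the notion of definability recalled just before Corollary~\ref{cor:causal-iff-dl}, to produce a distributive law $\rho \colon GB \Rightarrow BG$ over $B$ with induced algebra $\beta \colon GB_k \to B_k$ together with a natural transformation $\kappa \colon F \Rightarrow G$ satisfying $\beta \circ \kappa_{B_k} = \alpha_k$.

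For the witnessing data I would take $G = \cf B$. By assumption $\cf B$ exists and is preserved by $B$, so Theorem~\ref{thm:companion-codensity} (instantiated at $A = B$) supplies $\tau \colon \cf{B} B \Rightarrow B\cf{B}$ making $(\cf{B}, \tau)$ the companion of $B$; in particular $\tau$ is a distributive law over $B$, and I set $\rho = \tau$. For $\kappa$ I apply the universal property of the codensity monad $(\cf B, \epsilon)$ of $\fin{B}$ to the natural transformation $\alpha \colon F\fin{B} \Rightarrow \fin{B}$: this yields a unique $\hat{\alpha} \colon F \Rightarrow \cf{B}$ with $\epsilon \circ \hat{\alpha}\fin{B} = \alpha$, and I put $\kappa = \hat{\alpha}$. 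Finally, the algebra $\beta$ induced by $\tau$ on $B_k$ is computed by Lemma~\ref{lm:alg-codensity} --- whose hypotheses (existence and preservation of $\cf B$, and $B_{k+1,k}$ an isomorphism) are precisely ours --- to be $\epsilon_k$. Reading off the $k$-th component of the equation $\epsilon \circ \hat{\alpha}\fin{B} = \alpha$ then gives $\epsilon_k \circ \hat{\alpha}_{B_k} = \alpha_k$, that is $\beta \circ \kappa_{B_k} = \alpha_k$, as required.

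I do not anticipate a real obstacle: the argument is an assembly of Lemma~\ref{lm:dl-alpha-gsos}, Theorem~\ref{thm:companion-codensity} and Lemma~\ref{lm:alg-codensity}, and the only bookkeeping worth attention is checking that the three hypothesis packages all match the three assumptions in the statement. The single point of genuine content is the identification provided by the last clause of Lemma~\ref{lm:dl-alpha-gsos}, namely that the component $\alpha_k$ coming out of the final-sequence construction really coincides with the algebra cut out by the GSOS diagram involving the free monad $F^*$ and its Eilenberg--Moore extension $\alpha^*$; granting that, the desired factorisation of $\alpha_k$ through the distributive law $\tau$ over $B$, with mediating natural transformation $\hat{\alpha} \colon F \Rightarrow \cf{B}$, is immediate.
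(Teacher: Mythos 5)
Your proof is correct and follows essentially the same route as the paper's: both invoke Lemma~\ref{lm:dl-alpha-gsos} to obtain $\alpha\colon F\fin{B}\Rightarrow\fin{B}$ with $\alpha_k$ the induced algebra, factor it through $\cf{B}$ via the universal property of the codensity monad, and identify $\epsilon_k$ as the algebra induced by the companion's distributive law using Lemma~\ref{lm:alg-codensity}. Your version merely makes the witnessing data $(G,\rho,\kappa)=(\cf{B},\tau,\hat\alpha)$ and the hypothesis-matching more explicit, which is fine.
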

\begin{proof}
  By Lemma~\ref{lm:dl-alpha-gsos}, the algebra induced by an abstract
  GSOS $\lambda$ is given by $\alpha_k$ for some
  $\alpha \colon F\fin{B} \Rightarrow \fin{B}$.  By the universal
  property of the codensity monad $(\cf{B}, \epsilon)$, there exists a
  (unique) natural transformation
  $\hat{\alpha} \colon F \Rightarrow \cf{B}$ such that
  $\alpha = \epsilon \circ \hat{\alpha} \fin{B}$.  This means in particular
  that $\alpha_k = \epsilon_k \circ \hat{\alpha}_{B_k}$. By
  Lemma~\ref{lm:alg-codensity}, $\epsilon_k$ is the
  algebra induced by a distributive law (i.e., the companion), so $\alpha_k$ is definable by
  a distributive law over $B$.
\end{proof}
In this sense, abstract GSOS is no more expressive than plain
distributive laws. Note, however, that this does involve moving to a
different (larger) syntax.

% Theorem~\ref{thm:companion-codensity} applies, for instance, if $B$ is
% a polynomial functor on $\Set$ (Lemma~\ref{lm:connected}).

\begin{remark}
  Every abstract GSOS specification
  $\lambda \colon F(B \times \Id) \Rightarrow BF^*$ corresponds to a
  unique distributive law
  $\lambda^\dagger \colon F^*(B \times \Id) \Rightarrow (B \times \Id)F^*$
  of the free monad $F^*$ over the (cofree) copointed functor
  $B \times \Id$, see~\cite{LenisaPW04}. The algebra induced by
  $\lambda$ decomposes as the algebra induced by $\lambda^\dagger$ and
  the canonical natural transformation $F \Rightarrow F^*$.  This
  implies that every algebra induced by an abstract GSOS is definable
  by a distributive law over the copointed functor $B \times
  \Id$.
  Theorem~\ref{thm:abs-gsos-def} strengthens this to definability by a
  distributive law over $B$.
\end{remark}

% Indeed, in this case, Lemma~\ref{lm:dl-alpha-gsos} and
% Corollary~\ref{cor:final-causal-alg} assert that any such GSOS
% defines only causal operations.

\section{Preserving the right Kan extension}%
\label{sec:pres-pow}

Our main result for constructing the companion of functors $(A,B)$
requires that $B$ preserves the right Kan extension
$\Ran{\fin{A}}{\fin{B}}$ (Theorem~\ref{thm:companion-codensity}).  In
Section~\ref{sec:comp-set} we focused on polynomial functors on
$\Set$, which always satisfy this condition.
However, polynomial functors exclude a particularly important example:
the (finite) powerset functor $\powf$. This functor and its variants
are used to model, for instance, labelled transition systems as coalgebras.
In the current section we distill a concrete condition on a functor
$B \colon \Set \rightarrow \Set$ to preserve the Kan extension as above,
taking $A=B$ to make the notation somewhat lighter. In particular, we show
a negative result: the finite powerset functor does \emph{not}
preserve the relevant right Kan extension, and hence falls outside
the scope of Theorem~\ref{thm:companion-codensity}. Notice that
$\powf$ could still have a companion, and it could even be the codensity
monad of the final sequence; however, it does not follow
from our results, and remains an open question.

It will be useful to slightly reformulate preservation.
Let $F \colon \C \rightarrow \D$, $G \colon \C \rightarrow \E$  and
$K \colon \E \rightarrow \F$ be functors.
Consider the right Kan extensions
$(\Ran{F}{G},\epsilon)$ and $(\Ran{F}{KG},\epsilon')$.
By the univeral property of the latter, there is a unique
\[
\widehat{K\epsilon} \colon K\Ran{F}{G} \Rightarrow \Ran{F}{KG}
\]
such that $\epsilon' \circ \widehat{K\epsilon} F = K\epsilon$.
Now, $K$ preserves $\Ran{F}{G}$ iff this canonical map $\widehat{K\epsilon}$
is an isomorphism.
\begin{lemma}\label{lm:canonical-map}
  In the above, if $\D=\E=\F = \Set$ and the right Kan extensions are
  presented as in Lemma~\ref{lm:codensity-set}, then the canonical map
  $\widehat{K\epsilon}$ is given by
  \[
  \begin{array}{rcccl}
    {({(\widehat{K\epsilon})}_X(S))}_A
    &\colon & {(FA)}^X &\rightarrow &KGA \\
    & & f  & \mapsto & K(\lambda \alpha. \alpha_A(f))(S)
  \end{array}
  \]
  for all sets $X,A$ and all $S \in K\Ran{F}{G}(X)$.
\end{lemma}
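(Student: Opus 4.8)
The plan is to recognise $\widehat{K\epsilon}$ as a special case of the induced natural transformation spelled out in the last clause of Lemma~\ref{lm:codensity-set}, and then to unfold that description using the explicit formulas for $\epsilon$ and for the functorial action of $\Ran{F}{G}$ provided by the same lemma.

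First I would note that the defining equation $\epsilon' \circ \widehat{K\epsilon} F = K\epsilon$ says precisely that $\widehat{K\epsilon}$ is the natural transformation induced by $K\epsilon \colon (K\Ran{F}{G})F \Rightarrow KG$ through the universal property of the right Kan extension $(\Ran{F}{KG}, \epsilon')$. Hence I may invoke the final clause of Lemma~\ref{lm:codensity-set}, with $H = K\Ran{F}{G}$ and with $K\epsilon$ in the role of the natural transformation $\beta \colon HF \Rightarrow KG$. This gives immediately
\[
  \bigl(\widehat{K\epsilon}_X(S)\bigr)_A \colon (FA)^X \to KGA, \qquad
  f \longmapsto (K\epsilon)_A \circ K(\Ran{F}{G}f)(S) .
\]

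Next I would simplify the composite $(K\epsilon)_A \circ K(\Ran{F}{G}f) = K(\epsilon_A) \circ K(\Ran{F}{G}f)$, which, by functoriality of $K$, equals $K\bigl(\epsilon_A \circ \Ran{F}{G}f\bigr)$; so it suffices to identify the $\Set$-map $\epsilon_A \circ \Ran{F}{G}f \colon \Ran{F}{G}(X) \to GA$. By Lemma~\ref{lm:codensity-set}, $\Ran{F}{G}f$ sends a natural transformation $\alpha \colon {(F-)}^X \Rightarrow G$ to the one whose $A$-component is $g \mapsto \alpha_A(g \circ f)$, and $\epsilon_A$ evaluates a natural transformation in $\Ran{F}{G}(FA)$ at $\id_{FA}$. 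Composing, $\alpha \mapsto \alpha_A(\id_{FA} \circ f) = \alpha_A(f)$, i.e.\ $\epsilon_A \circ \Ran{F}{G}f = \lambda\alpha.\,\alpha_A(f)$. Substituting this back yields $\bigl(\widehat{K\epsilon}_X(S)\bigr)_A(f) = K(\lambda\alpha.\,\alpha_A(f))(S)$, which is exactly the claimed description; naturality in $X$ of $\widehat{K\epsilon}$ is automatic, being part of the statement of Lemma~\ref{lm:codensity-set}.

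I do not expect any genuine obstacle here. The only thing that requires care is the bookkeeping: keeping track of which counit ($\epsilon$ or $\epsilon'$) belongs to which of the two right Kan extensions, and checking that the universal property used to \emph{define} $\widehat{K\epsilon}$ is verbatim the one that produces the induced map in Lemma~\ref{lm:codensity-set}. Once that identification is in place, the rest is a mechanical unwinding of the formulas for $\epsilon$ and $\Ran{F}{G}$ together with a single appeal to functoriality of $K$.
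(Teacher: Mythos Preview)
Your proposal is correct and follows essentially the same route as the paper: both invoke the last clause of Lemma~\ref{lm:codensity-set} to write ${(\widehat{K\epsilon}_X(S))}_A(f) = (K\epsilon)_A \circ K(\Ran{F}{G}f)(S)$, use functoriality of $K$ to rewrite this as $K(\epsilon_A \circ \Ran{F}{G}f)(S)$, and then compute $\epsilon_A \circ \Ran{F}{G}f = \lambda\alpha.\,\alpha_A(f)$ by unfolding the explicit formulas for $\epsilon$ and for the functorial action of $\Ran{F}{G}$.
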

\begin{proof}
  By Lemma~\ref{lm:codensity-set}, we have that
  ${({(\widehat{K\epsilon})}_X(S))}_A \colon {(FA)}^X \rightarrow KGA$ is
  given by
  \[
  f \mapsto {(K\epsilon)}_A \circ K\Ran{F}{G}(f)(S) = K(\epsilon_A \circ
  \Ran{F}{G}(f))(S)\,.
  \]
  But, for all $\alpha \colon {(F-)}^X \Rightarrow G$, we have
  \begin{align*}
  \epsilon_A(\Ran{F}{G}(f)(\alpha))
  &= {(\Ran{F}{G}(f)(\alpha))}_A(\id_{FA}) \\
  &= (\lambda g. \alpha_A(g \circ f))(\id_{FA}) \\
  &= \alpha_A(\id_{FA} \circ f) \\
  &= \alpha_A(f)
  \end{align*}
%  \epsilon_A(\Ran{F}{G}(f))(\alpha) = \epsilon_A(\lambda g. \alpha_A(g
%  \circ f)) = \alpha_A(\id_{FA} \circ f) = \alpha_A(f)
  where the first two steps follow from Lemma~\ref{lm:codensity-set}.
\end{proof}
\noindent To simplify the notation, below we denote the finite
powerset by $\pfs$.

\begin{proposition}
  The finite powerset functor $\pfs \colon \Set \rightarrow \Set$ does
  not preserve $\Ran{\fin{\pfs}}{\fin{\pfs}}$.
\end{proposition}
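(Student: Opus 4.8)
The plan is to invoke the reformulation of preservation stated just before Lemma~\ref{lm:canonical-map}: the functor $\pfs$ preserves $R \eqdef \Ran{\fin\pfs}{\fin\pfs}$ if and only if the canonical natural transformation $\widehat{\pfs\epsilon} \colon \pfs R \Rightarrow \Ran{\fin\pfs}{\pfs\fin\pfs}$ is an isomorphism. Both right Kan extensions here exist and are described concretely by Lemma~\ref{lm:codensity-set}: the terminal sequence $\fin\pfs$ eventually stabilises (classically, at $\omega+\omega$), so the comma categories $\Delta_X/\fin\pfs$ are essentially small and Lemma~\ref{lm:pw-codensity} applies. It therefore suffices to produce a single set $X$ for which $\widehat{\pfs\epsilon}_X$ fails to be bijective. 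By Lemma~\ref{lm:canonical-map} (instantiated with $\C = \Ord^\op$, $F = G = \fin\pfs$ and $K = \pfs$), the map $\widehat{\pfs\epsilon}_X$ sends a finite set $S$ of natural transformations $\alpha \colon {(\fin\pfs -)}^X \Rightarrow \fin\pfs$ to the natural transformation whose component at an ordinal $i$ maps $f \colon X \to P_i$ to $\{\alpha_i(f) \mid \alpha \in S\} \subseteq P_i$, writing $P_i = \fin\pfs(i)$.

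I would show that this map is not surjective. The key observation is that every natural transformation in the image has fibres of size at most $|S|$ — equivalently, the cardinalities $|\beta_i(f)|$ are bounded by one fixed natural number. So it is enough to exhibit a natural transformation $\beta \colon {(\fin\pfs -)}^X \Rightarrow \pfs\fin\pfs$ for which these cardinalities are not uniformly bounded. This is exactly where the failure of $\omega$-continuity of $\pfs$ is used: for a polynomial functor the corresponding canonical map is an isomorphism by Lemma~\ref{lm:connected}, and the freedom to build such a $\beta$ comes from the stages $i \ge \omega$ of $\fin\pfs$, which carry genuinely new information precisely because $\fin\pfs$ does not stabilise at $\omega$. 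Concretely I would take $X$ large enough (for instance $X = P_\omega$), define $\beta$ first at a stage $\lambda$ with $\omega \le \lambda$ so that $\beta_\lambda$ already takes values of arbitrarily large finite size, and then propagate the definition to all other ordinals by naturality; below $\omega$ everything is finite and the connecting maps $P_{i+1,i}$ are (split) epis, so naturality there is routine, whereas the unboundedness is carried by the components at $\omega$ and above.

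The main obstacle is the explicit construction of $\beta$ together with the verification of its naturality, especially at limit ordinals. Candidate components at successor stages are easy to write down, since $P_{i+1} = \pfs(P_i)$ provides concrete structure to exploit, but one must check that the chosen family extends coherently through every limit ordinal up to the stabilisation point, and — the delicate point — that the naturality constraints imposed at those limit ordinals do not secretly force $\beta$ to have uniformly bounded fibres after all, in which case it would be reachable by $\widehat{\pfs\epsilon}_X$. Should that happen, one would instead look for a bounded-fibre $\beta$ that cannot be split into finitely many natural sections in a coherent way, again relying on the stages past $\omega$. Once a $\beta$ outside the image of $\widehat{\pfs\epsilon}_X$ is in hand, $\widehat{\pfs\epsilon}_X$ is not a bijection, so $\pfs$ does not preserve $\Ran{\fin\pfs}{\fin\pfs}$.
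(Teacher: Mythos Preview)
Your opening move---reduce to Lemma~\ref{lm:canonical-map} and show the canonical map $\widehat{\pfs\epsilon}$ is not an isomorphism---is exactly the paper's. The divergence is in which half of ``isomorphism'' you attack, and this is where your proposal runs into trouble.

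The paper shows non-\emph{injectivity}, and the witness is tiny. Take $X=\{x,y\}$ and define three natural transformations $c^x,c^y,d\colon \fin\pfs^X\Rightarrow\fin\pfs$ by $c^x_i(f)=f(x)$, $c^y_i(f)=f(y)$, and $d_i(f)=f(x)$ if $f(x)=\emptyset$, else $d_i(f)=f(y)$. Naturality of $d$ holds because direct image preserves emptiness. Since $d_i(f)\in\{f(x),f(y)\}$ for every $i$ and $f$, the formula in Lemma~\ref{lm:canonical-map} gives $\widehat{\pfs\epsilon}_X(\{c^x,c^y\})=\widehat{\pfs\epsilon}_X(\{c^x,c^y,d\})$. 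Done---no limit ordinals, no large $X$, no stabilisation of $\fin\pfs$ needed.

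You instead aim for non-\emph{surjectivity} via an unboundedness argument. The observation that anything in the image has values of size bounded by $|S|$ is correct, but you never construct the witness $\beta$; you flag exactly the hard part (coherence through limit ordinals, and the possibility that naturality forces boundedness after all) and leave it unresolved. That is the whole proof, so what you have is a plan, not an argument. Your fallback---a bounded $\beta$ that cannot be split into finitely many coherent natural sections---is in fact the dual of the paper's injectivity witness, but again you give no construction. The lesson here is that injectivity is the cheap direction: rather than building one complicated $\beta$ outside the image, find two simple finite sets that collide.
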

\begin{proof}
  We show that the canonical map
  $\widehat{\pfs \epsilon} \colon \pfs \Ran{\fin{\pfs}}{\fin{\pfs}}
  \Rightarrow \Ran{\fin{\pfs}}{\pfs\fin{\pfs}}$ is not an isomorphism,
  using the characterisation in Lemma~\ref{lm:canonical-map}.  The
  latter gives
  \begin{equation}\label{eq:hatepspow}
    \begin{array}{rcccl}
      {({(\widehat{\pfs\epsilon})}_X(S))}_i
      &\colon & {(\pfs_i)}^X &\rightarrow &\pfs \pfs_i \\
      & & f  & \mapsto & \{\alpha_i(f) \mid \alpha \in S\}
    \end{array}
  \end{equation}
  for all $i \in \Ord$, since
  $\pfs(\lambda \alpha. \alpha_i(f))(S) = \{\alpha_i(f) \mid \alpha
  \in S\}$.

  Now, let $X = \{x,y\}$ and define
  $c^x \colon \fin{\pfs}^X \Rightarrow \fin{\pfs}$ by
  $c^x_i(f)= f(x)$, and similarly $c^y_i(f) = f(y)$. Further, define
  $d \colon \fin{\pfs}^X \Rightarrow \fin{\pfs}$ by
  \[d_i(f) = \begin{cases} f(x) & \text{ if } f(x) = \emptyset \\ f(y) & \text{ otherwise} \end{cases}\]
  for all $i \in \Ord$. It is easy to check that $c^x$ and $c^y$ are
  natural, and for $d$ this follows since the direct image of a set is
  empty iff the set itself is empty.
  % In fact, $c^x$, $c^y$ and $d$ arise as in
  % Lemma~\ref{lm:dl-alpha-gsos}
  %	from the distributive law presented by the GSOS specification
  %	$$
  %	\frac{x \xrightarrow{a} x'}{c_x(x,y) \rightarrow x'}
  %	\quad
  %	\frac{y \xrightarrow{a} y'}{c_x(x,y) \rightarrow y'}
  %	\quad
  %	\frac{x \xrightarrow{a} x' \quad y \xrightarrow{b} y'}{d(x,y) \rightarrow y'}
  %	$$
  %	but we do not go into details here.

  By the concrete characterisation in~\eqref{eq:hatepspow} and the
  definition of $c^x,c^y$ and $d$ we have
  ${(\widehat{\pfs\epsilon})}_X(\{c^x,c^y\}) = {(\widehat{\pfs\epsilon})}_X(\{c^x,c^y,d\})$. Hence,
  $\widehat{\pfs\epsilon}$ is not an isomorphism.
\end{proof}

\section{Related Work}%
\label{sec:rw}

The central notion of \emph{companion} proposed in the current paper
is a categorical generalisation of the lattice-theoretic notion, which
appeared first in~\cite{paco13} and was studied systematically
in~\cite{pous:lics16:cawu}. The construction of the companion
in terms of right Kan extensions generalises the lattice-theoretic
results of Parrow and Weber~\cite{ParrowWeber16,pous:lics16:cawu}.
To the best of our knowledge, our categorical notion of companion,
its structural properties, its construction as a right Kan extension
and the implications for causality are orginal contributions.

The current paper fits in the tradition of distributive laws in
universal coalgebra, started by Turi and Plotkin~\cite{TuriP97} and
subsequently extended in numerous papers.  The companion is
characterised as the final object in a category of distributive laws,
or, more generally, morphisms of endofunctors
(e.g.,~\cite{LenisaPW00}). Morphisms between these distributive laws
have been studied in various
papers~\cite{PowerW02,Watanabe02,LenisaPW00,KlinN15}, but a
final distributive law (the companion) does not appear there.

The current work is a thoroughly extended version of a conference
paper~\cite{PousR17}.  The new material includes proofs of all
results, examples, an introductory section on coinduction principles
(Section~\ref{sec:dl}), a new result on the powerset functor
(Section~\ref{sec:pres-pow}) and a generalisation from the companion
of a single functor to the companion of a pair $(A,B)$ of functors,
which allows to abstractly capture `causal operations' between
different final coalgebras (Section~\ref{sec:causal}).

The recent~\cite{BPR17} takes a different and more abstract approach:
there, the companion is constructed based on the theory of locally
presentable categories and accessible functors. That paper also
studies several other abstract properties of the companion, and
features higher-order companions.  The abstract constructions
of~\cite{BPR17} do not mention causality or its implications, which we
obtain here through the more explicit final sequence construction.  A
detailed comparison between the two constructions is left for future
work.

The use of distributive laws in enhanced (co)induction and % chktex 36
(co)iteration principles has been studied extensively, see, % chktex 36
e.g.,~\cite{Bartels04,Bartels03,LenisaPW00,UustaluVP01,Jacobs06,MiliusMS13}.
The notion of validity that we introduce in Section~\ref{sec:dl} comes
essentially from the work of Bartels~\cite{Bartels03} (where it is
phrased in terms of the algebra induced by a distributive law). It
slightly strengthens soundness in lattices, and also differs from the
notion of soundness in~\cite{bppr:acta:16}.  In contrast to the
above-mentioned approaches, our coinduction up-to principle for causal
algebras (Section~\ref{sec:causal-coinduction}) does not explicitly
refer to distributive laws.

Causality of stream functions is a well-established notion
(e.g.,~\cite{HansenKR16}).  To the best of our knowledge, the
generalisation to $\omega$-continuous functors in
Section~\ref{sec:causal}, and the construction in terms of the final
sequence, are new. The characterisation of causal operations in terms
of distributive laws generalises a known construction for
streams~\cite{HansenKR16} and automata~\cite{RotBR16} (modulo a
subtlety concerning syntax, see Section~\ref{sec:causal-alg-dl}). The
result that the contextual closure of any causal algebra lies below
the companion (and hence is sound) generalises a
soundness result of causal operations for
streams~\cite{EndrullisHB13} (the latter
also includes other up-to techniques, which we do not address here).

\section{Future work}%
\label{sec:fw}

As explained in Section~\ref{sec:pres-pow}, whether the finite
powerset functor has a companion remains open. This is important in
practice as this functor is used, e.g., to handle labelled transition
systems. This functor does not satisfy the main hypothesis of
Theorem~\ref{thm:companion-codensity}, but it could nevertheless be
the case that its codensity monad is its companion. If it exists,
one should understand its relationship with the family of
\emph{bounded companions} that one can obtain thanks to the
accessibility of the finite powerset functor~\cite{BPR17}. (Note that
the full powerset functor, which is not accessible, cannot have a
companion: this would entail existence of a final coalgebra.)

Another research direction consists in studying semantics of open
terms through the companion. Indeed, the present work shows that
causal operations on the final coalgebra can be presented as (plain)
distributive laws, to which the companion gives a canonical semantics
by finality. One should thus understand under which conditions the
usual notions of bisimulations on open terms coincide with this final
semantics, or whether generic notions of bisimulations can be designed
to capture it.

We discussed in Section~\ref{ssec:fibrations} how to combine the
present results with those from~\cite{bppr:acta:16}, where we use
fibrations to relate coinduction at the level of systems (e.g.,
defining streams corecursively), and coinduction at the level of
predicates and relations (e.g., proving equalities between streams coinductively, using
bisimulations). We should still understand precisely the final
sequence computation we use in the present work in that fibrational
setting (the work of Hasuo et al.\ on final sequences in a fibration~\cite{HasuoCKJ13}
may be a good starting point). In particular, we would like to understand the connection
between the companion of the lifted functor and the companions of the
induced functors on the fibres. This would make it possible to handle
other coinductive predicates than plain behavioural equivalence as in
Theorem~\ref{thm:causal-below-t}, and to understand the relationship
between the lattice-theoretic and coalgebraic companions.

\bibliographystyle{alpha}
\bibliography{companion-arxiv-rev}

\end{document}